\providecommand{\tabularnewline}{\\}
\theoremstyle{remark}
\newtheorem{rem}{\protect\remarkname}
\theoremstyle{plain}
\newtheorem{lem}{\protect\lemmaname}
\theoremstyle{definition}
\newtheorem{defn}{\protect\definitionname}
\theoremstyle{plain}
\newtheorem{thm}{\protect\theoremname}
\theoremstyle{plain}
\newtheorem{prop}{\protect\propositionname}
\providecommand{\U}[1]{\protect\rule{.1in}{.1in}}
\providecommand{\definitionname}{Definition}
\providecommand{\lemmaname}{Lemma}
\providecommand{\propositionname}{Proposition}
\providecommand{\remarkname}{Remark}
\providecommand{\theoremname}{Theorem}
\begin{document}
\title{Berk-Nash Equilibrium: A Framework for Modeling Agents with Misspecified
Models\thanks{We thank Vladimir Asriyan, Pierpaolo Battigalli, Larry Blume, Aaron
Bodoh-Creed, Sylvain Chassang, Emilio Espino, Erik Eyster, Drew Fudenberg,
Yuriy Gorodnichenko, Stephan Lauermann, Natalia Lazzati, Kristóf Madarász,
Matthew Rabin, Ariel Rubinstein, Joel Sobel, Jörg Stoye, several seminar
participants, and especially a co-editor and four anonymous referees
for very helpful comments. Esponda: Olin Business School, Washington
University in St. Louis, 1 Brookings Drive, Campus Box 1133, St. Louis,
MO 63130, iesponda@wustl.edu; Pouzo: Department of Economics, UC Berkeley,
530-1 Evans Hall \#3880, Berkeley, CA 94720, dpouzo@econ.berkeley.edu.} \bigskip{}
}
\author{%
\begin{tabular}{cc}
Ignacio Esponda~~~~~~~~  & ~~~~~~~~Demian Pouzo\tabularnewline
(WUSTL)~~~~~~~~ & ~~~~~~~~(UC Berkeley)\tabularnewline
\end{tabular}}
\maketitle
\begin{abstract}
We develop an equilibrium framework that relaxes the standard assumption
that people have a correctly-specified view of their environment.
Each player is characterized by a (possibly misspecified) subjective
model, which describes the set of feasible beliefs over payoff-relevant
consequences as a function of actions. We introduce the notion of
a Berk-Nash equilibrium: Each player follows a strategy that is optimal
given her belief, and her belief is restricted to be the best fit
among the set of beliefs she considers possible. The notion of best
fit is formalized in terms of minimizing the Kullback-Leibler divergence,
which is endogenous and depends on the equilibrium strategy profile.
Standard solution concepts such as Nash equilibrium and self-confirming
equilibrium constitute special cases where players have correctly-specified
models. We provide a learning foundation for Berk-Nash equilibrium
by extending and combining results from the statistics literature
on misspecified learning and the economics literature on learning
in games.
\end{abstract}
\bigskip{}

\thispagestyle{empty}

\newpage{}

\setcounter{page}{1}

\section{Introduction}

Most economists recognize that the simplifying assumptions underlying
their models are often wrong. But, despite recognizing that models
are likely to be misspecified, the standard approach (with exceptions
noted below) assumes that economic agents have a correctly specified
view of their environment. We present an equilibrium framework that
relaxes this standard assumption and allows the modeler to postulate
that economic agents have a subjective and possibly incorrect view
of their world.

An objective game represents the true environment faced by the agent
(or players, in the case of several interacting agents). Payoff relevant
states and privately observed signals are drawn from an objective
probability distribution. Each player observes her own private signal
and then players simultaneously choose actions. The action profile
and the realized state determine consequences, and consequences determine
payoffs.

In addition, each player has a subjective model representing her own
view of the environment. Formally, a subjective model is a set of
probability distributions over own consequences as a function of a
player's own action and information. Crucially, we allow the subjective
model of one or more players to be misspecified, which roughly means
that the set of subjective distributions does not include the true,
objective distribution. For example, a consumer might perceive a nonlinear
price schedule to be linear and, therefore, respond to average, not
marginal, prices. Or traders might not realize that the value of trade
is partly determined by the terms of trade.

A \emph{Berk-Nash equilibrium} is a strategy profile such that, for
each player, there exists a belief with support in her subjective
model satisfying two conditions. First, the strategy is optimal given
the belief. Second, the belief puts probability one on the set of
subjective distributions over consequences that are ``closest''
to the true distribution, where the true distribution is determined
by the objective game and the actual strategy profile. The notion
of ``closest'' is given by a weighted version of the Kullback-Leibler
divergence, also known as relative entropy.

Berk-Nash equilibrium includes standard and boundedly rational solution
concepts in a common framework, such as Nash, self-confirming (e.g.,
\citet{battigalli1987compartamento}, \citet{fudenberg1993self},
\citet{dekel2004learning}), fully cursed (\citealp{eyster2005cursed}),
and analogy-based expectation equilibrium (\citet{jehiel2005analogy},
\citet{jehiel2008revisiting}). For example, suppose that the game
is \emph{correctly specified} (i.e., the support of each player's
prior contains the true distribution) and that the game is \emph{strongly
identified} (i.e., there is a unique distribution---whether or not
correct---that matches the observed data). Then Berk-Nash equilibrium
is equivalent to Nash equilibrium. If the strong identification assumption
is dropped, then Berk-Nash is a self-confirming equilibrium. In addition
to unifying previous work, our framework provides a systematic approach
for extending previous cases and exploring new types of misspecifications.

We provide a foundation for Berk-Nash equilibrium (and the use of
Kullback-Leibler divergence as a measure of ``distance'') by studying
a dynamic setup with a fixed number of players playing the objective
game repeatedly. Each player believes that the environment is stationary
and starts with a prior over her subjective model. In each period,
players use the observed consequences to update their beliefs according
to Bayes' rule. The main objective is to characterize limiting behavior
when players behave optimally but learn with a possibly misspecified
subjective model.\footnote{In the case of multiple agents, the environment need not be stationary,
and so we are ignoring repeated game considerations where players
take into account how their actions affect others' future play. We
discuss the extension to a population model with a continuum of agents
in Section \ref{subsec:Discussion}.}

The main result is that, if players' behavior converges, then it converges
to a Berk-Nash equilibrium. A converse result, showing that we can
converge to any Berk-Nash equilibrium of the game for some initial
(non-doctrinaire) prior, does not hold. But we obtain a positive convergence
result by relaxing the assumption that players exactly optimize. For
any given Berk-Nash equilibrium, we show that convergence to that
equilibrium occurs if agents are myopic and make \emph{asymptotically
optimal }choices (i.e., optimization mistakes vanish with time).

There is a longstanding interest in studying the behavior of agents
who hold misspecified views of the world. Examples come from diverse
fields including industrial organization, mechanism design, information
economics, macroeconomics, and psychology and economics (e.g., \citet{Arrow-Green},
\citet{kirman75learning}, \citet{sobel1984non}, \citet{kagel1986winner},
\citet{nyarko1991learning}, \citet{sargent-book}, \citet{rabin2002}),
although there is often no explicit reference to misspecified learning.
Most of the literature, however, focuses on particular settings, and
there has been little progress in developing a unified framework.
Our treatment unifies both ``rational'' and ``boundedly rational''
approaches, thus emphasizing that modeling the behavior of misspecified
players does not constitute a large departure from the standard framework.

\citet{Arrow-Green} provide a general treatment and make a distinction
between objective and subjective games. Their framework, though, is
more restrictive than ours in terms of the types of misspecifications
that players are allowed to have. Moreover, they do not establish
existence or provide a learning foundation for equilibrium. Recently,
\citet{spiegler2014bayesian} introduced a framework that uses Bayesian
networks to analyze decision making under imperfect understanding
of correlation structures.\footnote{Some explanations for why players may have misspecified models include
the use of heuristics (\citealp{KahnemanTversky1973}), complexity
(\citealp{aragones2005fact}), the desire to avoid over-fitting the
data (\citet{al-najjar2009decision}, \citet{al-najjar2013coarse}),
and costly attention (\citealp{schwartzstein2009selective}).}

Our paper is also related to the bandit (e.g., \citet{rothschild1974two},
\citet{mclennan1984price}, \citet{easley1988controlling}) and self-confirming
equilibrium (SCE) literatures, which highlight that agents might optimally
end up with incorrect beliefs if experimentation is costly.\footnote{In the macroeconomics literature, the term SCE is sometimes used in
a broader sense to include cases where agents have misspecified models
(e.g., \citealp{sargent-book}).} We also allow beliefs to be incorrect due to insufficient feedback,
but our main contribution is to allow for misspecified learning. When
players have misspecified models, beliefs may be incorrect and endogenously
depend on own actions even if there is persistent experimentation;
thus, an equilibrium framework is needed to characterize steady-state
behavior even in single-agent settings.\footnote{Two extensions of SCE are also potentially applicable: restrictions
on beliefs based on introspection (e.g., \citealp{rubinstein1994rationalizable}),
and ambiguity aversion (\citealp{battigalli2012selfconfirming}).}

From a technical perspective, we extend and combine results from two
literatures. First, the idea that equilibrium is a result of a learning
process comes from the literature on learning in games. This literature
studies explicit learning models to justify Nash and SCE (e.g., \citet{fudenberg1988theory},
\citet{fudenberg1993learning}, \citet{fudenberg1995learning}, \citet{fudenberg1993steady},
\citet{kalai1993rational}).\footnote{See Fudenberg and Levine (1998, 2009)\nocite{fudenberg1998theory}\nocite{fudenberg2009learning}
for a survey of this literature.} We extend this literature by allowing players to learn with models
of the world that are misspecified even in steady state. 

Second, we rely on and contribute to the literature studying the limiting
behavior of Bayesian posteriors. The results from this literature
have been applied to decision problems with correctly specified agents
(e.g., \citealp{easley1988controlling}). In particular, an application
of the martingale convergence theorem implies that beliefs converge
almost surely under the agent's subjective prior. This result, however,
does not guarantee convergence of beliefs according to the true distribution
if the agent has a misspecified model and the support of her prior
does not include the true distribution. Thus, we take a different
route and follow the statistics literature on misspecified learning.
This literature characterizes limiting beliefs in terms of the Kullback-Leibler
divergence (e.g., \citet{berk1966limiting}, \citet{bunke1998asymptotic}).\footnote{\citet{white1982maximum} shows that the Kullback-Leibler divergence
also characterizes the limiting behavior of the maximum quasi-likelihood
estimator.} We extend the statistics literature on misspecified learning to the
case where agents are not only passively learning about their environment
but are also actively learning by taking actions.

We present the framework and examples in Section \ref{sec:framework},
discuss the relationship to other solution concepts in Section \ref{sec:Relationship-to-other},
and provide a learning foundation in Section \ref{sec:foundation}.
We discuss assumptions and extension in Section \ref{subsec:Discussion}.

\section{\label{sec:framework}The framework}

\subsection{\label{subsec:The-environment}The environment}

A (simultaneous-move) \textbf{game}\emph{ }$\mathcal{G}=<\mathcal{O},\mathcal{Q}>$
is composed of a (simultaneous-move) objective game $\mathcal{O}$
and a subjective model $\mathcal{Q}$.

\emph{$\textsc{Objective game}$.} A\emph{ }(simultaneous-move)\emph{
}\textbf{objective game} is a tuple 
\[
\mathcal{O}=\left\langle I,\Omega,\mathbb{S},p,\mathbb{X},\mathbb{Y},f,\pi\right\rangle ,
\]
where: $I$ is the set of players; $\Omega$ is the set of payoff-relevant
states; $\mathbb{S}=\times_{i\in I}\mathbb{S}^{i}$ is the set of
profiles of signals, where $\mathbb{S}^{i}$ is the set of signals
of player $i$; $p$ is a probability distribution over $\Omega\times\mathbb{S}$,
and, for simplicity, it is assumed to have marginals with full support;
we use standard notation to denote marginal and conditional distributions,
e.g., $p_{\Omega\mid S^{i}}(\cdot\mid s^{i})$ denotes the conditional
distribution over $\Omega$ given $S^{i}=s^{i}$; $\mathbb{X}=\times_{i\in I}\mathbb{X}^{i}$
is a set of profiles of actions, where $\mathbb{X}^{i}$ is the set
of actions of player $i$; $\mathbb{Y}=\times_{i\in I}\mathbb{Y}^{i}$
is a set of profiles of (observable) consequences, where $\mathbb{Y}^{i}$
is the set of consequences of player $i$; $f=(f^{i})_{i\in I}$ is
a profile of feedback or consequence functions, where $f^{i}:\mathbb{X}\times\Omega\rightarrow\mathbb{Y}^{i}$
maps outcomes in $\Omega\times\mathbb{\mathbb{X}}$ into consequences
of player $i$; and $\pi=(\pi^{i})_{i\in I}$, where $\pi^{i}:\mathbb{X}^{i}\times\mathbb{Y}^{i}\rightarrow\mathbb{R}$
is the payoff function of player $i$.\footnote{The concept of a feedback function is borrowed from the SCE literature.
Also, while it is redundant to have $\pi^{i}$ depend on $x^{i}$,
it simplifies the notation in applications.} For simplicity, we prove the results for the case where all of the
above sets are finite.\footnote{In the working paper version (\citealp{EP14arxiv}), we provide technical
conditions under which the results extend to nonfinite $\Omega$ and
$\mathbb{Y}$.}

The timing of the objective game is as follows: First, a state and
a profile of signals are drawn according to $p$. Second, each player
privately observes her own signal. Third, players simultaneously choose
actions. Finally, each player observes her consequence and obtains
a payoff. We implicitly assume that players observe at least their
own actions and payoffs.\footnote{See Online Appendix \ref{sec:Lack-of-payoff} for the case where players
do not observe own payoffs.}

A strategy of player $i$ is a mapping $\sigma^{i}:\mathbb{S}^{i}\rightarrow\Delta(\mathbb{X}^{i})$.
The probability that player $i$ chooses action $x^{i}$ after observing
signal $s^{i}$ is denoted by $\sigma^{i}(x^{i}\mid s^{i})$. A strategy
profile is a vector of strategies $\sigma=(\sigma^{i})_{i\in I}$;
let $\Sigma$ denote the space of all strategy profiles.

Fix an objective game. For each strategy profile $\sigma$, there
is an \textbf{objective distribution} over player $i$'s consequences,
$Q_{\sigma}^{i}:\mathbb{S}^{i}\times\mathbb{X}^{i}\rightarrow\Delta(\mathbb{Y}^{i})$,
where 
\begin{equation}
Q_{\sigma}^{i}(y^{i}\mid s^{i},x^{i})=\sum_{\left\{ (\omega,x^{-i}):f^{i}(x^{i},x^{-i},\omega)=y^{i}\right\} }\sum_{s^{-i}}\prod_{j\neq i}\sigma^{j}(x^{j}\mid s^{j})p_{\Omega\times S^{-i}\mid S^{i}}(\omega,s^{-i}\mid s^{i}),\label{eq:Q_sigma-1}
\end{equation}
for all $(s^{i},x^{i},y^{i})\in\mathbb{S}^{i}\times\mathbb{X}^{i}\times\mathbb{Y}^{i}$.\footnote{As usual, the superscript $-i$ denotes a profile where the $i$'th
component is excluded} The objective distribution represents the true distribution over
consequences, conditional on a player's own action and signal, given
the objective game and a strategy profile followed by the players.

\emph{$\textsc{Subjective model}$}. The subjective model represents
the set of distributions over consequences that players consider possible
a priori. For a fixed objective game, a \textbf{subjective model}
is a tuple 
\[
\mathcal{Q}=\left\langle \Theta,\left(Q_{\theta}\right){}_{\theta\in\Theta}\right\rangle ,
\]
where $\Theta=\times_{i\in I}\Theta^{i}$ and $\Theta^{i}$ is player
$i$'s parameter set; and $Q_{\theta}=(Q_{\theta^{i}}^{i})_{i\in I}$,
where $Q_{\theta^{i}}^{i}:\mathbb{S}^{i}\times\mathbb{X}^{i}\rightarrow\Delta(\mathbb{Y}^{i})$
is the conditional distribution over player $i$'s consequences parameterized
by $\theta^{i}\in\Theta^{i}$; we denote the conditional distribution
by $Q_{\theta^{i}}(\cdot\mid s^{i},x^{i})$.\footnote{For simplicity, we assume that players know the distribution over
own signals.}

While the objective game represents the true environment, the subjective
model represents the players' perception of their environment. This
separation between objective and subjective models is crucial in this
paper.
\begin{rem}
\label{rem:subjective_special}A special case of a subjective model
is one where each player understands the objective game being played
but is uncertain about the distribution over states, the consequence
function, and (in the case of multiple players) the strategies of
other players. In this special case, player $i$'s uncertainty about
$p$, $f^{i}$, and $\sigma^{-i}$ can be described by a parametric
model $p_{\theta^{i}},$ $f_{\theta^{i}}^{i}$, $\sigma_{\theta^{i}}^{-i}$
, where $\theta^{i}\in\Theta^{i}$. A subjective distribution $Q_{\theta^{i}}^{i}$
is then derived by replacing $p$, $f^{i}$, and $\sigma^{-i}$ with
$p_{\theta^{i}},$ $f_{\theta^{i}}^{i}$, $\sigma_{\theta^{i}}^{-i}$
in equation (\ref{eq:Q_sigma-1}).\footnote{In this case, a player understands that other players mix independently
but, due to uncertainty over the parameter $\theta^{i}$ that indexes
$\sigma_{\theta^{i}}^{-i}=(\sigma_{\theta^{i}}^{j})_{j\neq i}$, she
may have correlated beliefs about her opponents' strategies, as in
\citet{fudenberg1993self}.} $\square$
\end{rem}
By defining $Q_{\theta^{i}}^{i}$ as a primitive, we stress two points.
First, this object is sufficient to characterize behavior. Second,
working with general subjective distributions allows for more general
types of misspecifications, where players do not even have to understand
the structural elements that determine their payoff relevant consequences.

We maintain the following assumptions about the subjective model.\bigskip{}

\textbf{Assumption 1.} For all $i\in I$: (i) $\Theta^{i}$ is a compact
subset of an Euclidean space, (ii) $Q_{\theta^{i}}^{i}(y^{i}\mid s^{i},x^{i})$
is continuous as a function of $\theta^{i}\in\Theta^{i}$ for all
$(y^{i},s^{i},x^{i})\in\mathbb{Y}^{i}\times\mathbb{S}^{i}\times\mathbb{X}^{i}$,
(iii) For all $\theta^{i}\in\Theta^{i}$, there exists a sequence
$(\theta_{n}^{i})_{n}$ in $\Theta^{i}$ such that $\lim_{n\rightarrow\infty}\theta_{n}^{i}=\theta^{i}$
and such that, for all $n$, $Q_{\theta_{n}^{i}}^{i}(y^{i}\mid s^{i},x^{i})>0$
for all $(s^{i},x^{i})\in\mathbb{S}^{i}\times\mathbb{X}^{i}$, $y^{i}\in f^{i}(x^{i},\mathbb{X}^{-i},\omega)$,
and $\omega\in supp(p_{\Omega\mid S^{i}}(\cdot\mid s^{i}))$.

\bigskip{}

Conditions (i) and (ii) are the standard conditions used to define
a \emph{parametric} model in statistics (e.g., \citet{bickel1993efficient}).
Condition (iii) plays two roles. First, it guarantees that there exists
at least one parameter value that attaches positive probability to
every feasible observation. In particular, it rules out what can be
viewed as a stark misspecification in which every element of the subjective
model attaches zero probability to an event that occurs with positive
true probability. Second, it imposes a ``richness'' condition on
the subjective model: If a feasible event is deemed impossible by
some parameter value, then that parameter value is not isolated in
the sense that there are nearby parameter values that consider every
feasible event to be possible. In Section \ref{subsec:Discussion},
we show that equilibrium may fail to exist and steady-state behavior
need not be characterized by equilibrium without this assumption.

\subsection{Examples\label{subsec:Examples}}

We illustrate the environment by presenting several examples that
had previously not been integrated into a common framework.\footnote{\citet{nyarko1991learning} studies a special case of Example 2.1
and shows that a steady state does not exist in pure strategies; \citet{sobel1984non}
considers a misspecification similar to Example 2.2; Tversky and Kahneman's
(1973)\nocite{KahnemanTversky1973} story motivates Example 2.3; Sargent
(1999, Chapter 7)\nocite{sargent-book} studies Example 2.4; and \citet{kagel1986winner},
\citet{eyster2005cursed}, \citet{jehiel2008revisiting}, and \citet{esponda2008behavioral}
study Example 2.5. See \citet{EP14arxiv} for additional examples.} In examples with a single agent, we drop the $i$ subscript from
the notation.

\bigskip{}

\textbf{Example 2.1. }\textbf{\emph{Monopolist with unknown demand}}\textbf{.
\label{Monopolist}} A monopolist faces demand $y=f(x,\omega)=\phi_{0}(x)+\omega$,
where $x\in\mathbb{X}$ is the price chosen by the monopolist and
$\omega$ is a mean-zero shock with distribution $p\in\Delta(\Omega)$.
The monopolist observes sales $y$, but not the shock. The monopolist
does not observe any signal, and so we omit signals from the notation.
The monopolist's payoff is $\pi(x,y)=xy$ (i.e., there are no costs).
The monopolist's uncertainty about $p$ and $f$ is described by a
parametric model $f_{\theta},p_{\theta}$, where $y=f_{\theta}(x,\omega)=a-bx+\omega$
is the demand function, $\theta=(a,b)\in\Theta$ is a parameter vector,
 and $\omega\sim N(0,1)$ (i.e., $p_{\theta}$ is a standard normal
distribution for all $\theta\in\Theta$). In particular, this example
corresponds to the special case discussed in Remark \ref{rem:subjective_special},
and $Q_{\theta}(\cdot\mid x)$ is a normal density with mean $a-bx$
and unit variance. $\square$\bigskip{}

\textbf{Example 2.2. }\textbf{\emph{Nonlinear taxation}}\textbf{.}
\textbf{\label{Nonlinear}} An agent chooses effort $x\in\mathbb{X}$
at cost $c(x)$ and obtains income $z=x+\omega$, where $\omega$
is a zero-mean shock with distribution $p\in\Delta(\Omega)$. The
agent pays taxes $t=\tau(z)$, where $\tau(\cdot)$ is a nonlinear
tax schedule. The agent does not observe any signal, and so we omit
them. The agent observes $y=(z,t)$ and obtains payoff $\pi(x,z,t)=z-t-c(x)$.\footnote{Formally, $f(x,\omega)=(z(x,\omega),t(x,\omega))$, where $z(x,\omega)=x+\omega$
and $t(x,\omega)=\tau(x+\omega)$.} She understands how effort translates into income but fails to realize
that the marginal tax rate depends on income. We compare two models
that capture this misspecification. In model A, the agent believes
in a random coefficient model, $t=(\theta^{A}+\varepsilon)z$, in
which the marginal and average tax rates are both equal to $\theta^{A}+\varepsilon$,
where $\theta^{A}\in\Theta^{A}=\mathbb{R}$. In model B, the agent
believes that $t=\theta_{1}^{B}+\theta_{2}^{B}z+\varepsilon$, where
$\theta_{2}^{B}$ is the constant marginal tax rate and $\theta^{B}=(\theta_{1}^{B},\theta_{2}^{B})\in\Theta^{B}=\mathbb{R}^{2}$.\footnote{It is not necessary to assume that $\Theta^{A}$ and $\Theta^{B}$
are compact for an equilibrium to exist; the same comment applies
to Examples 2.3 and 2.4} In both models, $\varepsilon\sim N(0,1)$ measures uncertain aspects
of the schedule (e.g., variations in tax rates or credits). Thus,
$Q_{\theta}^{j}(t,z\mid x)=Q_{\theta}^{j}(t\mid z)p(z-x)$, where
$Q_{\theta}^{j}(\cdot\mid z)$ is a normal density with mean $\theta^{A}z$
and variance $z^{2}$ in model $j=A$ and mean $\theta_{1}^{B}+\theta_{2}^{B}z$
and unit variance in model $j=B$. $\square$ \bigskip{}

\textbf{Example 2.3. }\textbf{\emph{Regression to the mean}}\textbf{.}
\textbf{\label{Regression}} An instructor observes the initial performance
$s$ of a student and decides to praise or criticize him, $x\in\{C,P\}$.
The student then performs again and the instructor observes his final
performance, $s'$. The truth is that performances $y=(s,s')$ are
independent, standard normal random variables. The instructor's payoff
is $\pi(x,s,s')=s'-c(x,s)$, where $c(x,s)=\kappa\left|s\right|>0$
if either $s>0,x=C$ or $s<0,x=P$, and, in all other cases, $c(x,s)=0$.\footnote{Formally, $\omega=(s,s')$, $p$ is the product of standard normal
distributions, and $y=f(x,\omega)=\omega$.} The function $c$ represents a (reputation) cost from lying (i.e.,
criticizing above-average performances or praising below-average ones)
that increases in the size of the lie. Because the instructor cannot
influence performance, it is optimal to praise if $s>0$ and to criticize
if $s<0$. The instructor, however, does not admit the possibility
of regression to the mean and believes that $s'=s+\theta_{x}+\varepsilon$,
where $\varepsilon\sim N(0,1)$, and $\theta=(\theta_{C},\theta_{P})\in\Theta$
parameterizes her perceived influence on performance.\footnote{A model that allows for regression to the mean is $s'=\alpha s+\theta_{x}+\varepsilon$;
in this case, the agent would correctly learn that $\alpha=0$ and
$\theta_{x}=0$ for all $x$. \citet{rabin2010gambler} study a related
setup in which the agent believes that shocks are autoregressive when
in fact they are i.i.d.} Thus, letting $\bar{Q}_{\theta}(\cdot\mid x)$ be the a normal density
with mean $s+\theta_{x}$ and unit variance, it follows that $Q_{\theta}(\hat{s},s^{'}\mid s,x)=\bar{Q}_{\theta}(s'\mid s,x)$
if $\hat{s}=s$ and $0$ otherwise. $\square$

\bigskip{}

\textbf{Example 2.4. }\textbf{\emph{Monetary policy}}\textbf{.} \textbf{\label{Classical}}Two
players, the government (G) and the public (P), i.e., $I=\{G,P\}$,
choose monetary policy $x^{G}$ and inflation forecasts $x^{P}$,
respectively. They do not observe signals, and so we omit them. Inflation,
$e$, and unemployment, $U$, are determined by\footnote{Formally, $\omega=(\varepsilon_{e},\varepsilon_{U})$ and $y=(e,U)=f(x^{G},x^{P},\omega)$
is given by equations (\ref{eq:infl}) and (\ref{eq:unempl}).}
\begin{eqnarray}
e & = & x^{G}+\varepsilon_{e}\label{eq:infl}\\
U & = & u^{\ast}-\lambda(e-x^{P})+\varepsilon_{U},\label{eq:unempl}
\end{eqnarray}
where $u^{*}>0$, $\lambda\in(0,1)$ and $\omega=(\varepsilon_{e},\varepsilon_{U})\in\Omega=\mathbb{R}^{2}$
are shocks with a full support distribution $p\in\Delta(\Omega)$
and $Var(\varepsilon_{e})>0$. The public and the government observe
realized inflation and unemployment, but not the error terms. The
government's payoff is $\pi(x^{G},e,U)=-(U^{2}+e^{2})$. For simplicity,
we focus on the government's problem and assume that the public has
correct beliefs and chooses $x^{P}=x^{G}$. The government understands
how its policy $x^{G}$ affects inflation, but does not realize that
unemployment is affected by \emph{surprise} inflation:

\begin{equation}
\begin{array}{lll}
U & = & \theta_{1}-\theta_{2}e+\varepsilon_{U}.\end{array}\label{eq:(e,U)-1-1}
\end{equation}
The subjective model is parameterized by $\theta=(\theta_{1},\theta_{2})\in\Theta$,
and it follows that $Q_{\theta}(e,U\mid x^{G})$ is the density implied
by the equations (\ref{eq:infl}) and (\ref{eq:(e,U)-1-1}). $\square$

\bigskip{}

\textbf{Example 2.5. }\textbf{\emph{Trade with adverse selection.}}
\textbf{\label{Trade}} A buyer with valuation $v\in\mathbb{V}$ and
a seller submit a (bid) price $x\in\mathbb{X}$ and an ask price $a\in\mathbb{A}$,
respectively. The seller's ask price and the buyer's value are drawn
from $p\in\Delta(\mathbb{A}\times\mathbb{V})$, so that $\Omega=\mathbb{A}\times\mathbb{V}$
is the state space. Thus, the buyer is the only decision maker.\footnote{The typical story is that there is a population of sellers each of
whom follows the weakly dominant strategy of asking for her valuation;
thus, the ask price is a function of the seller's valuation and, if
buyer and seller valuations are correlated, then the ask price and
buyer valuation are also correlated.} After submitting a price, the buyer observes $y=\omega=(a,v)$ and
gets payoff $\pi(x,a,v)=v-x$ if $a\leq x$ and zero otherwise. In
other words, the buyer observes perfect feedback, gets $v-x$ if there
is trade, and $0$ otherwise. When making an offer, she does not know
her value or the seller's ask price. She also does not observe any
signals, and so we omit them. Finally, suppose that $A$ and $V$
are correlated but that the buyer believes they are independent. This
is captured by letting $Q_{\theta}=\theta$ and $\Theta=\Delta(\mathbb{A})\times\Delta(\mathbb{V})$.
$\square$

\subsection{Definition of equilibrium}

$\textsc{Distance to true model}.$ In equilibrium, we will require
players' beliefs to put probability one on the set of subjective distributions
over consequences that are ``closest'' to the objective distribution.
The following function, which we call the \textbf{weighted Kullback-Leibler
divergence}\emph{ }(wKLD)\emph{ }function of player\emph{ $i$}, is
a weighted version of the standard Kullback-Leibler divergence in
statistics (\citealp{KullbackLeibler1951}). It represents a ``distance''
between the objective distribution over $i$'s consequences given
a strategy profile $\sigma\in\Sigma$ and the distribution as parameterized
by $\theta^{i}\in\Theta^{i}$:\footnote{The notation $E_{Q}$ denotes expectation with respect to the probability
distribution $Q$. Also, we use the convention that $-\ln0=\infty$
and $0\ln0=0$.} 
\begin{equation}
K^{i}(\sigma,\theta^{i})=\sum_{(s^{i},x^{i})\in\mathbb{S}^{i}\times\mathbb{X}^{i}}E_{Q_{\sigma}^{i}(\cdot\mid s^{i},x^{i})}\left[\ln\frac{Q_{\sigma}^{i}(Y^{i}\mid s^{i},x^{i})}{Q_{\theta^{i}}^{i}(Y^{i}\mid s^{i},x^{i})}\right]\sigma^{i}(x^{i}\mid s^{i})p_{S^{i}}(s^{i}).\label{eq:wKLD-1}
\end{equation}
The set of closest parameter values of player $i$ given $\sigma$
is the set 
\[
\Theta^{i}(\sigma)\equiv\arg\min_{\theta^{i}\in\Theta^{i}}K^{i}(\sigma,\theta^{i}).
\]
The interpretation is that $\Theta^{i}(\sigma)\subset\Theta^{i}$
is the set of parameter values that player $i$ can believe to be
possible after observing feedback consistent with strategy profile
$\sigma$.
\begin{rem}
We show in Section \ref{sec:foundation} that wKLD is the right notion
of distance in a learning model with Bayesian players. Here, we provide
an heuristic argument for a Bayesian agent (we drop $i$ subscripts
for clarity) with parameter set $\Theta=\{\theta_{1},\theta_{2}\}$
who observes data over $t$ periods, $(s_{\tau},x_{\tau},y_{\tau})_{\tau=0}^{t-1}$,
that comes from repeated play of the objective game under strategy
$\sigma$. Let $\rho_{0}=\mu_{0}(\theta_{2})/\mu_{0}(\theta_{1})$
denote the agent's ratio of priors. Applying Bayes' rule and simple
algebra, the posterior probability over $\theta_{1}$ after $t$ periods
is 
\begin{align*}
\mu_{t}(\theta_{1}) & =\left(1+\rho_{0}\Pi_{\tau=0}^{t-1}\frac{Q_{\theta_{2}}(y_{\tau}\mid s_{\tau},x_{\tau})}{Q_{\theta_{1}}(y_{\tau}\mid s_{\tau},x_{\tau})}\right)^{-1}=\left(1+\rho_{0}\Pi_{\tau=0}^{t-1}\frac{Q_{\theta_{2}}(y_{\tau}\mid s_{\tau},x_{\tau})/Q_{\sigma}(y_{\tau}\mid s_{\tau},x_{\tau})}{Q_{\theta_{1}}(y_{\tau}\mid s_{\tau},x_{\tau})/Q_{\sigma}(y_{\tau}\mid s_{\tau},x_{\tau})}\right)^{-1}\\
 & =\left(1+\rho_{0}\exp\left\{ -t\left(\frac{1}{t}\sum_{\tau=0}^{t-1}\ln\frac{Q_{\sigma}(y_{\tau}\mid s_{\tau},x_{\tau})}{Q_{\theta_{2}}(y_{\tau}\mid s_{\tau},x_{\tau})}-\frac{1}{t}\sum_{\tau=0}^{t-1}\ln\frac{Q_{\sigma}(y_{\tau}\mid s_{\tau},x_{\tau})}{Q_{\theta_{1}}(y_{\tau}\mid s_{\tau},x_{\tau})}\right)\right\} \right)^{-1}
\end{align*}
where the second equality follows by multiplying and dividing by $\Pi_{\tau=0}^{t-1}Q_{\sigma}(y_{\tau}\mid s_{\tau},x_{\tau})$.
By a law of large numbers argument and the fact that the true joint
distribution over $(s,x,y)$ is given by $Q_{\sigma}(y\mid x,s)\sigma(x\mid s)p_{S}(s)$,
the difference in the log-likelihood ratios converges to $K(\sigma,\theta_{2})-K(\sigma,\theta_{1})$.
Suppose that $K(\sigma,\theta_{1})>K(\sigma,\theta_{2})$. Then, for
sufficiently large $t$, the posterior belief $\mu_{t}(\theta_{1})$
is approximately equal to $1/(1+\rho_{0}\exp\left(-t\left(K(\sigma,\theta_{2})-K(\sigma,\theta_{1})\right)\right))$,
which converges to 0. Therefore, the posterior eventually assigns
zero probability to $\theta_{1}$. On the other hand, if $K(\sigma,\theta_{1})<K(\sigma,\theta_{2})$,
then the posterior eventually assigns zero probability to $\theta_{2}$.
Thus, the posterior eventually assigns zero probability to parameter
values that do not minimize $K(\sigma,\cdot)$. $\square$
\end{rem}
\smallskip{}

\begin{rem}
Because the wKLD function is weighted by a player's own strategy,
it places no restrictions on beliefs about outcomes that only arise
following out-of-equilibrium actions (beyond the restrictions imposed
by $\Theta$). $\square$
\end{rem}
The next result collects some useful properties of the wKLD function.

\bigskip{}

\begin{lem}
\label{lemma:Theta-1}(i) For all $i\in I$, $\theta^{i}\in\Theta^{i}$,
and $\sigma\in\Sigma$, $K^{i}(\sigma,\theta^{i})\geq0$, with equality
holding if and only if $Q_{\theta^{i}}(\cdot\mid s^{i},x^{i})=Q_{\sigma}^{i}(\cdot\mid s^{i},x^{i})$
for all $(s^{i},x^{i})$ such that $\sigma^{i}(x^{i}\mid s^{i})>0$.
(ii) For all $i\in I$, $\Theta^{i}(\cdot)$ is nonempty, upper hemicontinuous,
and compact valued.
\end{lem}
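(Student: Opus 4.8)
The plan is to treat the two parts separately, and for both to exploit the fact that, after grouping terms by $(s^{i},x^{i})$, the wKLD is a nonnegatively weighted sum of ordinary Kullback-Leibler divergences,
\[
K^{i}(\sigma,\theta^{i})=\sum_{(s^{i},x^{i})}\sigma^{i}(x^{i}\mid s^{i})\,p_{S^{i}}(s^{i})\,D_{KL}\!\left(Q_{\sigma}^{i}(\cdot\mid s^{i},x^{i})\,\big\|\,Q_{\theta^{i}}^{i}(\cdot\mid s^{i},x^{i})\right),
\]
where $D_{KL}(P\|R)=\sum_{y}P(y)\ln(P(y)/R(y))$ and the weights $\sigma^{i}(x^{i}\mid s^{i})p_{S^{i}}(s^{i})$ are nonnegative. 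For part (i) I would first establish $D_{KL}(P\|R)\ge 0$ by applying Jensen's inequality to the concave function $\ln$ (Gibbs' inequality), with equality if and only if $P=R$; the convention $-\ln 0=\infty$ keeps the bound valid when $R$ lacks full support. Nonnegativity of $K^{i}$ is then immediate, since it is a nonnegative combination of nonnegative terms. For the equality statement, note that $p_{S^{i}}(s^{i})>0$ by the full-support assumption on the marginals of $p$, so a weight vanishes exactly when $\sigma^{i}(x^{i}\mid s^{i})=0$; hence $K^{i}(\sigma,\theta^{i})=0$ forces $D_{KL}=0$, and therefore $Q_{\theta^{i}}^{i}(\cdot\mid s^{i},x^{i})=Q_{\sigma}^{i}(\cdot\mid s^{i},x^{i})$, precisely on those $(s^{i},x^{i})$ with $\sigma^{i}(x^{i}\mid s^{i})>0$.

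For part (ii), the first step is a semicontinuity analysis of $K^{i}$. Because $Q_{\theta^{i}}^{i}$ is continuous in $\theta^{i}$ (Assumption 1(ii)) and $Q_{\sigma}^{i}$ is continuous in $\sigma$ (being a finite sum of products of the $\sigma^{j}$ and $p$), and because the map $(P,R)\mapsto D_{KL}(P\|R)$ is jointly lower semicontinuous on $\Delta(\mathbb{Y}^{i})\times\Delta(\mathbb{Y}^{i})$ (extended-real valued, jumping to $+\infty$ when $\mathrm{supp}\,P\not\subseteq\mathrm{supp}\,R$), each summand is jointly lower semicontinuous in $(\sigma,\theta^{i})$; multiplying by the nonnegative continuous weights and summing preserves this, so $K^{i}$ is jointly lower semicontinuous. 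Nonemptiness and compact-valuedness of $\Theta^{i}(\sigma)$ then follow from the Weierstrass theorem for lower semicontinuous functions on the compact set $\Theta^{i}$ (Assumption 1(i)): the minimum is attained, and it is finite because Assumption 1(iii) supplies parameters assigning positive probability to every feasible consequence, along which $K^{i}$ is finite; the argmin set is a closed sublevel set inside a compact set, hence compact.

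The main obstacle is upper hemicontinuity, and here I expect the standard Berge Maximum Theorem to fail to apply directly, precisely because $K^{i}$ is not jointly continuous: as $\sigma_{n}\to\sigma$ the support of $Q_{\sigma_{n}}^{i}$ can be strictly larger than that of $Q_{\sigma}^{i}$, so $K^{i}(\cdot,\theta^{i})$ may jump down to a finite value in the limit, defeating upper semicontinuity. I would therefore prove upper hemicontinuity through the closed-graph characterization, which suffices since the range $\Theta^{i}$ is compact. Take $\sigma_{n}\to\sigma$ and $\theta_{n}^{i}\in\Theta^{i}(\sigma_{n})$ with $\theta_{n}^{i}\to\theta^{i}$ (extracting a convergent subsequence by compactness), and fix an arbitrary competitor $\tilde{\theta}^{i}\in\Theta^{i}$; the goal is $K^{i}(\sigma,\theta^{i})\le K^{i}(\sigma,\tilde{\theta}^{i})$. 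Joint lower semicontinuity gives $K^{i}(\sigma,\theta^{i})\le\liminf_{n}K^{i}(\sigma_{n},\theta_{n}^{i})$, while optimality of $\theta_{n}^{i}$ gives $K^{i}(\sigma_{n},\theta_{n}^{i})\le K^{i}(\sigma_{n},\tilde{\theta}_{m}^{i})$ for any fixed element $\tilde{\theta}_{m}^{i}$ of the Assumption 1(iii) sequence approximating $\tilde{\theta}^{i}$, so that $\liminf_{n}K^{i}(\sigma_{n},\theta_{n}^{i})\le\liminf_{n}K^{i}(\sigma_{n},\tilde{\theta}_{m}^{i})$. The point of routing through $\tilde{\theta}_{m}^{i}$ is that each such parameter assigns positive probability to all feasible consequences, so $K^{i}(\cdot,\tilde{\theta}_{m}^{i})$ is genuinely continuous in $\sigma$ and $\liminf_{n}K^{i}(\sigma_{n},\tilde{\theta}_{m}^{i})=K^{i}(\sigma,\tilde{\theta}_{m}^{i})$; chaining the inequalities yields $K^{i}(\sigma,\theta^{i})\le K^{i}(\sigma,\tilde{\theta}_{m}^{i})$ for every $m$. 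Finally I would let $m\to\infty$: when $K^{i}(\sigma,\tilde{\theta}^{i})<\infty$ the support of $Q_{\sigma}^{i}$ on played actions is contained in that of $Q_{\tilde{\theta}^{i}}^{i}$, so $K^{i}(\sigma,\tilde{\theta}_{m}^{i})\to K^{i}(\sigma,\tilde{\theta}^{i})$ by continuity of $Q^{i}$ in the parameter, and the case $K^{i}(\sigma,\tilde{\theta}^{i})=\infty$ is trivial. Thus $\theta^{i}\in\Theta^{i}(\sigma)$, the graph is closed, and upper hemicontinuity follows. The delicate point throughout is that Assumption 1(iii) is exactly what rescues continuity: it lets me replace any parameter by nearby ones under which no feasible observation receives zero probability, converting the problematic $+\infty$ discontinuities into limits of finite, continuous quantities.
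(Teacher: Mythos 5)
Your proof is correct and follows essentially the same route as the paper's: part (i) is the same Jensen/Gibbs argument, and part (ii) rests on the same three ingredients the paper isolates in its Claim A --- joint lower semicontinuity of $K^{i}$, continuity of $\sigma\mapsto K^{i}(\sigma,\theta^{i})$ along parameters whose $Q_{\theta^{i}}^{i}$ is strictly positive on all feasible outcomes, and the Assumption 1(iii) approximating sequences that make such parameters dense. The only difference is organizational (a direct closed-graph chain of inequalities versus the paper's proof by contradiction with explicit $\varepsilon$'s), which does not change the substance.
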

\begin{proof}
See the Appendix.
\end{proof}
\bigskip{}

The upper-hemicontinuity of $\Theta^{i}(\cdot)$ would follow from
the Theorem of the Maximum had we assumed $Q_{\theta^{i}}^{i}$ to
be positive for all feasible events and $\theta^{i}\in\Theta^{i}$,
since the wKLD function would then be finite and continuous. But this
assumption may be strong in some cases.\footnote{For example, it rules out cases where a player believes others follow
pure strategies.} Assumption 1(iii) weakens this assumption by requiring that it holds
for a dense subset of $\Theta$, and still guarantees that $\Theta^{i}(\cdot)$
is upper hemicontinuous.

$\textsc{Optimality.}$ In equilibrium, we will require each player
to choose a strategy that is optimal given her beliefs. A strategy
$\sigma^{i}$ for player $i$ is \textbf{optimal} given\emph{ $\mu^{i}\in\Delta(\Theta^{i})$}
if $\sigma^{i}(x^{i}\mid s^{i})>0$ implies that 
\begin{equation}
x^{i}\in\arg\max_{\bar{x}^{i}\in\mathbb{X}^{i}}E_{\bar{Q}_{\mu^{i}}^{i}(\cdot\mid s^{i},\bar{x}^{i})}\left[\pi^{i}(\bar{x}^{i},Y^{i})\right]\label{eq:optimize-1}
\end{equation}
where $\bar{Q}_{\mu^{i}}^{i}(\cdot\mid s^{i},x^{i})=\int_{\Theta^{i}}Q_{\theta^{i}}^{i}(\cdot\mid s^{i},x^{i})\mu^{i}(d\theta^{i})$
is the distribution over consequences of player $i$, conditional
on $(s^{i},x^{i})\in\mathbb{S}^{i}\times\mathbb{X}^{i}$, induced
by\emph{ $\mu^{i}$}.

$\textsc{Definition of equilibrium.}$ We propose the following solution
concept.\bigskip{}

\begin{defn}
\label{def:equilibrium-1}A strategy profile $\sigma$ is a \textbf{Berk-Nash
equilibrium} of game\emph{ $\mathcal{G}$} if, for all players $i\in I$,
there exists $\mu^{i}\in\Delta(\Theta^{i})$ such that

(i) $\sigma^{i}$ is optimal given $\mu^{i}$, and 

(ii) $\mu^{i}\in\Delta(\Theta^{i}(\sigma))$, i.e., if $\hat{\theta}^{i}$
is in the support of $\mu^{i}$, then 
\[
\hat{\theta}^{i}\in\arg\min_{\theta^{i}\in\Theta^{i}}K^{i}(\sigma,\theta^{i}).
\]
\end{defn}
\bigskip{}

Definition \ref{def:equilibrium-1} places two restrictions on equilibrium
behavior: (i) optimization given beliefs, and (ii) endogenous restrictions
on beliefs. For comparison, note that the definition of Nash equilibrium
is identical to Definition \ref{def:equilibrium-1} except that condition
(ii) is replaced with the condition that players have correct beliefs,
i.e., $\bar{Q}_{\mu^{i}}^{i}=Q_{\sigma}^{i}$.

$\textsc{existence of equilibrium.}$ The standard existence proof
of Nash equilibrium cannot be used here because the analogous version
of a best response correspondence is not necessarily convex valued.
To prove existence, we first perturb payoffs and establish that equilibrium
exists in the perturbed game. We then consider a sequence of equilibria
of perturbed games, where perturbations go to zero, and establish
that the limit is a Berk-Nash equilibrium of the (unperturbed) game.\footnote{The idea of perturbations and the strategy of the existence proof
date back to \citet{harsanyi1973games}; \citet{selten1975reexamination}
and \citet{kreps1982sequential} also used these ideas to prove existence
of perfect and sequential equilibrium, respectively.} The nonstandard part of the proof is to prove existence of equilibrium
in the perturbed game. The perturbed best response correspondence
is still not necessarily convex valued. Our approach is to characterize
equilibrium as a fixed point of a \emph{belief correspondence} and
show that it satisfies the requirements of a generalized version of
Kakutani's fixed point theorem.\bigskip{}

\begin{thm}
\label{theo:Existence}Every game has at least one Berk-Nash equilibrium.
\end{thm}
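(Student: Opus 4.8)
The plan is to follow the two-stage strategy sketched above: first establish existence in a perturbed game in which each player's objective is made strictly concave in her own mixed strategy, and then pass to the limit as the perturbation vanishes. The fixed point is taken not on strategies but on \emph{beliefs}. Write $M=\prod_{i\in I}\Delta(\Theta^{i})$ for the space of belief profiles. Since each $\Theta^{i}$ is compact (Assumption~1(i)), each $\Delta(\Theta^{i})$ is convex and weak$^{*}$-compact, so $M$ is a nonempty compact convex subset of a locally convex Hausdorff topological vector space, and a generalized (Fan--Glicksberg) version of Kakutani's fixed point theorem is available.

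Fix $\varepsilon>0$ and perturb the objective in (\ref{eq:optimize-1}) by adding $\varepsilon$ times a strictly concave function of $\sigma^{i}(\cdot\mid s^{i})$, for instance the entropy $-\sum_{x^{i}}\sigma^{i}(x^{i}\mid s^{i})\ln\sigma^{i}(x^{i}\mid s^{i})$. Because the unperturbed objective is linear in $\sigma^{i}$, the perturbed problem has a \emph{unique}, fully mixed maximizer $\sigma^{i}_{\varepsilon}(\mu^{i})$, and since $\bar{Q}^{i}_{\mu^{i}}$ is weak$^{*}$-continuous in $\mu^{i}$, this maximizer depends continuously on $\mu^{i}$ (Berge's theorem with a singleton argmax). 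I would then define the belief correspondence $\Phi_{\varepsilon}:M\rightrightarrows M$ by $\Phi_{\varepsilon}(\mu)=\prod_{i\in I}\Delta(\Theta^{i}(\sigma_{\varepsilon}(\mu)))$, where $\sigma_{\varepsilon}(\mu)=(\sigma^{i}_{\varepsilon}(\mu^{i}))_{i}$. A fixed point $\mu\in\Phi_{\varepsilon}(\mu)$ is exactly a Berk-Nash equilibrium of the perturbed game: $\sigma_{\varepsilon}(\mu)$ is (perturbed-)optimal given $\mu$, and each $\mu^{i}$ is supported on $\Theta^{i}(\sigma_{\varepsilon}(\mu))$. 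The role of the perturbation is to earn convex-valuedness: once $\sigma_{\varepsilon}(\mu)$ is a single point rather than a set, $\Phi_{\varepsilon}(\mu)$ is a product of the convex sets $\Delta(\Theta^{i}(\sigma_{\varepsilon}(\mu)))$ and hence convex. This is precisely what fails for the naive strategy-based correspondence, whose value at $\sigma$ is a non-convex union over all beliefs admissible at $\sigma$.

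To apply the fixed point theorem I still need $\Phi_{\varepsilon}$ to have a closed graph; nonemptiness of its values is immediate from Lemma~\ref{lemma:Theta-1}(ii). Here I would combine continuity of $\mu\mapsto\sigma_{\varepsilon}(\mu)$ with the upper hemicontinuity and compact-valuedness of $\Theta^{i}(\cdot)$ from Lemma~\ref{lemma:Theta-1}(ii); the key intermediate claim is that $\sigma\mapsto\Delta(\Theta^{i}(\sigma))$ is upper hemicontinuous in the weak$^{*}$ topology, i.e.\ if $\sigma_{n}\to\sigma$, $\mu_{n}\in\Delta(\Theta^{i}(\sigma_{n}))$ and $\mu_{n}\to\mu$ weak$^{*}$, then $\mu$ is supported on $\Theta^{i}(\sigma)$. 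Fan--Glicksberg then delivers a fixed point $\mu_{\varepsilon}$ with associated perturbed equilibrium $\sigma_{\varepsilon}=\sigma_{\varepsilon}(\mu_{\varepsilon})$. Finally I would send $\varepsilon\to0$ along a sequence and, using compactness of $\Sigma\times M$, extract a convergent subsequence $(\sigma_{\varepsilon_{n}},\mu_{\varepsilon_{n}})\to(\sigma^{\ast},\mu^{\ast})$. Condition (ii) of Definition~\ref{def:equilibrium-1}, $\mu^{\ast i}\in\Delta(\Theta^{i}(\sigma^{\ast}))$, follows from the same weak$^{*}$ upper hemicontinuity. For optimality I would use the standard vanishing-perturbation argument: if $\sigma^{\ast i}(x^{i}\mid s^{i})>0$ while $x^{i}$ were strictly suboptimal given $\mu^{\ast i}$, then by continuity some $\bar{x}^{i}$ beats $x^{i}$ by a uniform margin for all large $n$, whence the logit weights force $\sigma^{i}_{\varepsilon_{n}}(x^{i}\mid s^{i})\to0$, a contradiction; thus $\sigma^{\ast}$ is a Berk-Nash equilibrium.

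I expect the main obstacle to be the step controlling $\Phi_{\varepsilon}$: establishing the weak$^{*}$ upper hemicontinuity of $\sigma\mapsto\Delta(\Theta^{i}(\sigma))$ and hence the closed graph, since this is exactly where Lemma~\ref{lemma:Theta-1}(ii) --- and, underneath it, Assumption~1(iii), which keeps $K^{i}$ well behaved even where some $Q^{i}_{\theta^{i}}$ vanish --- must be invoked. The conceptual crux, and the reason the textbook Nash existence proof does not transfer, is the realization that one must take the fixed point in belief space and use strict concavity to convert a non-convex-valued strategy correspondence into a convex-valued belief correspondence.
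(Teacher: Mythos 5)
Your proposal is correct and follows the same architecture as the paper's proof: take the fixed point in belief space $\times_{i}\Delta(\Theta^{i})$ rather than strategy space, perturb the game so that the best response to a belief is single-valued and continuous (making the belief correspondence $\mu\mapsto\times_{i}\Delta(\Theta^{i}(\sigma(\mu)))$ convex-valued), verify nonemptiness and the closed graph via Lemma \ref{lemma:Theta-1}(ii), apply Kakutani--Fan--Glicksberg, and then send the perturbation to zero, using upper hemicontinuity for condition (ii) of Definition \ref{def:equilibrium-1} and a uniform-margin continuity argument for optimality. The one substantive difference is the perturbation device: you regularize the objective with an $\varepsilon$-weighted entropy term, so the perturbed best response is the (deterministic, fully mixed) logit response and single-valuedness follows from strict concavity; the paper instead uses Harsanyi-style additive random payoff perturbations $\xi^{i}(x^{i})$ with an absolutely continuous distribution, so the argmax is single-valued only almost surely under $P_{\xi^{i}}$ and continuity of $\mu^{i}\mapsto\sigma^{i}(\mu^{i})$ requires the measure-zero-ties argument of Claim A(iv). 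Your version is arguably cleaner for proving the stated theorem, but the paper's choice is not incidental: Part 1 of its proof simultaneously establishes existence of Berk-Nash equilibrium in the \emph{perturbed} games of Section \ref{sub:Perturbed-game}, which is the object needed for the learning foundation (Theorems \ref{theo:Stability_implies_equilibrium} and \ref{Theo:converse-1}); an entropy regularization would not deliver that byproduct, since those results are stated for games perturbed by random payoff shocks rather than by a concavified objective.
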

\begin{proof}
See the Appendix.
\end{proof}

\subsection{Examples: Finding a Berk-Nash equilibrium}

\textbf{Example 2.1, continued from pg. \pageref{Monopolist}. }\textbf{\emph{Monopolist
with unknown demand}}\textbf{.} Let $\sigma=(\sigma_{x})_{x\in\mathbb{X}}$
denote a strategy, where $\sigma_{x}$ is the probability of choosing
price $x\in\mathbb{X}$. Because this is a single-agent problem, the
objective distribution does not depend on $\sigma$; hence, we denote
it by $Q_{0}(\cdot\mid x)$, which is a normal density with mean $\phi_{0}(x)$
and unit variance. Similarly, $Q_{\theta}(\cdot\mid x)$ is a normal
density with mean $\phi_{\theta}(x)=a-bx$ and unit variance. It follows
from equation (\ref{eq:wKLD-1}) that

\begin{align*}
K(\sigma,\theta) & =\sum_{x\in\mathbb{X}}\sigma_{x}\frac{1}{2}E_{Q_{0}(\cdot\mid x)}\left[\left(Y-\phi_{\theta}(x)\right)^{2}-\left(Y-\phi_{0}(x)\right)^{2}\right]=\sum_{x\in\mathbb{X}}\sigma_{x}\frac{1}{2}\left(\phi_{0}(x)-\phi_{\theta}(x)\right)^{2}.
\end{align*}

For concreteness, let $\mathbb{X}=\{2,10\}$, $\phi_{0}(2)=34$ and
$\phi_{0}(10)=2$, and $\Theta=[33,40]\times[3,3.5]$.\footnote{In particular, the deterministic part of the demand function can have
any functional form provided it passes through $(2,\phi_{o}(2))$
and $(10,\phi_{0}(10))$.} Let $\theta^{0}\in\mathbb{R}^{2}$ provide a perfect fit for demand,
i.e., $\phi_{\theta^{0}}(x)=\phi_{0}(x)$ for all $x\in\mathbb{X}$.
In this example, $\theta^{0}=(a^{0},b^{0})=(42,4)\notin\Theta$ and,
therefore, we say that the monopolist has a \emph{misspecified} model.
The dashed line in Figure \ref{fig:Monopolist} depicts optimal behavior:
the optimal price is 10 to the left, it is 2 to the right, and the
monopolist is indifferent for parameter values on the dashed line.

To solve for equilibrium, we first consider pure strategies. If $\sigma=(0,1)$
(i.e., the price is $x=10$), the first order conditions $\partial K(\sigma,\theta)/\partial a=\partial K(\sigma,\theta)/\partial b=0$
imply $\phi_{0}(10)=\phi_{\theta}(10)=a-b10$, and any $(a,b)\in\Theta$
on the segment $AB$ in Figure \ref{fig:Monopolist} minimizes $K(\sigma,\cdot)$.
These minimizers, however, lie to the right of the dashed line, where
it is \emph{not} optimal to set a price of 10. Thus, $\sigma=(0,1)$
is not an equilibrium. A similar argument establishes that $\sigma=(1,0)$
is not an equilibrium: If it were, the minimizer would be at $D$,
where it is in fact not optimal to choose a price of 2.

\begin{figure}
\begin{raggedright}
\hspace*{-1.75cm}\includegraphics[scale=0.6]{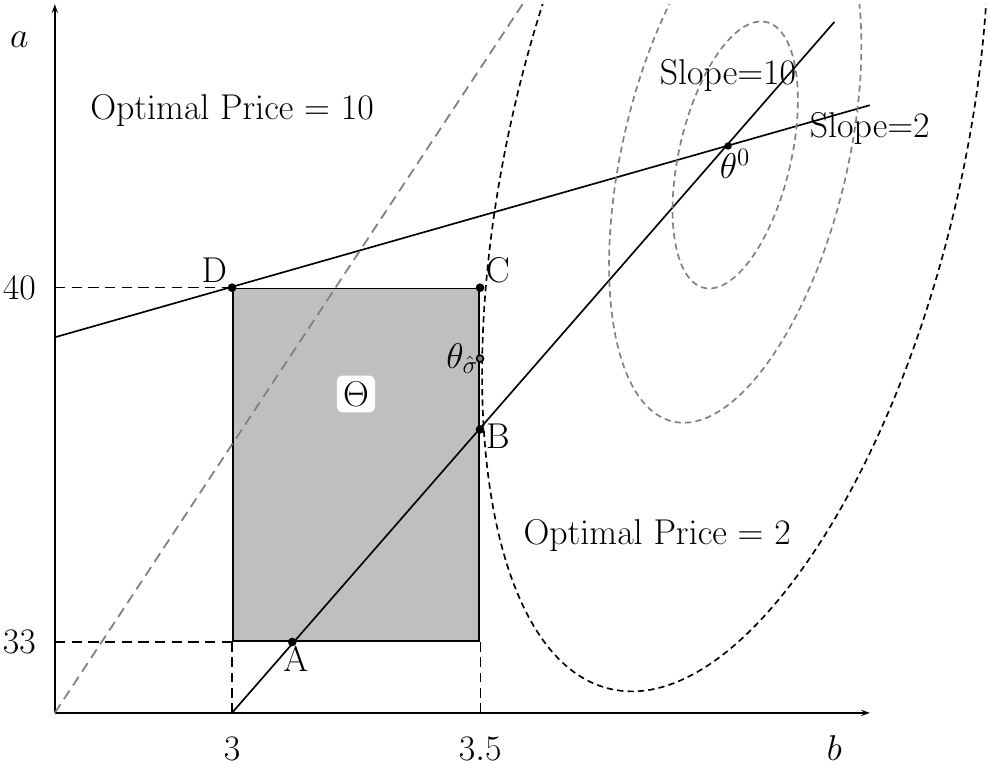}
\includegraphics[scale=0.6]{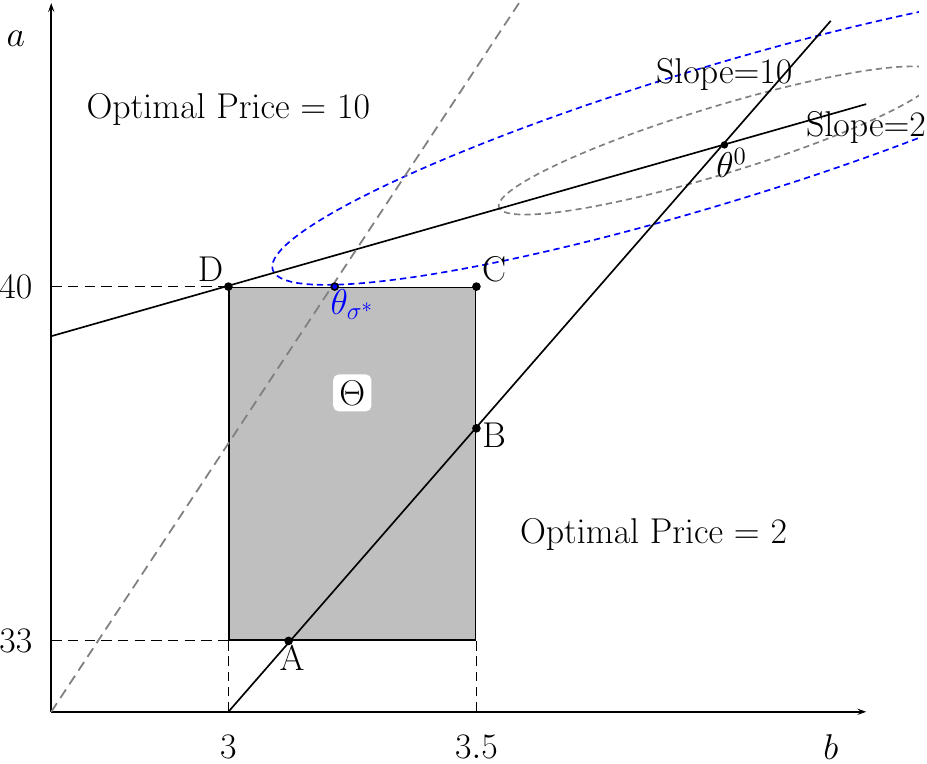}
\par\end{raggedright}
\caption{\label{fig:Monopolist}Monopolist with misspecified demand function.\protect \\
{\small{}Left panel: the parameter value that minimizes the wKLD function
given strategy $\hat{\sigma}$ is $\theta_{\hat{\sigma}}$. Right
panel: $\sigma^{*}$ is a Berk-Nash equilibrium ($\sigma^{*}$ is
optimal given $\theta_{\sigma^{*}}$---because $\theta_{\sigma^{*}}$
lies on the indifference line---and $\theta_{\sigma^{*}}$ minimizes
the wKLD function given $\sigma^{*}$).}}
\end{figure}
Finally, consider mixed strategies. Because both first order conditions
cannot hold simultaneously, the parameter value that minimizes $K(\sigma,\theta)$
lies on the boundary of $\Theta$. A bit of algebra shows that, for
any totally mixed $\sigma$, there is a unique minimizer $\theta_{\sigma}=(a_{\sigma},b_{\sigma})$
characterized as follows. If $\sigma_{2}\leq3/4$, the minimizer is
on the segment $BC$: $b_{\sigma}=3.5$ and $a_{\sigma}=4\sigma_{2}+37$
solves $\partial K(\sigma,\theta)/\partial a=0$. The left panel of
Figure \ref{fig:Monopolist} depicts an example where the unique minimizer
$\theta_{\hat{\sigma}}$ under strategy $\hat{\sigma}$ is given by
the tangency between the contour lines of $K(\hat{\sigma},\cdot)$
and the feasible set $\Theta$.\footnote{It can be shown that $K(\sigma,\theta)=\left(\theta-\theta^{0}\right)'M_{\sigma}\left(\theta-\theta^{0}\right)$,
where $M_{\sigma}$ is a weighting matrix that depends on $\sigma$.
In particular, the contour lines of $K(\sigma,\cdot)$ are ellipses.} If $\sigma_{2}\in[3/4,15/16]$, then $\theta_{\sigma}=C$ is the
northeast vertex of $\Theta$. Finally, if $\sigma_{2}>15/16$, the
minimizer is on the segment $DC$: $a_{\sigma}=40$ and $b_{\sigma}=(380-368\sigma_{2})/(100-96\sigma_{2})$
solves $\partial K(\sigma,\theta)/\partial b=0$.

Because the monopolist mixes, optimality requires that the equilibrium
belief $\theta_{\sigma}$ lies on the dashed line. The unique Berk-Nash
equilibrium is $\sigma^{*}=(35/36,1/36)$, and its supporting belief,
$\theta_{\sigma^{*}}=(40,10/3)$, is given by the intersection of
the dashed line and the segment $DC$, as depicted in the right panel
of Figure \ref{fig:Monopolist}. It is \emph{not} the case, however,
that the equilibrium belief about the mean of $Y$ is correct. Thus,
an approach that had focused on fitting the mean, rather than minimizing
$K$, would have led to the wrong conclusion.\footnote{The example also illustrates the importance of mixed strategies for
existence of Berk-Nash equilibrium, even in single-agent settings.
As an antecedent, \citet{esponda2011learning} argue that this is
the reason why mixed strategy equilibrium cannot be purified in a
voting application.} $\square$

\bigskip{}

\textbf{Example 2.2, continued from pg. \pageref{Nonlinear}. }\textbf{\emph{Nonlinear
}}\textbf{taxation.} For any pure strategy $x$ and parameter value
$\theta\in\Theta^{A}=\mathbb{R}$ (model A) or $\theta\in\Theta^{B}=\mathbb{R}^{2}$
(model B), the wKLD function $K^{j}(x,\theta)$ for model $j\in\{A,B\}$
equals 
\begin{align*}
E\Bigl[\ln\frac{Q(T\mid Z)p(Z-x)}{Q_{\theta}^{j}(T\mid Z)p(Z-x)}\mid X=x\Bigr]=\begin{cases}
-\frac{1}{2}E\left[\left(\tau(Z)/Z-\theta^{A}\right)^{2}\mid X=x\right]+C_{A} & \mbox{(model A)}\\
-\frac{1}{2}E\left[\left(\tau(Z)-\theta_{1}^{B}-\theta_{2}^{B}Z\right)^{2}\mid X=x\right]+C_{B} & \mbox{(model B)}
\end{cases}
\end{align*}
where $E$ denotes the true conditional expectation and $C_{A}$ and
$C_{B}$ are constants.

For model A, $\theta^{A}(x)=E\left[\tau(x+W)/(x+W)\right]$ is the
unique parameter that minimizes $K^{A}(x,\cdot)$.\footnote{We use $W$ to denote the random variable that takes on realizations
$\omega$.} Intuitively, the agent believes that the expected marginal tax rate
is equal to the true expected average tax rate. For model B, 
\begin{align*}
\theta_{2}^{B}(x) & =Cov(\tau(x+W),x+W)/Var(x+W)=E\left[\tau'(x+W)\right],
\end{align*}
where the second equality follows from Stein's lemma (\citet{stein1972}),
provided that $\tau$ is differentiable. Intuitively, the agent believes
that the marginal tax rate is constant and given by the true expected
marginal tax rate.\footnote{By linearity and normality, the minimizers of $K^{B}(x,\cdot)$ coincide
with the OLS estimands. We assume normality for tractability, although
the framework allows for general distributional assumptions. There
are other tractable distributions; for example, the minimizer of wKLD
under the Laplace distribution corresponds to the estimates of a median
(not a linear) regression.}

We now compare equilibrium under these two models with the case in
which the agent has correct beliefs and chooses an optimal strategy
$x^{opt}$ that maximizes $x-E[\tau(x+W)]-c(x)$. In contrast, a strategy
$x_{*}^{j}$ is a Berk-Nash equilibrium of model $j$ if and only
if $x=x_{*}^{j}$ maximizes $x-\theta^{j}(x_{*}^{j})x-c(x)$.

For example, suppose that the cost of effort and true tax schedule
are both smooth functions, increasing and convex (e.g., taxes are
progressive) and that $\mathbb{X}\subset\mathbb{R}$ is a compact
interval. Then first order conditions are sufficient for optimality,
and $x^{opt}$ is the unique solution to $1-E[\tau'(x^{opt}+W)]=c'(x^{opt})$.
Moreover, the unique Berk-Nash equilibrium solves $1-E\left[\tau(x_{*}^{A}+W)/(x_{*}^{A}+W)\right]=c'(x_{*}^{A})$
for model A and $1-E[\tau'(x_{*}^{B}+W)]=c'(x_{*}^{B})$ for model
B. In particular, effort in model B is optimal, $x_{*}^{B}=x^{opt}$.
Intuitively, the agent has correct beliefs about the true expected
marginal tax rate at her equilibrium choice of effort, and so she
has the right incentives on the margin, despite believing incorrectly
that the marginal tax rate is constant. In contrast, effort is higher
than optimal in model A, $x_{*}^{A}>x^{opt}$. Intuitively, the agent
believes that the expected marginal tax rate equals the true expected
average tax rate, which is lower than the true expected marginal tax
rate in a progressive system.$\square$

\bigskip{}
\textbf{Example 2.3, continued from pg. \pageref{Regression}. }\textbf{\emph{Regression
to the mean}}\textbf{.} Since optimal strategies are characterized
by a cutoff, we let $\sigma\in\mathbb{R}$ represent the strategy
where the instructor praises an initial performance if it is above
$\sigma$ and criticizes it otherwise. The wKLD function for any $\theta\in\Theta=\mathbb{R}^{2}$
is
\[
K(\sigma,\theta)=\int_{-\infty}^{\sigma}E\left[\ln\frac{\varphi(S_{2})}{\varphi(S_{2}-(\theta_{C}+s_{1}))}\right]\varphi(s_{1})ds_{1}+\int_{\sigma}^{\infty}E\left[\ln\frac{\varphi(S_{2})}{\varphi(S_{2}-(\theta_{P}+s_{1}))}\right]\varphi(s_{1})ds_{1},
\]
where $\varphi$ is the density of $N(0,1)$ and $E$ denotes the
true expectation. For each $\sigma$, the unique parameter vector
that minimizes $K(\sigma,\cdot)$ is 
\begin{align*}
\theta_{C}(\sigma) & =E\left[S_{2}-S_{1}\mid S_{1}<\sigma\right]=-E\left[S_{1}\mid S_{1}<\sigma\right]>0
\end{align*}
and, similarly, $\theta_{P}(\sigma)=-E\left[S_{1}\mid S_{1}>\sigma\right]<0$.
Intuitively, the instructor is critical for performances below a threshold
and, therefore, the mean performance conditional on a student being
criticized is lower than the unconditional mean performance; thus,
a student who is criticized delivers a better next performance in
expectation. Similarly, a student who is praised delivers a worse
next performance in expectation.

The instructor who follows a strategy cutoff $\sigma$ believes, after
observing initial performance $s_{1}>0$, that her expected payoff
is $s_{1}+\theta_{C}(\sigma)-\kappa s_{1}$ if she criticizes and
$s_{1}+\theta_{P}(\sigma)$ if she praises. By optimality, the cutoff
makes her indifferent between praising and criticizing. Thus, $\sigma^{*}=(1/\kappa)\left(\theta_{C}(\sigma^{*})-\theta_{P}(\sigma^{*})\right)>0$
is the unique equilibrium cutoff. An instructor who ignores regression
to the mean has incorrect beliefs about the influence of her feedback
on the student's performance: She is excessively critical in equilibrium
because she incorrectly believes that criticizing a student improves
performance and that praising a student worsens it. Moreover, as the
reputation cost $\kappa\rightarrow0$, meaning that instructors care
only about performance and not about lying, $\sigma^{*}\rightarrow\infty$:
instructors only criticize (as in Tversky and Kahneman's (1973)\nocite{KahnemanTversky1973}
story). $\square$

\section{\label{sec:Relationship-to-other}Relationship to other solution
concepts}

We show that Berk-Nash equilibrium includes several solution concepts
(both standard and boundedly rational) as special cases.

\subsection{Properties of games}

\emph{$\textsc{correctly-specified games}.$ }In Bayesian statistics,
a model is correctly specified if the support of the prior includes
the true data generating process. The extension to single-agent decision
problems is straightforward. In games, however, we must account for
the fact that the objective distribution over consequences (i.e.,
the true model) depends on the strategy profile.\footnote{It would be more precise to say that the game is correctly specified
\emph{in steady state}.}\bigskip{}

\begin{defn}
\label{def:CSSS}A game is \textbf{correctly specified given $\boldsymbol{\sigma}$}
if, for all $i\in I$, there exists $\theta^{i}\in\Theta^{i}$ such
that $Q_{\theta^{i}}^{i}(\cdot\mid s^{i},x^{i})=Q_{\sigma}^{i}\left(\cdot\mid s^{i},x^{i}\right)$
for all for all $(s^{i},x^{i})\in\mathbb{S}^{i}\times\mathbb{X}^{i}$;
otherwise, the game is \textbf{misspecified} \textbf{given} \textbf{$\boldsymbol{\sigma}$}.
A game is \textbf{correctly specified} if it is correctly specified
for all $\sigma$; otherwise, it is \textbf{misspecified}.\bigskip{}
\end{defn}
\emph{$\textsc{identification}$.} From the player's perspective,
what matters is identification of the distribution over consequences
$Q_{\theta}^{i}$, not the parameter $\theta$. If the model is correctly
specified, then the true $Q_{\sigma}^{i}$ is trivially identified.
Of course, this is not true if the model is misspecified, because
the true distribution will never be learned. But we want a definition
that captures the same spirit: If two distributions are judged to
be equally a best fit (given the true distribution), then we want
these two distributions to be identical; otherwise, we cannot identify
which distribution is a best fit. The fact that players take actions
introduces an additional nuance to the definition of identification.
We can ask for identification of the distribution over consequences
either for those actions that are taken by the player (i.e., on the
path of play) or for all actions (i.e., on and off the path).

\bigskip{}

\begin{defn}
\label{def:identifiable}A game is \textbf{weakly identified given
$\boldsymbol{\sigma}$} if, for all $i\in I$: if $\theta_{1}^{i},\theta_{2}^{i}\in\Theta^{i}(\sigma)$,
then $Q_{\theta_{1}^{i}}^{i}(\cdot\mid s^{i},x^{i})=Q_{\theta_{2}^{i}}^{i}(\cdot\mid s^{i},x^{i})$
for all $(s^{i},x^{i})\in\mathbb{S}^{i}\times\mathbb{X}^{i}$ such
that $\sigma^{i}(x^{i}\mid s^{i})>0$ (recall that $p_{S^{i}}$ has
full support). If the condition is satisfied for \emph{all} $(s^{i},x^{i})\in\mathbb{S}^{i}\times\mathbb{X}^{i}$,
then we say that the game is \textbf{strongly identified given $\boldsymbol{\sigma}$}.
A game is \textbf{{[}weakly or strongly{]} identified} if it is {[}weakly
or strongly{]} identified for all $\sigma$.
\end{defn}
\bigskip{}

A correctly specified game is weakly identified. Also, two games that
are identical except for their feedback may differ in terms of being
correctly specified or identified.

\subsection{Relationship to Nash and self-confirming equilibrium}

The next result shows that Berk-Nash equilibrium is equivalent to
Nash equilibrium when the game is both correctly specified and strongly
identified.

\bigskip{}

\begin{prop}
\label{prop:NE}(i) Suppose that the game is correctly specified given
$\sigma$ and that $\sigma$ is a Nash equilibrium of its objective
game. Then $\sigma$ is a Berk-Nash equilibrium of the (objective
and subjective) game; (ii) Suppose that $\sigma$ is a Berk-Nash equilibrium
of a game that is correctly specified and strongly identified given
$\sigma$. Then $\sigma$ is a Nash equilibrium of the corresponding
objective game.
\end{prop}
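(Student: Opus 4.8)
The plan is to exploit the single observation that the optimality condition (\ref{eq:optimize-1}) depends on a belief $\mu^i$ only through the induced predictive distribution $\bar{Q}_{\mu^i}^i$; hence two beliefs that induce the same distribution over consequences make exactly the same strategies optimal. Both parts then reduce to matching the predictive distribution induced by a Berk-Nash belief with the objective distribution $Q_\sigma^i$, which is precisely the correct-beliefs restriction that defines Nash equilibrium. The bridge between the two notions is Lemma \ref{lemma:Theta-1}(i): a parameter $\theta^i$ lies in $\Theta^i(\sigma)$ with $K^i(\sigma,\theta^i)=0$ exactly when $Q_{\theta^i}^i(\cdot\mid s^i,x^i)=Q_\sigma^i(\cdot\mid s^i,x^i)$ on the support of $\sigma^i$.

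For part (i), assume the game is correctly specified given $\sigma$ and that $\sigma$ is a Nash equilibrium, and fix a player $i$. Correct specification supplies $\theta_*^i\in\Theta^i$ with $Q_{\theta_*^i}^i=Q_\sigma^i$ for all $(s^i,x^i)$; the distributions agree in particular on the support of $\sigma^i$, so Lemma \ref{lemma:Theta-1}(i) gives $K^i(\sigma,\theta_*^i)=0$, and since $K^i\geq 0$ we conclude $\theta_*^i\in\Theta^i(\sigma)$. Take the degenerate belief $\mu^i=\delta_{\theta_*^i}$. Then condition (ii) of Definition \ref{def:equilibrium-1} holds by construction, and $\bar{Q}_{\mu^i}^i=Q_{\theta_*^i}^i=Q_\sigma^i$. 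Because $\sigma$ is a Nash equilibrium, $\sigma^i$ is optimal against the correct distribution $Q_\sigma^i$; by the observation above it is therefore optimal against $\mu^i$, giving condition (i). Hence $\sigma$ is a Berk-Nash equilibrium.

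For part (ii), assume $\sigma$ is a Berk-Nash equilibrium of a game that is correctly specified and strongly identified given $\sigma$, and let $\mu^i\in\Delta(\Theta^i(\sigma))$ be a supporting belief. Correct specification again supplies $\theta_*^i\in\Theta^i(\sigma)$ with $Q_{\theta_*^i}^i=Q_\sigma^i$ everywhere. Now invoke strong identification: every $\hat\theta^i$ in the support of $\mu^i$ lies in $\Theta^i(\sigma)$, so $Q_{\hat\theta^i}^i(\cdot\mid s^i,x^i)=Q_{\theta_*^i}^i(\cdot\mid s^i,x^i)=Q_\sigma^i(\cdot\mid s^i,x^i)$ for all $(s^i,x^i)$, not merely on the path. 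Integrating over $\mu^i$ yields $\bar{Q}_{\mu^i}^i=Q_\sigma^i$ for all $(s^i,x^i)$, exactly the correct-beliefs restriction defining Nash equilibrium. Since $\sigma^i$ is already optimal given $\mu^i$, it is optimal against the correct distribution, so $\sigma$ is a Nash equilibrium.

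The step that does the real work is the appeal to strong identification in part (ii). Weak identification would only force the predictive distributions to agree on actions taken with positive probability, leaving beliefs about off-path deviations unconstrained; but the Nash optimality condition evaluates the payoff of every deviation $\bar{x}^i$ using the objective distribution $Q_\sigma^i(\cdot\mid s^i,\bar{x}^i)$, so agreement is needed for all $(s^i,x^i)$. Strong identification is precisely what upgrades on-path agreement to agreement everywhere, and this is where the hypothesis is indispensable; part (i), by contrast, needs no identification assumption at all, since the degenerate belief at $\theta_*^i$ reproduces $Q_\sigma^i$ on and off the path automatically.
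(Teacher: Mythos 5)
Your proof is correct and follows essentially the same route as the paper's: in both parts you use correct specification together with Lemma \ref{lemma:Theta-1}(i) to place a truth-reproducing parameter $\theta_*^i$ in $\Theta^i(\sigma)$, and in part (ii) you invoke strong identification to force every belief supported on $\Theta^i(\sigma)$ to induce $Q_\sigma^i$ for all actions. The only differences are expository (the explicit degenerate belief in part (i) and the closing remark on why strong rather than weak identification is needed), not substantive.
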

\begin{proof}
(i) Let $\sigma$ be a Nash equilibrium and fix any $i\in I$. Then
$\sigma^{i}$ is optimal given $Q_{\sigma}^{i}$. Because the game
is correctly specified given $\sigma$, there exists $\theta_{*}^{i}\in\Theta^{i}$
such that $Q_{\theta_{*}^{i}}^{i}=Q_{\sigma}^{i}$ and, therefore,
by Lemma \ref{lemma:Theta-1}, $\theta_{*}^{i}\in\Theta^{i}(\sigma)$.
Thus, $\sigma^{i}$ is also optimal given $Q_{\theta_{*}^{i}}^{i}$
and $\theta_{*}^{i}\in\Theta^{i}(\sigma)$, so that $\sigma$ is a
Berk-Nash equilibrium. (ii) Let $\sigma$ be a Berk-Nash equilibrium
and fix any $i\in I$. Then $\sigma^{i}$ is optimal given $\bar{Q}_{\mu^{i}}^{i}$,
for some $\mu^{i}\in\Delta(\Theta^{i}(\sigma))$. Because the game
is correctly specified given $\sigma$, there exists $\theta_{*}^{i}\in\Theta^{i}$
such that $Q_{\theta_{*}^{i}}^{i}=Q_{\sigma}^{i}$ and, therefore,
by Lemma \ref{lemma:Theta-1}, $\theta_{*}^{i}\in\Theta^{i}(\sigma)$.
Moreover, because the game is strongly identified given $\sigma$,
any $\hat{\theta}^{i}\in\Theta^{i}(\sigma)$ satisfies $Q_{\hat{\theta}^{i}}^{i}=Q_{\theta_{*}^{i}}^{i}=Q_{\sigma}^{i}$.
Then $\sigma^{i}$ is also optimal given $Q_{\sigma}^{i}$. Thus,
$\sigma$ is a Nash equilibrium.
\end{proof}
\bigskip{}

\textbf{Example 2.4, continued from pg. \pageref{Classical}. }\textbf{\emph{Monetary
policy}}\textbf{. }Fix a strategy $x_{*}^{P}$ for the public. Note
that $U=u^{*}-\lambda(x^{G}-x_{*}^{P}+\varepsilon_{e})+\varepsilon_{U}$,
whereas the government believes $U=\theta_{1}-\theta_{2}(x^{G}+\varepsilon_{e})+\varepsilon_{U}$.
Thus, by choosing $\theta^{*}\in\Theta=\mathbb{R}^{2}$ such that
$\theta_{1}^{*}=u^{*}+\lambda x_{*}^{P}$ and $\theta_{2}^{*}=\lambda$,
it follows that the distribution over $Y=(U,e)$ parameterized by
$\theta^{*}$ coincides with the objective distribution given $x_{*}^{P}$.
So, despite appearances, the game is correctly specified given $x_{*}^{P}$.
Moreover, since $Var(\varepsilon_{e})>0$, $\theta^{*}$ is the unique
minimizer of the wKLD function given $x_{*}^{P}$. Because there is
a unique minimizer, then the game is strongly identified given $x_{*}^{P}$.
Since these properties hold for all $x_{*}^{P}$, Proposition \ref{prop:NE}
implies that Berk-Nash equilibrium is equivalent to Nash equilibrium.
Thus, the equilibrium policies are the same whether or not the government
realizes that unemployment is driven by surprise, not actual, inflation.\footnote{\citet{sargent-book} derived this result for a government doing OLS-based
learning (a special case of our example when errors are normal). We
assumed linearity for simplicity, but the result is true for the more
general case with true unemployment $U=f^{U}(x^{G},x^{P},\omega)$
and subjective model $f_{\theta}^{U}(x^{G},x^{P},\omega)$ if, for
all $x^{P}$, there exists $\theta$ such that $f^{U}(x^{G},x^{P},\omega)=f_{\theta}^{U}(x^{G},x^{P},\omega)$
for all $(x^{P},\omega)$.} $\square$ \bigskip{}

The next result shows that a Berk-Nash equilibrium is a self-confirming
equilibrium (SCE) in games that are correctly specified, but not necessarily
strongly identified.\footnote{A strategy profile $\sigma$ is a SCE if, for all players $i\in I$,
$\sigma^{i}$ is optimal given $\hat{Q}_{\sigma}^{i}$, where $\hat{Q}_{\sigma}^{i}(\cdot\mid s^{i},x^{i})=Q_{\sigma}^{i}(\cdot\mid s^{i},x^{i})$
for all $(s^{i},x^{i})$ such that $\sigma^{i}(x^{i}\mid s^{i})>0$.
This definition is slightly more general than the typical one, e.g.,
\citet{dekel2004learning}, because it does not restrict players to
believe that consequences are driven by other players' strategies.}\bigskip{}

\begin{prop}
\label{prop:Berk-NashvsSCE}Suppose that the game is correctly specified
given $\sigma$, and that $\sigma$ is a Berk-Nash equilibrium. Then
$\sigma$ is also a self-confirming equilibrium.\footnote{A converse does not necessarily hold for a fixed game. The reason
is that the definition of SCE does not impose any restrictions on
off-equilibrium beliefs, while a particular subjective game may impose
ex-ante restrictions on beliefs. The following converse, however,
does hold: For any $\sigma$ that is an SCE, there exists a game that
is correctly specified for which $\sigma$ is a Berk-Nash equilibrium.}
\end{prop}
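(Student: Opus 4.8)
The plan is to exploit Lemma \ref{lemma:Theta-1}(i) twice: once to show that correct specification forces the minimized wKLD to equal zero, and once to convert that zero into on-path agreement of the objective and subjective distributions. Fix a player $i\in I$. Because $\sigma$ is a Berk-Nash equilibrium, there is a belief $\mu^i\in\Delta(\Theta^i(\sigma))$ under which $\sigma^i$ is optimal. I would then use correct specification given $\sigma$ (Definition \ref{def:CSSS}) to produce a parameter $\theta_*^i\in\Theta^i$ with $Q_{\theta_*^i}^i=Q_\sigma^i$ everywhere; by the equality clause of Lemma \ref{lemma:Theta-1}(i) this gives $K^i(\sigma,\theta_*^i)=0$. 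Since the same lemma guarantees $K^i(\sigma,\cdot)\ge 0$, the minimum value of the wKLD is exactly zero, so $\Theta^i(\sigma)=\{\theta^i\in\Theta^i:K^i(\sigma,\theta^i)=0\}$.

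The second, and central, step is to read off what the value zero implies about every minimizer. For any $\hat\theta^i\in\Theta^i(\sigma)$ I have $K^i(\sigma,\hat\theta^i)=0$, and the ``only if'' direction of Lemma \ref{lemma:Theta-1}(i) then yields $Q_{\hat\theta^i}^i(\cdot\mid s^i,x^i)=Q_\sigma^i(\cdot\mid s^i,x^i)$ for every on-path pair $(s^i,x^i)$, i.e.\ every $(s^i,x^i)$ with $\sigma^i(x^i\mid s^i)>0$. Because $\mu^i$ assigns probability one to $\Theta^i(\sigma)$, integrating this identity against $\mu^i$ shows that the induced predictive distribution satisfies $\bar Q_{\mu^i}^i(\cdot\mid s^i,x^i)=Q_\sigma^i(\cdot\mid s^i,x^i)$ on the path of play, while off the path it may differ.

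Finally I would set $\hat Q_\sigma^i:=\bar Q_{\mu^i}^i$ and verify the two requirements of self-confirming equilibrium. On-path agreement $\hat Q_\sigma^i(\cdot\mid s^i,x^i)=Q_\sigma^i(\cdot\mid s^i,x^i)$ is exactly what the previous step delivered, and optimality of $\sigma^i$ against $\hat Q_\sigma^i=\bar Q_{\mu^i}^i$ is precisely the Berk-Nash optimality condition (\ref{eq:optimize-1}), in which the contemplated-action beliefs $\bar Q_{\mu^i}^i(\cdot\mid s^i,\bar x^i)$ are used for both on- and off-path $\bar x^i$. Since $i$ was arbitrary, $\sigma$ is a self-confirming equilibrium.

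I do not expect a genuine obstacle here: the argument is a short chain of implications off Lemma \ref{lemma:Theta-1}(i), and the only point requiring care is conceptual rather than technical---namely, that the SCE definition constrains beliefs only on the path, so the fact that $\bar Q_{\mu^i}^i$ is pinned down to $Q_\sigma^i$ only on-path is exactly enough, with the unconstrained off-path part of $\bar Q_{\mu^i}^i$ serving as the SCE's free off-path conjecture.
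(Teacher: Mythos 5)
Your proposal is correct and follows essentially the same route as the paper's proof: correct specification yields a $\theta_*^i$ with $K^i(\sigma,\theta_*^i)=0$, so every element of $\Theta^i(\sigma)$ attains the value zero, and the equality clause of Lemma \ref{lemma:Theta-1}(i) then forces on-path agreement with $Q_\sigma^i$. The one (harmless) difference is that you take the self-confirming belief to be the mixture $\bar{Q}_{\mu^i}^i$ rather than a single $Q_{\hat{\theta}^i}^i$ from the support of $\mu^i$, which if anything handles the final optimality step more cleanly, since Berk-Nash optimality is stated directly in terms of $\bar{Q}_{\mu^i}^i$.
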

\begin{proof}
Fix any $i\in I$ and let $\hat{\theta}^{i}$ be in the support of
$\mu^{i}$, where $\mu^{i}$ is player $i$'s belief supporting the
Berk-Nash equilibrium strategy $\sigma^{i}$. Because the game is
correctly specified given $\sigma$, there exists $\theta_{*}^{i}\in\Theta^{i}$
such that $Q_{\theta_{*}^{i}}^{i}=Q_{\sigma}^{i}$ and, therefore,
by Lemma \ref{lemma:Theta-1}, $K^{i}(\sigma,\theta_{*}^{i})=0$.
Thus, it must also be that $K^{i}(\sigma,\hat{\theta}^{i})=0$. By
Lemma \ref{lemma:Theta-1}, it follows that $Q_{\hat{\theta}^{i}}^{i}(\cdot\mid s^{i},x^{i})=Q_{\sigma}^{i}(\cdot\mid s^{i},x^{i})$
for all $(s^{i},x^{i})$ such that $\sigma^{i}(x^{i}\mid s^{i})>0$.
In particular, $\sigma^{i}$ is optimal given $Q_{\hat{\theta}^{i}}^{i}$,
and $Q_{\hat{\theta}^{i}}^{i}$ satisfies the desired self-confirming
restriction.
\end{proof}
\bigskip{}

For games that are \emph{not} correctly specified, beliefs can be
incorrect on the equilibrium path, and so a Berk-Nash equilibrium
is not necessarily Nash or SCE.

\subsection{Relationship to fully cursed and ABEE}

An \textbf{analogy-based game} satisfies the following four properties:
(i) \emph{States and information structure}: The state space $\Omega$
is finite with distribution $p_{\Omega}\in\Delta(\Omega)$. In addition,
for each $i$, there is a partition $\mathcal{S}^{i}$ of $\Omega$,
and the element of $\mathcal{S}^{i}$ that contains $\omega$ (i.e.,
the signal of player $i$ in state $\omega$) is denoted by $s^{i}(\omega)$;\footnote{This assumption is made to facilitate comparison with Jehiel and Koessler's
(2008) ABEE.\nocite{jehiel2008revisiting}} (ii) \emph{Perfect feedback}: For each $i$, $f^{i}(x,\omega)=(x^{-i},\omega)$
for all $(x,\omega)$; (iii) \emph{Analogy partition}: For each $i$,
there exists a partition of $\Omega$, denoted by $\mathcal{A}^{i}$,
and the element of $\mathcal{A}^{i}$ that contains $\omega$ is denoted
by $\alpha^{i}(\omega)$; (iv) \emph{Conditional independence}: $(Q_{\theta^{i}}^{i})_{\theta^{i}\in\Theta^{i}}$
is the set of all joint probability distributions over $\mathbb{X}^{-i}\times\Omega$
that satisfy 
\[
Q_{\theta^{i}}^{i}\left(x^{-i},\omega\mid s^{i}(\omega'),x^{i}\right)=Q_{\Omega,\theta^{i}}^{i}(\omega\mid s^{i}(\omega'))Q_{\mathbb{X}^{-i},\theta^{i}}^{i}(x^{-i}\mid\alpha^{i}(\omega)).
\]
In other words, every player $i$ believes that $x^{-i}$ and $\omega$
are independent conditional on the analogy partition. For example,
if $\mathcal{A}^{i}=\mathcal{S}^{i}$ for all $i$, then each player
believes that the actions of other players are independent of the
state, conditional on their own private information.

\bigskip{}

\begin{defn}
(\citealp{jehiel2008revisiting}) A strategy profile $\sigma$ is
an analogy-based expectation equilibrium (ABEE) if for all $i\in I$,
$\omega\in\Omega$, and $x^{i}$ such that $\sigma^{i}(x^{i}\mid s^{i}(\omega))>0$,
$x^{i}\in\arg\max_{\bar{x}^{i}\in\mathbb{X}^{i}}\sum_{\omega'\in\Omega}p_{\Omega\mid S^{i}}(\omega'\mid s^{i}(\omega))\sum_{x^{-i}\in\mathbb{X}^{-i}}\bar{\sigma}^{-i}(x^{-i}\mid\omega')\pi^{i}(\bar{x}^{i},x^{-i},\omega')$,
where $\bar{\sigma}^{-i}(x^{-i}\mid\omega')=\sum_{\omega''\in\Omega}p_{\Omega\mid\mathcal{A}^{i}}(\omega''\mid\alpha^{i}(\omega'))\prod_{j\ne i}\sigma^{j}(x^{j}\mid s^{j}(\omega''))$.
\end{defn}
\bigskip{}

\begin{prop}
\label{prop:ABEE}In an analogy-based game, $\sigma$ is a Berk-Nash
equilibrium if and only if it is an ABEE.
\end{prop}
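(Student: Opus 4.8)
The plan is to reduce both clauses of the Berk-Nash definition to the two requirements of the ABEE definition by explicitly computing the set of wKLD-minimizers $\Theta^{i}(\sigma)$ in an analogy-based game. First I would write down the objective distribution. Perfect feedback gives $y^{i}=(x^{-i},\omega)$, and since player $i$'s action is chosen simultaneously and affects neither $\omega$ nor the signals and actions of the others, the true conditional law of $Y^{i}$ does not depend on $x^{i}$:
\[
Q_{\sigma}^{i}(x^{-i},\omega\mid s^{i}(\omega'),x^{i})=p_{\Omega\mid S^{i}}(\omega\mid s^{i}(\omega'))\prod_{j\neq i}\sigma^{j}(x^{j}\mid s^{j}(\omega)).
\]
Because the subjective distributions also do not depend on $x^{i}$ (by the conditional-independence form), the inner expectation in \eqref{eq:wKLD-1} is identical for every $x^{i}$ on the support of $\sigma^{i}(\cdot\mid s^{i})$, so the $\sigma^{i}$-weights sum to one and $K^{i}(\sigma,\theta^{i})=\sum_{s^{i}}p_{S^{i}}(s^{i})\,D\!\left(Q_{\sigma}^{i}(\cdot\mid s^{i})\,\|\,Q_{\theta^{i}}^{i}(\cdot\mid s^{i})\right)$, a convex combination of Kullback-Leibler divergences on the joint space $\mathbb{X}^{-i}\times\Omega$.

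Second, I would minimize this objective over $\theta^{i}$. Using $p_{S^{i}}(s^{i})p_{\Omega\mid S^{i}}(\omega\mid s^{i})=p_{\Omega}(\omega)$ for $s^{i}=s^{i}(\omega)$, the objective rewrites as a single sum over $\omega$ weighted by $p_{\Omega}(\omega)$, which splits additively into an $\omega$-marginal term and an $x^{-i}$-conditional term because the subjective law factors as $Q_{\Omega,\theta^{i}}^{i}(\omega\mid s^{i}(\omega'))\,Q_{\mathbb{X}^{-i},\theta^{i}}^{i}(x^{-i}\mid\alpha^{i}(\omega))$. The $\omega$-marginal term is a cross-entropy minimized (to zero) by the correct belief $Q_{\Omega,\theta^{i}}^{i}(\cdot\mid s^{i})=p_{\Omega\mid S^{i}}(\cdot\mid s^{i})$, which is unconstrained across signals and poses no difficulty. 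The $x^{-i}$-conditional term groups the states by analogy class $\alpha$, and for each $\alpha$ the standard cross-entropy minimization (Gibbs' inequality) yields the unique minimizer
\[
Q_{\mathbb{X}^{-i},\theta^{i}}^{i}(x^{-i}\mid\alpha)=\sum_{\omega}p_{\Omega\mid\mathcal{A}^{i}}(\omega\mid\alpha)\prod_{j\neq i}\sigma^{j}(x^{j}\mid s^{j}(\omega)),
\]
which is exactly the analogy-averaged belief $\bar{\sigma}^{-i}(x^{-i}\mid\omega')$ whenever $\alpha=\alpha^{i}(\omega')$. Full support of $p_{\Omega}$ (inherited from the full-support marginals of $p$) guarantees every analogy class has positive weight, so this minimizer is unique; hence every $\theta^{i}\in\Theta^{i}(\sigma)$ induces one and the same law over consequences.

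Third, I would match the optimality conditions. Because every element of $\Theta^{i}(\sigma)$ induces the same distribution, any belief $\mu^{i}\in\Delta(\Theta^{i}(\sigma))$ gives $\bar{Q}_{\mu^{i}}^{i}(x^{-i},\omega\mid s^{i}(\omega'),x^{i})=p_{\Omega\mid S^{i}}(\omega\mid s^{i}(\omega'))\,\bar{\sigma}^{-i}(x^{-i}\mid\omega')$. Substituting this into the Berk-Nash optimality condition \eqref{eq:optimize-1} with $Y^{i}=(x^{-i},\omega)$ reproduces verbatim the ABEE maximand $\sum_{\omega'}p_{\Omega\mid S^{i}}(\omega'\mid s^{i}(\omega))\sum_{x^{-i}}\bar{\sigma}^{-i}(x^{-i}\mid\omega')\pi^{i}(\bar{x}^{i},x^{-i},\omega')$. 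The equivalence then follows in both directions: if $\sigma$ is a Berk-Nash equilibrium, the supporting belief is concentrated on $\Theta^{i}(\sigma)$ and hence $\sigma^{i}$ satisfies the ABEE optimality condition; conversely, if $\sigma$ is an ABEE, taking $\mu^{i}$ to be a point mass on the (nonempty, by Lemma \ref{lemma:Theta-1}) minimizer makes $\sigma^{i}$ optimal and satisfies clause (ii) of Definition \ref{def:equilibrium-1}. The main obstacle I anticipate is the constrained minimization in the second step: carefully justifying that, subject to the analogy factorization, the minimizing conditional over opponents' actions equals the analogy-weighted average $\bar{\sigma}^{-i}$, while simultaneously handling the reindexing from signals to states and confirming (via full support) that the minimizer is unique in its induced distribution, so that the averaging inherent in $\mu^{i}$ is immaterial.
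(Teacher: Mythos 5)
Your proposal is correct and follows essentially the same route as the paper's proof: both decompose the (negative) wKLD objective additively into an $\omega$-marginal cross-entropy term and an $x^{-i}$-conditional term grouped by analogy class, identify the minimizers as the true conditional over states and the analogy-averaged opponent strategy $\bar{\sigma}^{-i}$, and then observe that substituting these into the optimality condition reproduces the ABEE maximand. You fill in some details the paper leaves as "straightforward to check" (Gibbs' inequality, the collapse of the $x^{i}$-weighting, uniqueness of the induced distribution), but the argument is the same.
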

\begin{proof}
See the Appendix.
\end{proof}
\bigskip{}

As mentioned by \citet{jehiel2008revisiting}, ABEE is equivalent
to Eyster and Rabin's (2005) fully cursed equilibrium in the special
case where $\mathcal{A}^{i}=\mathcal{S}^{i}$ for all $i$. In particular,
Proposition \ref{prop:ABEE} provides a misspecified-learning foundation
for these two solution concepts. \citet{jehiel2008revisiting} discuss
an alternative foundation for ABEE, where players receive coarse feedback
aggregated over past play and multiple beliefs are consistent with
this feedback. Under this different feedback structure, ABEE can be
viewed as a natural selection of the set of SCE.\bigskip{}

\textbf{Example 2.5, continued from pg. \pageref{Trade}. }\textbf{\emph{Trade
with adverse selection.}}\textbf{ }In Online Appendix \ref{sec:OA_trade},
we show that $x^{*}$ is a Berk-Nash equilibrium price if and only
if $x=x^{*}$ maximizes an \textbf{equilibrium belief function} $\Pi(x,x^{*})$
which represents the belief about expected profit from choosing any
price $x$ under a steady-state $x^{*}$. The equilibrium belief function
depends on the feedback/misspecification assumptions, and we discuss
the following four cases:
\begin{eqnarray*}
\Pi^{NE}(x) & = & \Pr(A\leq x)\left(E\left[V\mid A\leq x\right]-x\right)\\
\Pi^{CE}(x) & = & \Pr(A\leq x)\left(E\left[V\right]-x\right)\\
\Pi^{BE}(x,x^{*}) & = & \Pr(A\leq x)\left(E\left[V\mid A\leq x^{*}\right]-x\right)\\
\Pi^{ABEE}(x) & = & \sum_{j=1}^{k}\Pr(V\in\mathbb{V}_{j})\left\{ \Pr(A\leq x\mid V\in\mathbb{V}_{j})\left(E\left[V\mid V\in\mathbb{V}_{j}\right]-x\right)\right\} .
\end{eqnarray*}

The first case, $\Pi^{NE}$, is the benchmark case in which beliefs
are correct. The second case, $\Pi^{CE}$, corresponds to perfect
feedback and subjective model $\Theta=\Delta(\mathbb{A})\times\Delta(\mathbb{V})$,
as described in page \pageref{Trade}. This is an example of an analogy-based
game with single analogy class $\mathbb{V}$. The buyer learns the
true marginal distributions of $A$ and $V$ and believes the joint
distribution equals the product of the marginal distributions. Berk-Nash
coincides with fully cursed equilibrium. The third case, $\Pi^{BE}$,
has the same misspecified model as the second case, but assumes partial
feedback, in the sense that the ask price $a$ is always observed
but the valuation $v$ is only observed if there is trade. The equilibrium
price $x^{*}$ affects the sample of valuations observed by the buyer
and, therefore, her beliefs. Berk-Nash coincides with naive behavioral
equilibrium.

The last case, $\Pi^{ABEE}$, corresponds to perfect feedback and
the following misspecification: Consider a partition of $\mathbb{V}$
into $k$ ``analogy classes'' $(\mathbb{V}_{j})_{j=1,...,k}$. The
buyer believes that $(A,V)$ are independent conditional on $V\in\mathbb{V}_{i}$,
for each $i=1,...,k$. The parameter set is $\Theta_{A}=\times_{j}\Delta(\mathbb{A})\times\Delta(\mathbb{V})$,
where, for a value $\theta=(\theta_{1},....,\theta_{k},\theta_{\mathbb{V}})\in\Theta_{A}$,
$\theta_{\mathbb{V}}$ parameterizes the marginal distribution over
$\mathbb{V}$ and, for each $j=1,...,k$, $\theta_{j}\in\Delta(\mathbb{A})$
parameterizes the distribution over $\mathbb{A}$ conditional on $V\in\mathbb{V}_{j}$.
Berk-Nash coincides with the ABEE of the game with analogy classes
$(\mathbb{V}_{j})_{j=1,...,k}$.\footnote{In Online Appendix \ref{sec:OA_trade}, we also consider the case
of ABEE with partial feedback.} $\square$

\section{\label{sec:foundation}Equilibrium foundation}

We provide a learning foundation for equilibrium. We follow \citet{fudenberg1993learning}
in considering games with (slightly) perturbed payoffs because, as
they highlight in the context of providing a learning foundation for
mixed-strategy Nash equilibrium, behavior need not be continuous in
beliefs without perturbations. Thus, even if beliefs were to converge,
behavior need not settle down in the unperturbed game. Perturbations
guarantee that if beliefs converge, then behavior also converges.

\subsection{\label{subsec:Perturbed-game}Perturbed game}

A \textbf{perturbation structure} is a tuple $\mathcal{P}=\left\langle \Xi,P_{\xi}\right\rangle $,
where: $\Xi=\times_{i\in I}\Xi^{i}$ and $\Xi^{i}\subseteq\mathbb{R}^{\#\mathbb{X}^{i}}$
is a set of payoff perturbations for each action of player $i$; $P_{\xi}=(P_{\xi^{i}})_{i\in I}$,
where $P_{\xi^{i}}\in\Delta(\Xi^{i})$ is a distribution over payoff
perturbations of player $i$ that is absolutely continuous with respect
to the Lebesgue measure, satisfies $ $$\int_{\Xi^{i}}||\xi^{i}||P_{\xi}(d\xi^{i})<\infty$,
and is independent from the perturbations of other players.%
{} A \textbf{perturbed game }$\mathcal{G}^{\mathcal{P}}=\left\langle \mathcal{G},\mathcal{P}\right\rangle $
is composed of a game $\mathcal{G}$ and a perturbation structure
$\mathcal{P}$. The timing of a perturbed game $\mathcal{G}^{\mathcal{P}}$
coincides with the timing of $\mathcal{G}$, except for two differences.
First, before taking an action, each player not only observes her
signal $s^{i}$ but also privately observes a vector of own-payoff
perturbations $\xi^{i}\in\Xi^{i}$, where $\xi^{i}(x^{i})$ denotes
the perturbation for action $x^{i}$. Second, her payoff given action
$x^{i}$ and consequence $y^{i}$ is $\pi^{i}(x^{i},y^{i})+\xi^{i}(x^{i})$.

A strategy\emph{ $\sigma^{i}$ }for player $i$ is \textbf{optimal
in the perturbed game} given\emph{ $\mu^{i}\in\Delta(\Theta^{i})$}
if, for all $(s^{i},x^{i})\in\mathbb{S}^{i}\times\mathbb{X}^{i}$,
$\sigma^{i}(x^{i}\mid s^{i})=P_{\xi}\left(\xi^{i}:x^{i}\in\Psi^{i}(\mu^{i},s^{i},\xi^{i})\right)$,
where 
\[
\Psi^{i}(\mu^{i},s^{i},\xi^{i})\equiv\arg\max_{x^{i}\in\mathbb{X}^{i}}E_{\bar{Q}_{\mu^{i}}^{i}(\cdot\mid s^{i},x^{i})}\left[\pi^{i}(x^{i},Y^{i})\right]+\xi^{i}(x^{i}).
\]
In other words, if $\sigma^{i}$ is an optimal strategy, then $\sigma^{i}(x^{i}\mid s^{i})$
is the probability that $x^{i}$ is optimal when the state is $s^{i}$
and the perturbation is $\xi^{i}$, taken over all possible realizations
of $\xi^{i}$. The definition of Berk-Nash equilibrium of a perturbed
game $\mathcal{G}^{\mathcal{P}}$ is analogous to Definition \ref{def:equilibrium-1},
with the only difference that optimality must be required with respect
to the \emph{perturbed} game.

\subsection{Learning foundation}

We fix a perturbed game $\mathcal{G}^{\mathcal{P}}$ and assume that
players repeatedly play the corresponding objective game at each $t=0,1,2,...$,
where the time-$t$ state and signals, $(\omega_{t},s_{t})$, and
perturbations $\xi_{t}$, are independently drawn every period from
the same distribution $p$ and $P_{\xi}$, respectively. In addition,
each player $i$ has a prior $\mu_{0}^{i}$ with full support over
her (finite-dimensional) parameter set, $\Theta^{i}$.\footnote{We restrict attention to parametric models (i.e., finite-dimensional
parameter spaces) because, otherwise, Bayesian updating need not converge
to the truth for most priors and parameter values even in correctly
specified statistical settings (\citet{freedman1963asymptotic}, \citet{diaconis1986consistency}).} At the end of every period $t$, each player uses Bayes' rule and
the information obtained in all past periods (her own signals, actions,
and consequences) to update beliefs. Players believe that they face
a stationary environment and myopically maximize the current period's
expected payoff.

Let $\Delta^{0}(\Theta^{i})$ denote the set of probability distributions
on $\Theta^{i}$ with full support. Let $B^{i}:\Delta^{0}(\Theta^{i})\times\mathbb{S}^{i}\times\mathbb{X}^{i}\times\mathbb{Y}^{i}\rightarrow\Delta^{0}(\Theta^{i})$
denote the Bayesian operator of player $i$: for all $A\subseteq\Theta$
Borel measurable and all $(\mu^{i},s^{i},x^{i},y^{i})\in\Delta^{0}(\Theta^{i})\times\mathbb{S}^{i}\times\mathbb{X}^{i}\times\mathbb{Y}^{i}$,
\[
B^{i}(\mu^{i},s^{i},x^{i},y^{i})(A)=\frac{\int_{A}Q_{\theta^{i}}^{i}(y^{i}\mid s^{i},x^{i})\mu^{i}(d\theta)}{\int_{\Theta}Q_{\theta^{i}}^{i}(y^{i}\mid s^{i},x^{i})\mu^{i}(d\theta)}.
\]
Bayesian updating is well defined by Assumption 1.\footnote{By Assumption 1(ii)-(iii), there exists $\theta\in\Theta$ and an
open ball containing it, such that $Q_{\theta'}^{i}>0$ for any $\theta'$
in the ball. Thus the Bayesian operator is well-defined for any $\mu^{i}\in\Delta^{0}(\Theta^{i})$.
Moreover, by Assumption 1(iii), such $\theta$'s are dense in $\Theta$,
so the Bayesian operator maps $\Delta^{0}(\Theta^{i})$ into itself.} Because players believe they face a stationary environment with i.i.d.
perturbations, it is without loss of generality to restrict player
$i$'s behavior at time $t$ to depend on $(\mu_{t}^{i},s_{t}^{i},\xi_{t}^{i})$.
\bigskip{}

\begin{defn}
A \textbf{policy} of player $i$ is a sequence of functions\emph{
$\phi^{i}=(\phi_{t}^{i})_{t}$, }where\emph{ $\phi_{t}^{i}:\Delta(\Theta^{i})\times\mathbb{S}^{i}\times\Xi^{i}\rightarrow\mathbb{X}^{i}$.
A }policy $\phi^{i}$ is \textbf{optimal} if \emph{$\phi_{t}^{i}\in\Psi^{i}$
}for all $t$. A policy profile \emph{$\phi=(\phi^{i}){}_{i\in I}$
is }\textbf{optimal}\emph{ }if $\phi^{i}$ is optimal for all $i\in I$.
\end{defn}
\bigskip{}

Let $\mathbb{H}\subseteq(\mathbb{S}\times\Xi\times\mathbb{X}\times\mathbb{Y})^{\infty}$
denote the set of histories, where any history $h=(s_{0},\xi_{0},x_{0},y_{0},...,s_{t},\xi_{t},x_{t},y_{t}...)\in\mathbb{H}$
satisfies the feasibility restriction: for all $i\in I$, $y_{t}^{i}=f^{i}(x_{t}^{i},x_{t}^{-i},\omega)$
for some $\omega\in supp(p_{\Omega\mid S^{i}}(\cdot\mid s_{t}^{i}))$
for all $t$. Let $\mathbf{P}{}^{\mu_{0},\phi}$ denote the probability
distribution over $\mathbb{H}$ that is induced by the priors $\mu_{0}=(\mu_{0}^{i})_{i\in I}$,
and the policy profiles $\phi=(\phi^{i})_{i\in I}$. Let $(\mu_{t})_{t}$
denote the sequence of beliefs $\mu_{t}:\mathbb{H}\rightarrow\times_{i\in I}\Delta(\Theta^{i})$
such that, for all $t\geq1$ and all $i\in I$, $\mu_{t}^{i}$ is
the posterior at time $t$ defined recursively by $\mu_{t}^{i}(h)=B^{i}(\mu_{t-1}^{i}(h),s_{t-1}^{i}(h),x_{t-1}^{i}(h),y_{t-1}^{i}(h))$
for all $h\in\mathbb{H}$, where $s_{t-1}^{i}(h)$ is player $i$'s
signal at $t-1$ given history $h$, and similarly for $x_{t-1}^{i}(h)$
and $y_{t-1}^{i}(h)$.

\bigskip{}

\begin{defn}
The \textbf{sequence of intended strategy profiles} given policy profile\emph{
}\textbf{\emph{$\phi=(\phi^{i})_{i\in I}$}} is the sequence $(\sigma_{t})_{t}$
of random variables $ $ $\sigma_{t}:\mathbb{H}\rightarrow\times_{i\in I}\Delta(\mathbb{X}^{i})^{\mathbb{S}^{i}}$
such that, for all $t$, all $i\in I$, and all $(x^{i},s^{i})\in\mathbb{X}^{i}\times\mathbb{S}^{i}$,
\begin{equation}
\sigma_{t}^{i}(h)(x^{i}\mid s^{i})=P_{\xi}\left(\xi^{i}:\phi_{t}^{i}(\mu_{t}^{i}(h),s^{i},\xi^{i})=x^{i}\right).\label{eq:intended}
\end{equation}
\end{defn}
\bigskip{}

An intended strategy profile $\sigma_{t}$ describes how each player
would behave at time $t$ for each possible signal; it is a random
variable because it depends on the players' beliefs at time $t$,
$\mu_{t}$, which in turn depend on the past history.

One reasonable criteria to claim that the players' behavior stabilizes
is that their intended behavior stabilizes with positive probability
(cf. \citealp{fudenberg1993learning}).\bigskip{}

\begin{defn}
\label{def:stability}A strategy profile $\sigma$ is \textbf{stable
{[}or strongly stable{]}} under policy profile\emph{ $\phi$} if the
sequence of intended strategies, $(\sigma_{t})_{t}$, converges to
$\sigma$ with positive probability {[}or with probability one{]},
i.e., 
\[
\mathbf{P}{}^{\mu_{0},\phi}\left(\lim_{t\rightarrow\infty}\left\Vert \sigma_{t}(h)-\sigma\right\Vert =0\right)>0\,\,\mbox{[or}=1\mbox{]}.
\]
\end{defn}
\bigskip{}

Lemma \ref{lem:Berk} says that, if behavior stabilizes to a strategy
profile $\sigma$, then, for each player $i$, beliefs become increasingly
concentrated on $\Theta^{i}(\sigma)$. This result extends findings
from the statistics of misspecified learning (\citet{berk1966limiting},
\citet{bunke1998asymptotic}) to a setting with active learning (i.e.,
players learn from data that is endogenously generated by their own
actions). Three new issues arise: (i) Previous results need to be
extended to the case of non-i.i.d. and endogenous data; (ii) It is
not obvious that steady-state beliefs can be characterized based on
steady-state behavior, independently of the path of play (Assumption
1 plays an important role here; See Section \ref{subsec:Discussion}
for an example); (iii) We allow the wKLD function to be nonfinite
so that players can believe that other players follow pure strategies.\footnote{For example, if player 1 believes that player 2 plays $A$ with probability
$\theta$ and $B$ with $1-\theta$, then the wKLD function is infinity
at $\theta=1$ if player 2 plays $B$ with positive probability.} \bigskip{}

\begin{lem}
\label{lem:Berk}Suppose that, for a policy profile $\phi$, the sequence
of intended strategies, $(\sigma_{t})_{t}$, converges to $\sigma$
for all histories in a set $\mathcal{H}\subseteq\mathbb{H}$ such
that $\mathbf{P}^{\mu_{0},\phi}\left(\mathcal{H}\right)>0$. Then,
for all open sets $U^{i}\supseteq\Theta^{i}(\sigma)$, $\lim_{t\rightarrow\infty}\mu_{t}^{i}\left(U^{i}\right)=1$,
a.s.-$\mathbf{P}^{\mu_{0},\phi}$ in $\mathcal{H}$. 
\end{lem}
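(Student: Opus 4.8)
The plan is to fix a player $i$, suppress the superscript $i$ throughout, and prove the equivalent statement that $\mu_t(\Theta\setminus U)\to0$ a.s.-$\mathbf P^{\mu_0,\phi}$ on $\mathcal H$ for every open $U\supseteq\Theta(\sigma)$. Writing $W_t(\theta):=\prod_{\tau<t}Q_\theta(y_\tau\mid s_\tau,x_\tau)$, Bayes' rule gives $\mu_t(\Theta\setminus U)=\int_{\Theta\setminus U}W_t\,d\mu_0\big/\int_\Theta W_t\,d\mu_0\le\int_{\Theta\setminus U}W_t\,d\mu_0\big/\int_V W_t\,d\mu_0$ for any measurable $V$, so it suffices to produce a $V$ making this ratio vanish. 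I would first record a deterministic gap: since $\Theta(\sigma)$ is compact (Lemma \ref{lemma:Theta-1}) and sits inside the open $U$, the set $\Theta\setminus U$ is compact and disjoint from $\Theta(\sigma)$; because $\theta\mapsto K(\sigma,\theta)$ is lower semicontinuous (each $\theta\mapsto-\ln Q_\theta(y\mid s,x)$ is continuous where $Q_\theta>0$ and tends to $+\infty$ as $Q_\theta\to0$, using $-\ln0=\infty$), it attains its minimum over $\Theta\setminus U$ at a point outside $\Theta(\sigma)$, so $c^*:=\inf_{\Theta\setminus U}K(\sigma,\cdot)>m:=\min_\Theta K(\sigma,\cdot)$, with $m<\infty$ by Assumption 1(iii). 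Fix $\delta=(c^*-m)/3$.

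The key probabilistic step is to show that the empirical frequencies $\hat p_t(s,x,y):=\tfrac1t\#\{\tau<t:(s_\tau,x_\tau,y_\tau)=(s,x,y)\}$ converge a.s. on $\mathcal H$ to $\bar p(s,x,y):=p_S(s)\,\sigma(x\mid s)\,Q_\sigma(y\mid s,x)$. This is where the endogenous, non-i.i.d. data (issue (i)) is handled: conditional on the history through $\tau$, the probability that $(s_\tau,x_\tau,y_\tau)=(s,x,y)$ equals $p_S(s)\,\sigma_\tau(x\mid s)\,Q_{\sigma_\tau}(y\mid s,x)$ by the definitions of the intended strategy and of $Q_{\sigma_\tau}$ in (\ref{eq:Q_sigma-1}); hence the indicators minus these conditional means are bounded martingale differences, whose Cesàro averages vanish a.s. by the martingale strong law, while the Cesàro averages of the conditional means converge to $\bar p(s,x,y)$ since $\sigma_\tau\to\sigma$ on $\mathcal H$ and $Q_{(\cdot)}$ is continuous in the strategy. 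As there are finitely many triples, setting $\Lambda_t(\theta):=-\sum_{s,x,y}\hat p_t(s,x,y)\ln Q_\theta(y\mid s,x)$ and $H:=-\sum_{s,x,y}\bar p(s,x,y)\ln Q_\sigma(y\mid s,x)$, one has $W_t(\theta)=\exp(-t\Lambda_t(\theta))$ and $\Lambda_t(\theta)\to\Lambda_\infty(\theta):=K(\sigma,\theta)+H$ pointwise; since $H$ is constant in $\theta$, $\arg\min_\Theta\Lambda_\infty=\Theta(\sigma)$.

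What remains is a Laplace-type estimate. For the denominator, Assumption 1(iii) supplies, near a minimizer $\theta^*\in\Theta(\sigma)$ (where $K(\sigma,\theta^*)=m<\infty$ forces $Q_{\theta^*}>0$ on the support of $\bar p$), a parameter $\theta'$ with $Q_{\theta'}(y\mid s,x)>0$ for every feasible $(s,x,y)$ and, by continuity of $K(\sigma,\cdot)$ at such a point, $K(\sigma,\theta')<m+\delta/2$. Taking $V$ a small open ball around $\theta'$ on which $Q_\theta$ stays bounded away from $0$ over feasible triples, $\Lambda_t\to\Lambda_\infty$ uniformly on $\bar V$, so $\sup_V\Lambda_t\le m+H+\delta$ eventually and the denominator is $\ge\mu_0(V)\,e^{-t(m+H+\delta)}$ with $\mu_0(V)>0$ (full-support prior). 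For the numerator I cover the compact $\Theta\setminus U$ by finitely many balls, bounding $-\ln Q_\theta$ below by $-\ln$ of its continuous supremum on each ball; letting the balls shrink and using $\hat p_t\to\bar p$ with lower semicontinuity of $\Lambda_\infty$ gives $\liminf_t\inf_{\Theta\setminus U}\Lambda_t\ge m+H+2\delta$, so the numerator is $\le e^{-t(m+H+3\delta/2)}$ eventually, with possibly-infinite values of $K$ (issue (iii)) only shrinking the integrand and thus harmless. Dividing, the ratio is $\le\mu_0(V)^{-1}e^{-t\delta/2}\to0$, whence $\mu_t(\Theta\setminus U)\to0$ and $\mu_t(U)\to1$. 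I expect the main obstacle to be precisely this last step: converting the pointwise convergence $\Lambda_t\to\Lambda_\infty$ into two-sided integral bounds when the exponent can be infinite and the data are not i.i.d.\ --- which is exactly where the lower semicontinuity, the covering argument, the martingale strong law, and the richness Assumption 1(iii) (yielding a good full-support parameter for the denominator, and with it the fact that the limiting beliefs depend on $\sigma$ alone rather than on the path, issue (ii)) all come together.
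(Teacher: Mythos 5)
Your proposal is correct, and its overall architecture coincides with the paper's: a Bayes-ratio (Laplace-type) bound with a lower estimate of the integral near $\Theta^{i}(\sigma)$ and an upper estimate over the complement, a martingale strong law to show that the empirical frequencies of $(s,x,y)$ converge to $p_{S}(s)\sigma(x\mid s)Q_{\sigma}(y\mid s,x)$ despite the endogenous, non-i.i.d.\ data, and Assumption 1(iii) to manufacture a prior-positive neighborhood of a near-minimizer on which the likelihood is bounded away from zero. Two localized choices differ. First, you work with the raw average log-likelihood $\Lambda_{t}(\theta)=-\sum_{z}\hat p_{t}(z)\ln Q_{\theta}(z)$ rather than normalizing by $\prod_{\tau}Q_{\sigma_{\tau}}(y_{\tau}\mid s_{\tau},x_{\tau})$ as the paper does to make the exponent converge to $-K^{i}(\sigma,\cdot)$; since the normalization cancels in the ratio, this only shifts the limit by the constant $H$ and spares you the paper's $\kappa_{1t},\kappa_{2t}$ bookkeeping. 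Second, and more substantively, the uniform lower bound on $\Lambda_{t}$ over $\Theta\setminus U$ is obtained by a classical Wald-type argument: a finite cover by balls, replacing $Q_{\theta}$ by its upper envelope on each ball, and letting the radii shrink using lower semicontinuity of $K^{i}(\sigma,\cdot)$. The paper instead splits the complement by a threshold $\xi$ on $\min_{z}Q_{\theta}(z)$ over the support (Claims B(ii)--(iii)): parameters with some $Q_{\theta}(z)<\xi$ are killed by the $\left(p_{L}^{i}/2\right)\ln\xi$ penalty, and on the remainder $\ln Q_{\theta}$ is uniformly bounded so frequency convergence applies directly. Both devices deliver the needed uniformity, including for parameters with $K^{i}(\sigma,\theta)=\infty$; yours is arguably the more standard consistency argument, the paper's avoids the covering construction. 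The only points you should tighten in a write-up are cosmetic: cap $\delta$ when $c^{*}=\infty$, and make explicit that the ball $V$ is shrunk enough that $\sup_{V}K^{i}(\sigma,\cdot)$, not just $K^{i}(\sigma,\theta')$, lies below $m+\delta$.
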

\begin{proof}
See the Appendix.
\end{proof}
\medskip{}

The sketch of the proof of Lemma \ref{lem:Berk} is as follows (we
omit the $i$ subscript to ease the notational burden). Consider an
arbitrary $\epsilon>0$ and an open set $\Theta_{\epsilon}(\sigma)\subseteq\Theta$
defined as the points which are within $\epsilon$ distance of $\Theta(\sigma)$.
The time $t$ posterior over the complement of $\Theta_{\epsilon}(\sigma)$,
$\mu_{t}(\Theta\setminus\Theta_{\epsilon}(\sigma))$, can be expressed
as
\[
\frac{\int_{\Theta\setminus\Theta_{\epsilon}(\sigma)}\prod_{\tau=0}^{t-1}Q_{\theta}(y_{\tau}\mid s_{\tau},x_{\tau})\mu_{0}(d\theta)}{\int_{\Theta}\prod_{\tau=0}^{t-1}Q_{\theta}(y_{\tau}\mid s_{\tau},x_{\tau})\mu_{0}(d\theta)}=\frac{\int_{\Theta\setminus\Theta_{\epsilon}(\sigma)}e^{tK_{t}(\theta)}\mu_{0}(d\theta)}{\int_{\Theta}e^{tK_{t}(\theta)}\mu_{0}(d\theta)}
\]
where $K_{t}(\theta)$ equals minus the log-likelihood ratio, $-\frac{1}{t}\sum_{\tau=0}^{t-1}\ln\frac{Q_{\sigma_{\tau}}(y_{\tau}\mid s_{\tau},x_{\tau})}{Q_{\theta}(y_{\tau}\mid s_{\tau},x_{\tau})}$.
This expression and straightforward algebra implies that
\[
\mu_{t}(\Theta\setminus\Theta_{\epsilon}(\sigma))\leq\frac{\int_{\Theta\setminus\Theta_{\epsilon}(\sigma)}e^{t\left(K_{t}(\theta)+K(\sigma,\theta_{0})+\delta\right)}\mu_{0}(d\theta)}{\int_{\Theta_{\eta}(\sigma)}e^{t\left(K_{t}(\theta)+K(\sigma,\theta_{0})+\delta\right)}\mu_{0}(d\theta)}
\]
for any $\delta>0$ and $\theta_{0}\in\Theta(\sigma)$ and $\eta>0$
taken to be ``small''. Roughy speaking, the integral in the numerator
in the RHS is taken over points which are ``$\epsilon$-separated''
from $\Theta(\sigma)$, whereas the integral in the denominator is
taken over points which are ``$\eta$-close'' to $\Theta(\sigma)$.

Intuitively, if $K_{t}(\cdot)$ behaves asymptotically like $-K(\sigma,\cdot)$,
there exist sufficiently small $\delta>0$ and $\eta>0$ such that
$K_{t}(\theta)+K(\sigma,\theta_{0})+\delta$ is negative for all $\theta$
which are ``$\epsilon$-separated'' from $\Theta(\sigma)$, and
positive for all $\theta$ which are ``$\eta$-close'' to $\Theta(\sigma)$.
Thus, the numerator converges to zero, whereas the denominator diverges
to infinity, provided that $\Theta_{\eta}(\sigma)$ has positive measure
under the prior.

The nonstandard part of the proof consists of establishing that $\Theta_{\eta}(\sigma)$
has positive measure under the prior, which relies on Assumption 1,
and that indeed $K_{t}(\cdot)$ behaves asymptotically like $-K(\sigma,\cdot)$.
By virtue of Fatou's lemma, for $\theta\in\Theta_{\eta}(\sigma)$
it suffices to show almost sure pointwise convergence of $K_{t}(\theta)$
to $-K(\sigma,\theta)$; this is done in Claim B(i) in the Appendix
and relies on a LLN argument for non-iid variables. On the other hand,
over $\theta\in\Theta\setminus\Theta_{\epsilon}(\sigma)$, we need
to control the asymptotic behavior of $K_{t}(.)$ uniformly to be
able to interchange the limit and integral. In Claims B(ii) and B(iii)
in the Appendix, we establish that there exists $\alpha>0$ such that
asymptotically and over $\Theta\setminus\Theta_{\epsilon}(\sigma)$,
$K_{t}(\cdot)<-K(\sigma,\theta_{0})-\alpha$.

While Lemma \ref{lem:Berk} implies that the \emph{support} of posteriors
converges, posteriors need not converge. We can always find, however,
a subsequence of posteriors that converges. By continuity of behavior
in beliefs and the assumption that players are myopic, the stable
strategy profile must be statically optimal. Thus, we obtain the following
characterization of the set of stable strategy profiles when players
follow optimal policies.

\bigskip{}

\begin{thm}
\label{theo:Stability_implies_equilibrium}Suppose that a strategy
profile $\sigma$ is stable under an optimal policy profile for a
perturbed game. Then $\sigma$ is a Berk-Nash equilibrium of the perturbed
game.
\end{thm}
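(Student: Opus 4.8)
The plan is to fix a history along which stability occurs, extract a subsequential limit of each player's posterior, and verify that this limiting belief is an equilibrium belief in the sense of Definition \ref{def:equilibrium-1} applied to the perturbed game: it is supported on $\Theta^{i}(\sigma)$ (condition (ii)) and it rationalizes $\sigma^{i}$ in the perturbed game (condition (i)). First I would invoke stability: by Definition \ref{def:stability} the set $\mathcal{H}=\{h:\sigma_{t}(h)\to\sigma\}$ satisfies $\mathbf{P}^{\mu_{0},\phi}(\mathcal{H})>0$. Since the policy profile is optimal, Lemma \ref{lem:Berk} applies to $\mathcal{H}$, so on a $\mathbf{P}^{\mu_{0},\phi}$-full-measure subset of $\mathcal{H}$ we have $\mu_{t}^{i}(U^{i})\to1$ for every open $U^{i}\supseteq\Theta^{i}(\sigma)$ and every $i$. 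I would pick a single history $h$ in this subset (nonempty because $\mathbf{P}^{\mu_{0},\phi}(\mathcal{H})>0$); at this $h$ both $\sigma_{t}(h)\to\sigma$ and the belief-concentration property hold. Because $\Theta^{i}$ is compact (Assumption 1(i)), $\Delta(\Theta^{i})$ is weak*-compact, and extracting successively over the finitely many players yields a single subsequence $(t_{k})$ along which $\mu_{t_{k}}^{i}(h)\to\mu^{i}$ for some $\mu^{i}\in\Delta(\Theta^{i})$, simultaneously for all $i$. These $\mu^{i}$ are the candidate equilibrium beliefs.

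For condition (ii), I would show $\mu^{i}(\Theta^{i}(\sigma))=1$. Recall that $\Theta^{i}(\sigma)$ is compact, hence closed, by Lemma \ref{lemma:Theta-1}(ii). Its complement is the increasing union of the open sets $G_{n}=\{\theta^{i}:d(\theta^{i},\Theta^{i}(\sigma))>1/n\}$, and each $G_{n}$ is contained in the complement of the open neighborhood $\{d(\cdot,\Theta^{i}(\sigma))<1/(n+1)\}$ of $\Theta^{i}(\sigma)$, so the concentration property at $h$ gives $\mu_{t_{k}}^{i}(h)(G_{n})\to0$. Since $G_{n}$ is open, weak* convergence and the portmanteau inequality yield $\mu^{i}(G_{n})\le\liminf_{k}\mu_{t_{k}}^{i}(h)(G_{n})=0$. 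Letting $n\to\infty$ gives $\mu^{i}(\Theta^{i}\setminus\Theta^{i}(\sigma))=0$, i.e. $\mu^{i}\in\Delta(\Theta^{i}(\sigma))$, which is exactly condition (ii).

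For condition (i) I would exploit that the perturbations render perturbed behavior continuous in beliefs. For each $(x^{i},s^{i})$ the expected payoff $E_{\bar{Q}_{\mu^{i}}^{i}(\cdot\mid s^{i},x^{i})}[\pi^{i}(x^{i},Y^{i})]$ is linear in $\mu^{i}$, and by Assumption 1(ii) together with the finiteness of $\mathbb{Y}^{i}$ it is continuous in $\mu^{i}$ under the weak* topology. Because $P_{\xi^{i}}$ is absolutely continuous with respect to Lebesgue measure, the set of perturbations producing ties in $\Psi^{i}(\mu^{i},s^{i},\cdot)$ has $P_{\xi}$-measure zero; a standard random-utility argument then makes $\mu^{i}\mapsto P_{\xi}(\xi^{i}:x^{i}\in\Psi^{i}(\mu^{i},s^{i},\xi^{i}))$ continuous in $\mu^{i}$. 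Since the policy is optimal ($\phi_{t}^{i}\in\Psi^{i}$) and ties are $P_{\xi}$-null, the intended strategy in (\ref{eq:intended}) coincides with this optimal-choice probability evaluated at the current belief, i.e. $\sigma_{t}^{i}(h)(x^{i}\mid s^{i})=P_{\xi}(\xi^{i}:x^{i}\in\Psi^{i}(\mu_{t}^{i}(h),s^{i},\xi^{i}))$. Passing to the limit along $(t_{k})$ and using $\sigma_{t}^{i}(h)\to\sigma^{i}$ together with continuity gives $\sigma^{i}(x^{i}\mid s^{i})=P_{\xi}(\xi^{i}:x^{i}\in\Psi^{i}(\mu^{i},s^{i},\xi^{i}))$, which is precisely optimality of $\sigma^{i}$ given $\mu^{i}$ in the perturbed game. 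Combined with (ii), this shows $\sigma$ is a Berk-Nash equilibrium of the perturbed game.

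I expect the delicate step to be condition (i), specifically establishing continuity of the perturbed choice probabilities in beliefs and the identity relating the intended strategy to them. This is exactly where the absolute continuity of $P_{\xi}$ (making ties negligible) and the continuity of expected payoffs in $\mu^{i}$ do the work, and it is the reason perturbations are introduced: without them behavior need not be continuous in beliefs, so convergence of $\mu_{t_{k}}^{i}(h)$ would not transmit to convergence of optimal behavior. By contrast, the support statement (ii) is a routine weak-convergence argument once Lemma \ref{lem:Berk} is available.
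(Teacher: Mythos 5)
Your proposal is correct and follows essentially the same route as the paper's proof: fix a history in the positive-measure event where Lemma \ref{lem:Berk} applies, extract a weak* convergent subsequence of posteriors by compactness of $\Delta(\Theta^{i})$, establish support in $\Theta^{i}(\sigma)$ via a portmanteau argument (the paper uses the closed-set direction by contradiction, you use the open-set direction directly---these are equivalent), and obtain optimality by combining the a.s.\ single-valuedness of $\Psi^{i}$ under the absolutely continuous perturbation with continuity of the perturbed choice probabilities in beliefs (the paper's Claim A(iv)). No gaps.
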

\begin{proof}
Let $\phi$ denote the optimal policy function under which $\sigma$
is stable. By Lemma \ref{lem:Berk}, there exists $\mathcal{H}\subseteq\mathbb{H}$
with $\mathbf{P}^{\mu_{0},\phi}\left(\mathcal{H}\right)>0$ such that,
for all $h\in\mathcal{H}$, $\lim_{t\rightarrow\infty}\sigma_{t}(h)=\sigma$
and $\lim_{t\rightarrow\infty}\mu_{t}^{i}\left(U^{i}\right)=1$ for
all $i\in I$ and all open sets $U^{i}\supseteq\Theta^{i}(\sigma)$;
for the remainder of the proof, fix any $h\in\mathcal{H}$. For all
$i\in I$, compactness of $\Delta(\Theta^{i})$ implies the existence
of a subsequence, which we denote as $(\mu_{t(j)}^{i})_{j}$, such
that $\mu_{t(j)}^{i}$ converges (weakly) to $\mu_{\infty}^{i}$ (the
limit could depend on $h$). We conclude by showing, for all $i\in I$: 

(i) $\mu_{\infty}^{i}\in\Delta(\Theta^{i}(\sigma))$: Suppose not,
so that there exists $\hat{\theta}^{i}\in supp(\mu_{\infty}^{i})$
such that $\hat{\theta}^{i}\notin\Theta^{i}(\sigma)$. Then, since
$\Theta^{i}(\sigma)$ is closed (by Lemma \ref{lemma:Theta-1}), there
exists an open set $U^{i}\supset\Theta^{i}(\sigma)$ with closure
$\bar{U}^{i}$ such that $\hat{\theta}^{i}\notin\bar{U}^{i}$. Then
$\mu_{\infty}^{i}(\bar{U}^{i})<1$, but this contradicts the fact
that $\mu_{\infty}^{i}\left(\bar{U}^{i}\right)\geq\limsup_{j\rightarrow\infty}\mu_{t(j)}^{i}\left(\bar{U}^{i}\right)\geq\lim_{j\rightarrow\infty}\mu_{t(j)}^{i}\left(U^{i}\right)=1$,
where the first inequality holds because $\bar{U}^{i}$ is closed
and $\mu_{t(j)}^{i}$ converges (weakly) to $\mu_{\infty}^{i}$.

(ii) $\sigma^{i}$ is optimal\emph{ }for the perturbed game given\emph{
$\mu_{\infty}^{i}\in\Delta(\Theta^{i})$}: 
\begin{align*}
\sigma^{i}(x^{i}\mid s^{i}) & =\lim_{j\rightarrow\infty}\sigma_{t(j)}^{i}(h)(x^{i}|s^{i})=\lim_{j\rightarrow\infty}P_{\xi}\left(\xi^{i}:x^{i}\in\Psi^{i}(\mu_{t(j)}^{i},s^{i},\xi^{i})\right)\\
 & =P_{\xi}\left(\xi^{i}:x^{i}\in\Psi^{i}(\mu_{\infty}^{i},s^{i},\xi^{i})\right),
\end{align*}
where the second equality follows because $\phi^{i}$ is optimal and
$\Psi^{i}$ is single-valued, a.s.- $P_{\xi^{i}}$,\footnote{$\Psi^{i}$ is single-valued a.s.-$\ensuremath{P_{\xi^{i}}}$ because
the set of $\xi^{i}$ such that $\#\Psi^{i}(\mu^{i},s^{i},\xi^{i})>1$
is of dimension lower than $\#\mathbb{X}^{i}$ and, by absolute continuity
of $P_{\xi^{i}}$, this set has measure zero.} and the third equality follows from a standard continuity argument.
\end{proof}

\subsection{A converse result}

Theorem \ref{theo:Stability_implies_equilibrium} provides our main
justification for Berk-Nash equilibria: any strategy profile that
is not an equilibrium cannot represent limiting behavior of optimizing
players. Theorem \ref{theo:Stability_implies_equilibrium}, however,
does not imply that behavior stabilizes. It is well known that convergence
is not guaranteed for Nash equilibrium, which is a special case of
Berk-Nash equilibrium.\footnote{\citet{jordan1993three} shows that non-convergence is robust to the
choice of initial conditions; \citet{benaim1999mixed} replicate this
finding for the perturbed version of Jordan's game. In the game-theory
literature, general global convergence results have only been obtained
in special classes of games---e.g. zero-sum, potential, and supermodular
games (\citealp{hofbauer2002global}).} Thus, some assumption needs to be relaxed to prove convergence for
general games. \citet{fudenberg1993learning} show that a converse
for the case of Nash equilibrium can be obtained by relaxing optimality
and allowing players to make vanishing optimization mistakes.\bigskip{}

\begin{defn}
A policy profile $\phi$ is \textbf{asymptotically optimal} if there
exists a positive real-valued sequence\emph{ $(\varepsilon_{t})_{t}$
}with\emph{ $\lim_{t\rightarrow\infty}\varepsilon_{t}=0$ }such that,
for all\emph{ $i\in I$, }all\emph{ }$(\mu^{i},s^{i},\xi^{i})\in\Delta(\Theta^{i})\times\mathbb{S}^{i}\times\Xi^{i}$\emph{,
}all\emph{ $t$,} and all $x^{i}\in\mathbb{X}^{i}$,
\end{defn}
\[
E_{\bar{Q}_{\mu^{i}}^{i}(\cdot\mid s^{i},x_{t}^{i})}\left[\pi^{i}(x_{t}^{i},Y^{i})\right]+\xi^{i}(x_{t}^{i})\geq E_{\bar{Q}_{\mu^{i}}^{i}(\cdot\mid s^{i},x^{i})}\left[\pi^{i}(x^{i},Y^{i})\right]+\xi^{i}(x^{i})-\varepsilon_{t}
\]
where $x_{t}^{i}=\phi_{t}^{i}(\mu^{i},s^{i},\xi^{i})$.\bigskip{}

Fudenberg and Kreps' (1993) insight is to suppose that players are
convinced early on that the equilibrium strategy is the right one
to play, and continue to play this strategy unless they have strong
enough evidence to think otherwise. And, as they continue to play
the equilibrium strategy, the evidence increasingly convinces them
that it is the right thing to do. This idea, however, need not work
for Berk-Nash equilibrium because beliefs may not converge if the
model is misspecified (see \citet{berk1966limiting} for an example).
If the game is weakly identified, however, Lemma \ref{lem:Berk} and
Fudenberg and Kreps' (1993) insight can be combined to obtain the
following converse of Theorem \ref{theo:Stability_implies_equilibrium}.\bigskip{}

\emph{Erratum (11/19/2019): We thank Yuichi Yamamoto for pointing
out that the statement of Theorem} \emph{\ref{Theo:converse-1} should
be corrected as follows:}\bigskip{}

\begin{thm}
\label{Theo:converse-1}\emph{Suppose that $\sigma$ is a Berk-Nash
equilibrium of a perturbed game that is weakly identified given $\sigma$
and let }$(\bar{\mu}^{i})_{i\in I}$ be a belief profile that supports
$\sigma$ as an equilibrium\emph{. Then, for any profile of priors
$\mu_{0}$ satisfying $\mu_{0}^{i}(\cdot|\Theta^{i}(\sigma))=\bar{\mu}^{i}$
for all $i\in I$ and for any $a>0$, there exists an asymptotically
optimal policy profile $\phi$ such that $P^{\mu_{0},\phi}\left(\lim_{t\rightarrow\infty}\left\Vert \sigma_{t}(h)-\sigma\right\Vert =0\right)>1-a$.}
\end{thm}
\begin{proof}
See Online Appendix \ref{sec:converse}.\footnote{The statement about the prior highlights that we are not choosing
the prior to be degenerate in a way that would make the result trivial.}
\end{proof}

\section{\label{subsec:Discussion}Discussion}

\emph{\indent$\textsc{importance of assumption 1. }$} The following
example illustrates that equilibrium may not exist and Lemma \ref{lem:Berk}
fails if Assumption 1 does not hold.\emph{ }A single agent chooses
action $x\in\{A,B\}$ and obtains outcome $y\in\{0,1\}$. The agent's
model is parameterized by $\theta=(\theta_{A},\theta_{B})$, where
$Q_{\theta}(y=1\mid A)=\theta_{A}$ and $Q_{\theta}(y=1\mid B)=\theta_{B}$.
The true model is $\theta^{0}=(1/4,3/4)$. The agent, however, is
misspecified and considers only $\theta_{1}=(0,3/4)$ and $\theta_{2}=(1/4,1/4)$
to be possible, i.e., $\Theta=\left\{ \theta_{1},\theta_{2}\right\} $.
In particular, Assumption 1(iii) fails for parameter value $\theta_{1}$.\footnote{Assumption 1(iii) would hold if, for some $\bar{\varepsilon}>0$,
$\theta=(\varepsilon,3/4)$ were also in $\Theta$ for all $0<\varepsilon\leq\bar{\varepsilon}$.} Suppose that $A$ is uniquely optimal for parameter value $\theta_{1}$
and $B$ is uniquely optimal for $\theta_{2}$ (further details about
payoffs are not needed).

A Berk-Nash equilibrium does not exist: If $A$ is played with positive
probability, then the wKLD is infinity at $\theta_{1}$ (i.e., $\theta_{1}$
cannot rationalize $y=1$ given $A$) and $\theta_{2}$ is the best
fit; but then $A$ is not optimal. If $B$ is played with probability
1, then $\theta_{1}$ is the best fit; but then $B$ is not optimal.
In addition, Lemma \ref{lem:Berk} fails: Suppose that the path of
play converges to pure strategy $B$. The best fit given $B$ is $\theta_{1}$,
but the posterior need not converge weakly to a degenerate probability
distribution on $\theta_{1}$; it is possible that, along the path
of play, the agent tried action $A$ and observed $y=1$, in which
case the posterior would immediately assign probability 1 to $\theta_{2}$.

\emph{$\textsc{forward-looking agents}$.} In the dynamic model, we
assumed that players are myopic. In Online Appendix \ref{sec:Non-myopic},
we extend Theorem \ref{theo:Stability_implies_equilibrium} to the
case of non-myopic players who solve a dynamic optimization problem
with beliefs as a state variable. A key fact used in the proof of
Theorem \ref{theo:Stability_implies_equilibrium} is that myopically
optimal behavior is continuous in beliefs. Non-myopic optimal behavior
is also continuous in beliefs, but the issue is that it may not coincide
with myopic behavior in the steady state if players still have incentives
to experiment. We prove the extension by requiring that the game is
weakly identified, which guarantees that players have no incentives
to experiment in steady state.

\emph{$\textsc{large population models.}$} The framework assumes
that there is a fixed number of players but, by focusing on stationary
subjective models, rules out aspects of ``repeated games'' where
players attempt to influence each others' play. In Online Appendix
\ref{sec:Population-models}, we adapt the equilibrium concept to
settings in which there is a population of a large number of agents
in the role of each player, so that agents have negligible incentives
to influence each other's play.

\emph{$\textsc{extensive-form games}$.} Our results hold for an alternative
timing where player $i$ commits to a signal-contingent plan of action
(i.e., a strategy) and observes both the realized signal $s^{i}$
and the consequence $y^{i}$ ex post. In particular, Berk-Nash equilibrium
is applicable to extensive-form games provided that players compete
by choosing contingent plan of actions and know the extensive form.
But the right approach is less clear if players have a misspecified
view of the extensive form (for example, they may not even know the
set of strategies available to them) or if players play the game sequentially
(for example, we would need to define and update beliefs at each information
set). The extension to extensive-form games is left for future work.\footnote{\citet{jehiel1995limited} considers the class of repeated alternating-move
games and assumes that players only forecast a limited number of time
periods into the future; see \citet{jehiel1998learning} for a learning
foundation. \citet{jehiel2007valuation} consider the general class
of extensive form games with perfect information and assume that players
simplify the game by partitioning the nodes into similarity classes.
In both cases, players are required to have correct beliefs, given
their limited or simplified view of the game.}

$\textsc{relationship to bounded rationality literature.}$ By providing
a language that makes the underlying misspecification explicit, we
offer some guidance for choosing between different models of bounded
rationality. For example, we could model the observed behavior of
an instructor in Example 2.3 by directly assuming that she believes
criticism improves performance and praise worsens it.\footnote{This assumption corresponds to a singleton set $\Theta$, thus fixing
beliefs at the outset and leaving no space for learning. This approach
is common in past work that assumes that agents have a misspecified
model but there is no learning about parameter values, e.g., \citet{barberis1998model}.} But extrapolating this observed belief to other contexts may lead
to erroneous conclusions. Instead, we postulate what we think is a
plausible misspecification (i.e., failure to account for regression
to the mean) and then derive beliefs endogenously, as a function of
the context.

We mentioned in the paper several instances of bounded rationality
that can be formalized via misspecified, endogenous learning. Other
examples in the literature can also be viewed as restricting beliefs
using the wKLD measure, but fall outside the scope of our paper either
because interactions are mediated by a price or because the problem
is dynamic (we focus on the repetition of a static problem). For example,
\citet{blume1982learning} and \citet{rabin2010gambler} explicitly
characterize beliefs using the limit of a likelihood function, while
\citet{bray1982learning}, \citet{radner82}, \citet{sargent1993bounded},
and \citet{Evans-book} focus specifically on OLS learning with misspecified
models. \citet{piccione2003modeling}, \citet{eyster2013approach},
and \citet{spiegler2013placebo} study pattern recognition in dynamic
settings and impose consistency requirements on beliefs that could
be interpreted as minimizing the wKLD measure. In the sampling equilibrium
of \citet{osborne1998games} and \citet{spiegler2006}, beliefs may
be incorrect due to learning from a limited sample, rather than from
misspecified learning. Other instances of bounded rationality that
do not seem naturally fitted to misspecified learning, include biases
in information processing due to computational complexity (e.g., \citet{rubinstein1986finite},
\citet{salant2011procedural}), bounded memory (e.g., \citealp{wilson2003bounded}),
self-deception (e.g., \citet{benabou2002self}, \citet{compte2004confidence}),
or sparsity-based optimization (\citet{gabaix2014sparsity}).

\newpage{}

\begin{spacing}{1}

\bibliographystyle{aer}
\bibliography{bibtex}

\end{spacing}

\appendix
\newpage{}

\part*{Appendix}

Let $\mathbb{Z}^{i}=\left\{ (s^{i},x^{i},y^{i})\in\mathbb{S}^{i}\times\mathbb{X}^{i}\times\mathbb{Y}^{i}:y^{i}=f^{i}(x^{i},x^{-i},\omega),x^{-i}\in\mathbb{X}^{-i},\omega\in supp(p_{\Omega\mid S^{i}}(\cdot\mid s^{i}))\right\} $.
For all $z^{i}=(s^{i},x^{i},y^{i})\in\mathbb{Z}^{i}$, define $\bar{P}_{\sigma}^{i}(z^{i})=Q_{\sigma}^{i}(y^{i}\mid s^{i},x^{i})\sigma^{i}(x^{i}\mid s^{i})p_{S^{i}}(s^{i})$.
We sometimes abuse notation and write $Q_{\sigma}^{i}(z^{i})\equiv Q_{\sigma}^{i}(y^{i}\mid s^{i},x^{i})$,
and similarly for $Q_{\theta^{i}}^{i}$. The following claim is used
in the proofs below.\smallskip{}

\textbf{Claim A.} \emph{For all $i\in I$}\textbf{\emph{: (i)}}\emph{
There exists $\theta_{*}^{i}\in\Theta^{i}$ and $\bar{K}<\infty$
such that, $\forall\sigma\in\Sigma$, $K^{i}(\sigma,\theta_{*}^{i})\leq\bar{K}$;
}\textbf{\emph{(ii)}}\emph{ Fix any $\theta^{i}\in\Theta^{i}$ and
$(\sigma_{n})_{n}$ such that $Q_{\theta^{i}}^{i}(z^{i})>0$ $\forall z^{i}\in\mathbb{Z}^{i}$
and $\lim_{n\rightarrow\infty}\sigma_{n}=\sigma$. Then $\lim_{n\rightarrow\infty}K^{i}(\sigma_{n},\theta^{i})=K^{i}(\sigma,\theta^{i})$;
}\textbf{\emph{(iii)}}\emph{ $K^{i}$ is (jointly) lower semicontinuous:
Fix any $(\theta_{n}^{i})_{n}$ and $(\sigma_{n})_{n}$ such that
$\lim_{n\rightarrow\infty}\theta_{n}^{i}=\theta^{i}$, $\lim_{n\rightarrow\infty}\sigma_{n}=\sigma$.
Then $\liminf_{n\rightarrow\infty}K^{i}(\sigma_{n},\theta_{n}^{i})\geq K(\sigma,\theta^{i})$;
}\textbf{\emph{(iv)}}\emph{ Let $\xi^{i}$ be a random vector in $\mathbb{R}^{\#\mathbb{X}^{i}}$
with absolutely continuous probability distribution $P_{\xi}$. Then,
$\forall(s^{i},x^{i})\in\mathbb{S}^{i}\times\mathbb{X}^{i}$, $\mu^{i}\mapsto\sigma^{i}(\mu^{i})(x^{i}\mid s^{i})=P_{\xi}\bigl(\xi^{i}:x^{i}\in\arg\max_{\bar{x}^{i}\in\mathbb{X}^{i}}E_{\bar{Q}_{\mu^{i}}^{i}(\cdot\mid s^{i},\bar{x}^{i})}\left[\pi^{i}(\bar{x}^{i},Y^{i})\right]+\xi^{i}(\bar{x}^{i})\bigr)$
is continuous.}

\smallskip{}

\emph{Proof}. (i) By Assumption 1 and finiteness of $\mathbb{Z}^{i}$,
there exist $\theta_{*}^{i}\in\Theta$ and $\alpha\in(0,1)$ such
that $Q_{\theta_{*}^{i}}^{i}(z^{i})\geq\alpha$ $\forall z^{i}\in\mathbb{Z}^{i}$.
Thus, $\forall\sigma\in\Sigma$, $K(\sigma,\theta_{*})\leq-E_{\bar{P}_{\sigma}^{i}}[\ln Q_{\theta_{*}}^{i}(Z^{i})]\leq-\ln\alpha$. 

(ii) $K^{i}(\sigma_{n},\theta^{i})-K(\sigma,\theta^{i})=\sum_{z^{i}\in\mathbb{Z}^{i}}\bigl\{\bigl(\bar{P}_{\sigma_{n}}^{i}(z^{i})\ln Q_{\sigma_{n}}^{i}(z^{i})-\bar{P}_{\sigma}^{i}(z^{i})\ln Q_{\sigma}^{i}(z^{i})\bigr)+\bigl(\bar{P}_{\sigma}^{i}(z^{i})-\bar{P}_{\sigma_{n}}^{i}(z^{i})\bigr)\ln Q_{\theta^{i}}^{i}(z^{i})\bigr\}$.
The first term in the RHS converges to 0 because $\lim_{n\rightarrow\infty}\sigma_{n}=\sigma$,
$Q_{\sigma}$ is continuous, and $x\ln x$ is continuous $\forall x\in[0,1]$.
The second term converges to 0 because $\lim_{n\rightarrow\infty}\sigma_{n}=\sigma$,
$\bar{P}_{\sigma}^{i}$ is continuous, and $\ln Q_{\theta^{i}}^{i}(z^{i})\in(-\infty,0]$
$\forall z^{i}\in\mathbb{Z}^{i}$. 

(iii) $K^{i}(\sigma_{n},\theta_{n}^{i})-K(\sigma,\theta^{i})=\sum_{z^{i}\in\mathbb{Z}^{i}}\bigl\{\bigl(\bar{P}_{\sigma_{n}}^{i}(z^{i})\ln Q_{\sigma_{n}}^{i}(z^{i})-\bar{P}_{\sigma}^{i}(z^{i})\ln Q_{\sigma}^{i}(z^{i})\bigr)+\bigl(\bar{P}_{\sigma}^{i}(z^{i})\ln Q_{\theta^{i}}^{i}(z^{i})-\bar{P}_{\sigma_{n}}^{i}(z^{i})\ln Q_{\theta_{n}^{i}}^{i}(z^{i})\bigr)\bigr\}$.
The first term in the RHS converges to 0 (same argument as in part
(ii)). The proof concludes by showing that, $\forall z^{i}\in\mathbb{Z}^{i}$,
\begin{equation}
\lim\inf_{n\rightarrow\infty}-\bar{P}_{\sigma_{n}}^{i}(z^{i})\ln Q_{\theta_{n}^{i}}^{i}(z^{i})\geq-\bar{P}_{\sigma}^{i}(z^{i})\ln Q_{\theta^{i}}^{i}(z^{i}).\label{eq:lsm}
\end{equation}
Suppose $\lim\inf_{n\rightarrow\infty}-\bar{P}_{\sigma_{n}}^{i}(z^{i})\ln Q_{\theta_{n}^{i}}^{i}(z^{i})\leq M<\infty$
(if not, (\ref{eq:lsm}) holds trivially). Then either (i) $\bar{P}_{\sigma_{n}}^{i}(z^{i})\rightarrow\bar{P}_{\sigma}^{i}(z^{i})>0$,
in which case (\ref{eq:lsm}) holds with equality (by continuity of
$\theta^{i}\mapsto Q_{\theta^{i}}^{i}$), or (ii) $\bar{P}_{\sigma_{n}}^{i}(z^{i})\rightarrow\bar{P}_{\sigma}^{i}(z^{i})=0$,
in which case (\ref{eq:lsm}) holds because its RHS is 0 (by convention
that $0\ln0=0$) and its LHS is always nonnegative. 

(iv) The proof is standard and, therefore, omitted. $\square$

\smallskip{}

\textbf{Proof of Lemma \ref{lemma:Theta-1}.} Part (i). Note that
\begin{align*}
K^{i}(\sigma,\theta^{i}) & \geq-\sum_{(s^{i},x^{i})\in\mathbb{S}^{i}\times\mathbb{X}^{i}}\ln\bigl(E_{Q_{\sigma}^{i}(\cdot\mid s^{i},x^{i})}\bigl[\frac{Q_{\theta^{i}}^{i}(Y^{i}\mid s^{i},x^{i})}{Q_{\sigma}^{i}(Y^{i}\mid s^{i},x^{i})}\bigr]\bigr)\sigma^{i}(x^{i}\mid s^{i})p_{S^{i}}(s^{i})=0,
\end{align*}
where the inequality follows from Jensen's inequality and the strict
concavity of $\ln(\cdot)$, and it holds with equality if and only
if $Q_{\theta^{i}}^{i}(\cdot\mid s^{i},x^{i})=Q_{\theta^{i}}^{i}(\cdot\mid s^{i},x^{i})$
$\forall(s^{i},x^{i})$ such that $\sigma^{i}(x^{i}\mid s^{i})>0$
(recall that, by assumption, $p_{S^{i}}(s^{i})>0$).

Part (ii). \emph{$\Theta^{i}(\sigma)$ is nonempty}: By Claim A(i),
$\exists K<\infty$ such that the minimizers are in the constraint
set $\{\theta^{i}\in\Theta^{i}:K^{i}(\sigma,\theta^{i})\leq K\}$.
Because $K^{i}(\sigma,\cdot)$ is continuous over a compact set, a
minimum exists.

\emph{$\Theta^{i}(\sigma)$ is upper hemicontinuous (uhc}): Fix any
$(\sigma_{n})_{n}$ and $(\theta_{n}^{i})_{n}$ such that $\lim_{n\rightarrow\infty}\sigma_{n}=\sigma$,
$\lim_{n\rightarrow\infty}\theta_{n}^{i}=\theta^{i}$, and $\theta_{n}^{i}\in\Theta^{i}(\sigma_{n})$
$\forall n$. We show that $\theta^{i}\in\Theta^{i}(\sigma)$ (so
that $\Theta^{i}(\cdot)$ has a closed graph and, by compactness of
$\Theta^{i}$, is uhc). Suppose, to obtain a contradiction, that $\theta^{i}\notin\Theta^{i}(\sigma)$.
By Claim A(i), there exist $\hat{\theta}^{i}\in\Theta^{i}$ and $\varepsilon>0$
such that $K^{i}(\sigma,\hat{\theta}^{i})\leq K^{i}(\sigma,\theta^{i})-3\varepsilon$
and $K^{i}(\sigma,\hat{\theta}^{i})<\infty$. By Assumption 1, $\exists(\hat{\theta}_{j}^{i})_{j}$
with $\lim_{j\rightarrow\infty}\hat{\theta}_{j}^{i}=\hat{\theta}^{i}$
and, $\forall j$, $Q_{\hat{\theta}_{j}^{i}}^{i}(z^{i})>0$ $\forall z^{i}\in\mathbb{Z}^{i}$.
We show that there is an element of the sequence, $\hat{\theta}_{J}^{i}$,
that ``does better'' than $\theta_{n}^{i}$ given $\sigma_{n}$,
which is a contradiction. Because $K^{i}(\sigma,\hat{\theta}^{i})<\infty$,
continuity of $K^{i}(\sigma,\cdot)$ implies that there exists $J$
large enough such that $\left|K^{i}(\sigma,\hat{\theta}_{J}^{i})-K^{i}(\sigma,\hat{\theta}^{i})\right|\leq\varepsilon/2$.
Moreover, Claim A(ii) applied to $\theta^{i}=\hat{\theta}_{J}^{i}$
implies that there exists $N_{\varepsilon,J}$ such that, $\forall n\geq N_{\varepsilon,J}$,
$\left|K^{i}(\sigma_{n},\hat{\theta}_{J}^{i})-K^{i}(\sigma,\hat{\theta}_{J}^{i})\right|\leq\varepsilon/2$.
Thus, $\forall n\geq N_{\varepsilon,J}$, $\bigl|K^{i}(\sigma_{n},\hat{\theta}_{J}^{i})-K^{i}(\sigma,\hat{\theta}^{i})\bigr|\leq\bigl|K^{i}(\sigma_{n},\hat{\theta}_{J}^{i})-K^{i}(\sigma,\hat{\theta}_{J}^{i})\bigr|+\bigl|K^{i}(\sigma,\hat{\theta}_{J}^{i})-K^{i}(\sigma,\hat{\theta}^{i})\bigr|\leq\varepsilon$.
Therefore, 
\begin{equation}
K^{i}(\sigma_{n},\hat{\theta}_{J}^{i})\leq K^{i}(\sigma,\hat{\theta}^{i})+\varepsilon\leq K^{i}(\sigma,\theta^{i})-2\varepsilon.\label{eq:uhc_1-1}
\end{equation}
Suppose $K^{i}(\sigma,\theta^{i})<\infty$. By Claim A(iii), $\exists n_{\varepsilon}\geq N_{\varepsilon,J}$
such that $K^{i}(\sigma_{n_{\varepsilon}},\theta_{n_{\varepsilon}}^{i})\geq K^{i}(\sigma,\theta^{i})-\varepsilon$.
This result and expression (\ref{eq:uhc_1-1}), imply $K^{i}(\sigma_{n_{\varepsilon}},\hat{\theta}_{J}^{i})\leq K(\sigma_{n_{\varepsilon}},\theta_{n_{\varepsilon}}^{i})-\varepsilon$.
But this contradicts $\theta_{n_{\varepsilon}}^{i}\in\Theta^{i}(\sigma_{n_{\varepsilon}})$.
Finally, if $K^{i}(\sigma,\theta^{i})=\infty$, Claim A(iii) implies
that $\exists n_{\varepsilon}\geq N_{\varepsilon,J}$ such that $K^{i}(\sigma_{n_{\varepsilon}},\theta_{n_{\varepsilon}}^{i})\geq2K$,
where $K$ is the bound defined in Claim A(i). But this also contradicts
$\theta_{n_{\varepsilon}}^{i}\in\Theta^{i}(\sigma_{n_{\varepsilon}})$.

\emph{$\Theta^{i}(\sigma)$ is compact}: As shown above, $\Theta^{i}(\cdot)$
has a closed graph, and so $\Theta^{i}(\sigma)$ is a closed set.
Compactness of $\Theta^{i}(\sigma)$ follows from compactness of $\Theta^{i}$.
$\square$

\smallskip{}

\textbf{Proof of Theorem \ref{theo:Existence}. }We prove the result
in two parts. \emph{Part 1}. %
{} We show existence of equilibrium in the perturbed game (defined in
Section \ref{subsec:Perturbed-game}). Let $\Gamma:\times_{i\in I}\Delta(\Theta^{i})\rightrightarrows\times_{i\in I}\Delta(\Theta^{i})$
be a correspondence such that, $\forall\mu=(\mu^{i})_{i\in I}\in\times_{i\in I}\Delta(\Theta^{i})$,
$\Gamma(\mu)=\times_{i\in I}\Delta\left(\Theta^{i}(\sigma(\mu))\right),$
where $\sigma(\mu)=(\sigma^{i}(\mu^{i}))_{i\in I}\in\Sigma$ and is
defined as
\begin{equation}
\sigma^{i}(\mu^{i})(x^{i}|s^{i})=P_{\xi}\left(\xi^{i}:x^{i}\in\arg\max_{\bar{x}^{i}\in\mathbb{X}^{i}}E_{\bar{Q}_{\mu^{i}}^{i}(\cdot\mid s^{i},\bar{x}^{i})}\left[\pi^{i}(\bar{x}^{i},Y^{i})\right]+\xi^{i}(\bar{x}^{i})\right)\label{eq:sigma(mu)_perturbed}
\end{equation}
 $\forall(x^{i},s^{i})\in\mathbb{X}^{i}\times\mathbb{S}^{i}$. Note
that if $\exists\mu_{\ast}\in\times_{i\in I}\Delta(\Theta^{i})$ such
that $\mu_{\ast}\in\Gamma(\mu_{\ast})$, then $\sigma_{*}\equiv\left(\sigma^{i}(\mu_{*}^{i})\right)_{i\in I}$
is an equilibrium of the perturbed game. We show that such $\mu_{\ast}$
exists by checking the conditions of the Kakutani-Fan-Glicksberg fixed-point
theorem: (i) $\times_{i\in I}\Delta(\Theta^{i})$ is compact, convex
and locally convex Hausdorff: The set $\Delta(\Theta^{i})$ is convex,
and since $\Theta^{i}$ is compact $\Delta(\Theta^{i})$ is also compact
under the weak topology (\citet{aliprantis2006infinite}, Theorem
15.11). By Tychonoff's theorem, $\times_{i\in I}\Delta(\Theta^{i})$
is compact too. Finally, the set is also locally convex under the
weak topology;\footnote{This last claim follows since the weak topology is induced by a family
of semi-norms of the form: $\rho(\mu,\mu')=|E_{\mu}[f]-E_{\mu'}[f]|$
for $f$ continuous and bounded for any $\mu$ and $\mu'$ in $\Delta(\Theta^{i})$.} (ii) $\Gamma$ has convex, nonempty images: It is clear that $\Delta\left(\Theta^{i}(\sigma(\mu))\right)$
is convex valued $\forall\mu$. Also, by Lemma \ref{lemma:Theta-1},
$\Theta^{i}(\sigma(\mu))$ is nonempty $\forall\mu$; (iii) $\Gamma$
has a closed graph:\emph{ }Let $(\mu_{n},\hat{\mu}_{n})_{n}$ be such
that $\hat{\mu}_{n}\in\Gamma(\mu_{n})$ and $\mu_{n}\rightarrow\mu$
and $\hat{\mu}_{n}\rightarrow\hat{\mu}$ (under the weak topology).
By Claim A(iv), $\mu^{i}\mapsto\sigma^{i}(\mu^{i})$ is continuous.
Thus, $\sigma_{n}\equiv\left(\sigma^{i}(\mu_{n}^{i})\right)_{i\in I}\rightarrow\sigma\equiv\left(\sigma^{i}(\mu^{i})\right)_{i\in I}$.
By Lemma \ref{lemma:Theta-1}, $\sigma\mapsto\Theta^{i}(\sigma)$
is uhc; thus, by Theorem 17.13 in \citet{aliprantis2006infinite},
$\sigma\mapsto\times_{i\in I}\Delta\left(\Theta^{i}(\sigma)\right)$
is also uhc. Therefore, $\hat{\mu}\in\times_{i\in I}\Delta\left(\Theta^{i}(\sigma)\right)=\Gamma(\mu)$. 

\emph{Part 2}. Fix a sequence of perturbed games indexed by the probability
of perturbations $(P_{\xi,n})_{n}$. By Part 1, there is a corresponding
sequence of fixed points $(\mu_{n})_{n}$, such that $\mu_{n}\in\times_{i\in I}\Delta\left(\Theta^{i}(\sigma_{n})\right)$
$\forall n$, where $\sigma_{n}\equiv(\sigma^{i}(\mu_{n}^{i},P_{\xi,n}))_{i\in I}$
(see equation (\ref{eq:sigma(mu)_perturbed}), where we now explicitly
account for dependance on $P_{\xi,n}$). By compactness, there exist
subsequences of $(\mu_{n})_{n}$ and $(\sigma_{n})_{n}$ that converge
to $\mu$ and $\sigma$, respectively. Since $\sigma\mapsto\times_{i\in I}\Delta\left(\Theta^{i}(\sigma)\right)$
is uhc, then $\mu\in\times_{i\in I}\Delta\left(\Theta^{i}(\sigma)\right)$.
We now show that if we choose $(P_{\xi,n})_{n}$ such that, $\forall\varepsilon>0$,
$\lim_{n\rightarrow\infty}P_{\xi,n}\left(\left\Vert \xi_{n}^{i}\right\Vert \geq\varepsilon\right)=0$,
then $\sigma$ is optimal given $\mu$ in the unperturbed game---this
establishes existence of equilibrium in the unperturbed game. Suppose
not, so that there exist $i,s^{i},x^{i},\hat{x}^{i},$ and $\varepsilon>0$
such that $\sigma^{i}(x^{i}\mid s^{i})>0$ and $E_{\bar{Q}_{\mu^{i}}^{i}(\cdot\mid s^{i},x^{i})}\left[\pi^{i}(x^{i},Y^{i})\right]+4\varepsilon\leq E_{\bar{Q}_{\mu^{i}}^{i}(\cdot\mid s^{i},\hat{x}^{i})}\left[\pi^{i}(\hat{x}^{i},Y^{i})\right]$.
By continuity of $\mu^{i}\mapsto\bar{Q}_{\mu^{i}}^{i}$ and the fact
that $\lim_{n\rightarrow\infty}\mu_{n}^{i}=\mu^{i}$, $\exists n_{1}$
such that, $\forall n\geq n_{1}$, $E_{\bar{Q}_{\mu_{n}^{i}}^{i}(\cdot\mid s^{i},x^{i})}\left[\pi^{i}(x^{i},Y^{i})\right]+2\varepsilon\leq E_{\bar{Q}_{\mu_{n}^{i}}^{i}(\cdot\mid s^{i},\hat{x}^{i})}\left[\pi^{i}(\hat{x}^{i},Y^{i})\right]$.
It then follows from (\ref{eq:sigma(mu)_perturbed}) and $\lim_{n\rightarrow\infty}P_{\xi,n}\left(\left\Vert \xi_{n}^{i}\right\Vert \geq\varepsilon\right)=0$
that $\lim_{n\rightarrow\infty}\sigma^{i}(\mu_{n}^{i},P_{\xi,n})(x^{i}|s^{i})=0$.
But this contradicts $\lim_{n\rightarrow\infty}\sigma^{i}(\mu_{n}^{i},P_{\xi,n})(x^{i}|s^{i})=\sigma^{i}(x^{i}\mid s^{i})>0$.
$\square$

\smallskip{}

\textbf{Proof of Proposition \ref{prop:ABEE}.} In the next paragraph,
we prove the following result: For all $\sigma$ and $\bar{\theta}_{\sigma}^{i}\in\Theta^{i}(\sigma)$,
(a) $Q_{\Omega,\bar{\theta}_{\sigma}^{i}}^{i}(\omega'\mid s^{i})=p_{\Omega\mid S^{i}}(\omega'\mid s^{i})$
for all $s^{i}\in\mathcal{S}^{i},\omega'\in\Omega$ and (b) $Q_{\mathbb{X}^{-i},\bar{\theta}_{\sigma}^{i}}^{i}(x^{-i}\mid\alpha^{i})=\sum_{\omega''\in\Omega}p_{\Omega\mid\mathcal{A}^{i}}(\omega''\mid\alpha^{i})\prod_{j\ne i}\sigma^{j}(x^{j}\mid s^{j}(\omega''))$
for all $\alpha^{i}\in\mathcal{A}^{i},x^{-i}\in\mathbb{X}^{-i}$.
Equivalence between Berk-Nash and ABEE follows immediately from (a),
(b), and the fact that expected utility of player $i$ with signal
$s^{i}$ and beliefs $\bar{\theta}_{\sigma}$ equals $\sum_{\omega'\in\Omega}Q_{\Omega,\bar{\theta}_{\sigma}^{i}}^{i}(\omega'\mid s^{i})\sum_{x^{-i}\in\mathbb{X}^{-i}}Q_{\mathbb{X}^{-i},\bar{\theta}_{\sigma}^{i}}^{i}(x^{-i}\mid\alpha^{i}(\omega'))\pi^{i}(\bar{x}^{i},x^{-i},\omega')$.

Proof of (a) and (b): $-K^{i}(\sigma,\theta^{i})$ equals, up to a
constant, 
\begin{align*}
 & \sum_{s^{i},\tilde{w},\tilde{x}^{-i}}\ln\left(Q_{\Omega,\theta^{i}}^{i}(\tilde{\omega}\mid s^{i})Q_{\mathbb{X}^{-i},\theta^{i}}^{i}(\tilde{x}^{-i}\mid\alpha^{i}(\tilde{\omega}))\right)\prod_{j\ne i}\sigma^{j}(\tilde{x}^{j}\mid s^{j}(\tilde{\omega}))p_{\Omega\mid S^{i}}(\tilde{\omega}\mid s^{i})p_{S^{i}}(s^{i})\\
= & \sum_{s^{i},\tilde{\omega}}\ln\left(Q_{\Omega,\theta^{i}}^{i}(\tilde{\omega}\mid s^{i})\right)p_{\Omega\mid S^{i}}(\tilde{\omega}\mid s^{i})p_{S^{i}}(s^{i})+\sum_{\tilde{x}^{-i},\alpha^{i}\in\mathcal{A}^{i}}\ln\left(Q_{\mathbb{X}^{-i},\theta^{i}}^{i}(\tilde{x}^{-i}\mid\alpha^{i})\right)\sum_{\tilde{\omega}\in\alpha^{i}}\prod_{j\ne i}\sigma^{j}(\tilde{x}^{j}\mid s^{j}(\tilde{\omega}))p_{\Omega}(\tilde{\omega}).
\end{align*}
It is straightforward to check that any parameter value that maximizes
the above expression satisfies (a) and (b). $\square$

\smallskip{}

\textbf{Proof of Lemma \ref{lem:Berk}.} The proof uses Claim B, which
is stated and proven after this proof. It is sufficient to establish
that $\lim_{t\rightarrow\infty}\int_{\Theta^{i}}d^{i}(\sigma,\theta^{i})\mu_{t}^{i}(d\theta^{i})=0$
a.s. in $\mathcal{H}$, where $d^{i}(\sigma,\theta^{i})=\inf_{\hat{\theta}^{i}\in\Theta^{i}(\sigma)}\|\theta^{i}-\hat{\theta}^{i}\|$.
Fix $i\in I$ and $h\in\mathbb{H}$. Then, by Bayes' rule,
\begin{align*}
\int_{\Theta^{i}}d^{i}(\sigma,\theta^{i})\mu_{t}^{i}(d\theta^{i}) & =\frac{\int_{\Theta^{i}}d^{i}(\sigma,\theta^{i})\prod_{\tau=0}^{t-1}\frac{Q_{\theta^{i}}^{i}(y_{\tau}^{i}\mid s_{\tau}^{i},x_{\tau}^{i})}{Q_{\sigma_{\tau}}^{i}(y_{\tau}^{i}\mid s_{\tau}^{i},x_{\tau}^{i})}\mu_{0}^{i}(d\theta^{i})}{\int_{\Theta^{i}}\prod_{\tau=0}^{t-1}\frac{Q_{\theta^{i}}^{i}(y_{\tau}^{i}\mid s_{\tau}^{i},x_{\tau}^{i})}{Q_{\sigma_{\tau}}^{i}(y_{\tau}^{i}\mid s_{\tau}^{i},x_{\tau}^{i})}\mu_{0}^{i}(d\theta^{i})}=\frac{\int_{\Theta^{i}}d^{i}(\sigma,\theta^{i})e^{tK_{t}^{i}(h,\theta^{i})}\mu_{0}^{i}(d\theta^{i})}{\int_{\Theta^{i}}e^{tK_{t}^{i}(h,\theta^{i})}\mu_{0}^{i}(d\theta^{i})},
\end{align*}
where the first equality is well-defined by Assumption 1, full support
of $\mu_{0}^{i}$, and the fact that $\mathbf{P}^{\mu_{0},\phi}\left(\mathcal{H}\right)>0$
implies that all the $Q_{\sigma_{\tau}}^{i}$terms are positive, and
where we define $K_{t}^{i}(h,\theta^{i})=-\frac{1}{t}\sum_{\tau=0}^{t-1}\ln\frac{Q_{\sigma_{\tau}}^{i}(y_{\tau}^{i}\mid s_{\tau}^{i},x_{\tau}^{i})}{Q_{\theta^{i}}^{i}(y_{\tau}^{i}\mid s_{\tau}^{i},x_{\tau}^{i})}$
for the second equality.\footnote{If, for some $\theta^{i}$, $Q_{\theta^{i}}^{i}(y_{\tau}^{i}\mid s_{\tau}^{i},x_{\tau}^{i})=0$
for some $\tau\in\{0,...,t-1\},$ then we define $K_{t}^{i}(h,\theta^{i})=-\infty$
and $\exp\left\{ tK_{t}^{i}(h,\theta^{i})\right\} =0$. } For any $\alpha>0$, define $\Theta_{\alpha}^{i}(\sigma)\equiv\left\{ \theta^{i}\in\Theta^{i}:d^{i}(\sigma,\theta^{i})<\alpha\right\} $.
Then, for all $\varepsilon>0$ and $\eta>0$, $\int_{\Theta^{i}}d^{i}(\sigma,\theta^{i})\mu_{t}^{i}(d\theta^{i})\leq\varepsilon+C\frac{A_{t}^{i}(h,\sigma,\varepsilon)}{B_{t}^{i}(h,\sigma,\eta)}$,
where $C\equiv\sup_{\theta_{1}^{i},\theta_{2}^{i}\in\Theta^{i}}\left\Vert \theta_{1}^{i}-\theta_{2}^{i}\right\Vert <\infty$
(because $\Theta^{i}$ is bounded) and where $A_{t}^{i}(h,\sigma,\varepsilon)=\int_{\Theta^{i}\backslash\Theta_{\varepsilon}^{i}(\sigma)}e^{tK_{t}^{i}(h,\theta^{i})}\mu_{0}^{i}(d\theta^{i})$
and $B_{t}^{i}(h,\sigma,\eta)=\int_{\Theta_{\eta}^{i}(\sigma)}e^{tK_{t}^{i}(h,\theta^{i})}\mu_{0}^{i}(d\theta^{i})$.The
proof concludes by showing that, for all (sufficiently small) $\varepsilon>0$,
$\exists\eta_{\varepsilon}>0$ such that $\lim_{t\rightarrow\infty}A_{t}^{i}(h,\sigma,\varepsilon)/B_{t}^{i}(h,\sigma,\eta_{\varepsilon})=0$.
This result is achieved in several steps. First, $\forall\varepsilon>0$,
define $K_{\varepsilon}^{i}(\sigma)=\inf\left\{ K^{i}(\sigma,\theta^{i})\mid\theta^{i}\in\Theta^{i}\backslash\Theta_{\varepsilon}^{i}(\sigma)\right\} $
and $\alpha_{\varepsilon}=\left(K_{\varepsilon}^{i}(\sigma)-K_{0}^{i}(\sigma)\right)/3$,
where $K_{0}^{i}(\sigma)=\inf_{\theta^{i}\in\Theta^{i}}K^{i}(\sigma,\theta^{i})$.
By continuity of $K^{i}(\sigma,\cdot)$, there exist $\bar{\varepsilon}$
and $\bar{\alpha}$ such that, $\forall\varepsilon\leq\bar{\varepsilon}$,
$0<\alpha_{\varepsilon}\leq\bar{\alpha}<\infty$. Henceforth, let
$\varepsilon\leq\bar{\varepsilon}$. It follows that 
\begin{equation}
K^{i}(\sigma,\theta^{i})\geq K_{\varepsilon}^{i}(\sigma)>K_{0}^{i}(\sigma)+2\alpha_{\varepsilon}\label{eq:K_theta>K_0+2alpha_v2}
\end{equation}
 $\forall\theta^{i}$ such that $d^{i}(\sigma,\theta^{i})\geq\varepsilon$.
Also, by continuity of $K^{i}(\sigma,\cdot)$, $\exists\eta_{\varepsilon}>0$
such that, $\forall\theta^{i}\in\Theta_{\eta_{\varepsilon}}^{i}$,
\begin{equation}
K^{i}(\sigma,\theta^{i})<K_{0}^{i}(\sigma)+\alpha_{\varepsilon}/2.\label{eq:K+theta<K_0+alpha/2}
\end{equation}

Second, let $\hat{\Theta}^{i}=\left\{ \theta^{i}\in\Theta^{i}:Q_{\theta^{i}}^{i}(y_{\tau}^{i}\mid s_{\tau}^{i},x_{\tau}^{i})>0\,\,\mbox{for all}\,\,\mbox{\ensuremath{\tau}}\right\} $
and $\hat{\Theta}_{\eta_{\varepsilon}}^{i}(\sigma)=\hat{\Theta}^{i}\cap\Theta_{\eta_{\varepsilon}}^{i}(\sigma)$.
We now show that $\mu_{0}^{i}(\hat{\Theta}_{\eta_{\varepsilon}}^{i}(\sigma))>0$.
By Lemma \ref{lemma:Theta-1}, $\Theta^{i}(\sigma)$ is nonempty.
Pick any $\theta^{i}\in\Theta^{i}(\sigma)$. By Assumption 1, $\exists(\theta_{n}^{i})_{n}$
in $\Theta^{i}$ such that $\lim_{n\rightarrow\infty}\theta_{n}^{i}=\theta^{i}$
and $Q_{\theta_{n}^{i}}^{i}(y^{i}\mid s^{i},x^{i})>0$ $\forall y^{i}\in f^{i}(\Omega,x^{i},\mathbb{X}^{-i})$
and all $(s^{i},x^{i})\in\mathbb{S}^{i}\times\mathbb{X}^{i}$. In
particular, $\exists\theta_{\bar{n}}^{i}$ such that $d^{i}(\sigma,\theta_{\bar{n}}^{i})<.5\eta_{\varepsilon}$
and, by continuity of $Q_{\cdot}$, there exists an open set $U$
around $\theta_{\bar{n}}^{i}$ such that $U\subseteq\hat{\Theta}_{\eta_{\varepsilon}}^{i}(\sigma)$.
By full support, $\mu_{0}^{i}(\hat{\Theta}_{\eta_{\varepsilon}}^{i}(\sigma))>0$.%
{} Next, note that, 
\begin{align}
\liminf_{t\rightarrow\infty}B_{t}^{i}(h,\sigma,\eta_{\varepsilon})e^{t(K_{0}^{i}(\sigma)+\frac{\alpha_{\varepsilon}}{2})} & \geq\liminf_{t\rightarrow\infty}\int_{\hat{\Theta}_{\eta_{\varepsilon}}^{i}(\sigma)}e^{t(K_{0}^{i}(\sigma)+\frac{\alpha_{\varepsilon}}{2}+K_{t}^{i}(h,\theta^{i}))}\mu_{0}^{i}(d\theta^{i})\,\,\,\,\,\,\,\,\,\,\,\nonumber \\
 & \geq\int_{\hat{\Theta}_{\eta_{\varepsilon}}^{i}(\sigma)}e^{\lim_{t\rightarrow\infty}t(K_{0}^{i}(\sigma)+\frac{\alpha_{\varepsilon}}{2}-K^{i}(\sigma,\theta^{i}))}\mu_{0}^{i}(d\theta^{i})=\infty\label{eq:Bt}
\end{align}
a.s. in $\mathcal{H}$, $ $where the first inequality follows because
$\hat{\Theta}_{\eta_{\varepsilon}}^{i}(\sigma)\subseteq\Theta_{\eta_{\varepsilon}}^{i}(\sigma)$
and $\exp$ is a positive function, the second inequality follows
from Fatou's Lemma and a LLN for non-iid random variables that implies
$\lim_{t\rightarrow\infty}K_{t}^{i}(h,\theta^{i})=-K^{i}(\sigma,\theta^{i})$
$\forall\theta^{i}\in\hat{\Theta}^{i}$, a.s. in $\mathcal{H}$ (see
Claim B(i) below), and the last equality follows from (\ref{eq:K+theta<K_0+alpha/2})
and the fact that $\mu_{0}^{i}(\Theta_{\eta_{\varepsilon}}^{i}(\sigma))>0$.

Next, we consider the term $A_{t}^{i}(h,\sigma,\varepsilon)$. Claims
B(ii) and B(iii) (see below) imply that $\exists T$ such that, $\forall t\geq T$,
$K_{t}^{i}(h,\theta^{i})<-(K_{0}^{i}(\sigma)+(3/2)\alpha_{\varepsilon})$
$\forall\theta^{i}\in\Theta^{i}\backslash\Theta_{\varepsilon}^{i}(\sigma)$,
a.s. in $\mathcal{H}$. Thus, 
\begin{align*}
\lim_{t\rightarrow\infty}A_{t}^{i}(h,\sigma,\varepsilon)e^{t(K_{0}^{i}(\sigma)+\alpha_{\varepsilon})} & =\lim_{t\rightarrow\infty}\int_{\Theta^{i}\backslash\Theta_{\varepsilon}^{i}(\sigma)}e^{t(K_{0}^{i}(\sigma)+\alpha_{\varepsilon}+K_{t}^{i}(h,\theta^{i}))}\mu_{0}^{i}(d\theta^{i})\\
 & \leq\mu_{0}^{i}(\Theta^{i}\backslash\Theta_{\varepsilon}^{i}(\sigma))\lim_{t\rightarrow\infty}e^{-t\alpha_{\varepsilon}/2}=0.
\end{align*}
a.s. in $\mathcal{H}$. The above expression and equation (\ref{eq:Bt})
imply that $\lim_{t\rightarrow\infty}A_{t}^{i}(h,\sigma,\varepsilon)/B_{t}^{i}(h,\sigma,\eta_{\varepsilon})=0$
a.s.-$\mathbf{P}^{\mu_{0},\phi}$. $\square$

\smallskip{}

We state and prove Claim B used in the proof above. For any $\xi>0$,
define $\Theta_{\sigma,\xi}^{i}$ to be the set such that $\theta^{i}\in\Theta_{\sigma,\xi}^{i}$
if and only if $Q_{\theta^{i}}^{i}(y^{i}\mid s^{i},x^{i})\geq\xi$
for all $(s^{i},x^{i},y^{i})$ such that $Q_{\sigma}^{i}(y^{i}\mid s^{i},x^{i})\sigma^{i}(x^{i}\mid s^{i})p_{S^{i}}(s^{i})>0$.\smallskip{}

\textbf{Claim B.} For all $i\in I$\textbf{: (i) }For all $\theta^{i}\in\hat{\Theta}^{i}$,
\emph{$\lim_{t\rightarrow\infty}K_{t}^{i}(h,\theta^{i})=-K^{i}(\sigma,\theta^{i})$,
a.s. in $\mathcal{H}$}; \textbf{(ii)} \emph{There exist $\xi^{*}>0$
and $T_{\xi^{*}}$ such that, $\forall t\geq T_{\xi^{*}}$, }$K_{t}^{i}(h,\theta^{i})<-(K_{0}^{i}(\sigma)+(3/2)\alpha_{\varepsilon})$
$\forall\theta^{i}\notin\Theta_{\sigma,\xi}^{i}$\emph{, a.s. in $\mathcal{H}$};
\textbf{(iii)} \emph{For all $\xi>0$, $\exists\hat{T}_{\xi}$ such
that, $\forall t\geq\hat{T}_{\xi}$, $K_{t}^{i}(h,\theta^{i})<-(K_{0}^{i}(\sigma)+(3/2)\alpha_{\varepsilon})$
$\forall\theta^{i}\in\Theta_{\sigma,\xi}^{i}\backslash\Theta_{\varepsilon}^{i}(\sigma)$,
a.s. in $\mathcal{H}$.}

\smallskip{}

\emph{Proof}: Define $freq_{t}^{i}(z^{i})=\frac{1}{t}\sum_{\tau=0}^{t-1}\boldsymbol{1}_{z^{i}}(z_{\tau}^{i})$
$\forall z^{i}\in\mathbb{Z}^{i}$. $K_{t}^{i}$ can be written as
$K_{t}^{i}(h,\theta^{i})=\kappa_{1t}^{i}(h)+\kappa_{2t}^{i}(h)+\kappa_{3t}^{i}(h,\theta^{i})$,
where $\kappa_{1t}^{i}(h)=-t^{-1}\sum_{\tau=0}^{t-1}\sum_{z^{i}\in\mathbb{Z}^{i}}\left(\boldsymbol{1}_{z^{i}}(z_{\tau}^{i})-\bar{P}_{\sigma_{\tau}}^{i}(z^{i})\right)\ln Q_{\sigma_{\tau}}^{i}(z^{i})$,
$\kappa_{2t}^{i}(h)=-t^{-1}\sum_{\tau=0}^{t-1}\sum_{z^{i}\in\mathbb{Z}^{i}}\bar{P}_{\sigma_{\tau}}^{i}(z^{i})\ln Q_{\sigma_{\tau}}^{i}(z^{i})$,
and $\kappa_{3t}^{i}(h,\theta^{i})=\sum_{z^{i}\in\mathbb{Z}^{i}}freq_{t}^{i}(z^{i})\ln Q_{\theta^{i}}^{i}(z^{i})$.
The statements made below hold almost surely in $\mathcal{H}$, but
we omit this qualification. 

First, we show $\lim_{t\rightarrow\infty}\kappa_{1t}^{i}(h)=0$. Define
$l_{\tau}^{i}(h,z^{i})=\left(\boldsymbol{1}_{z^{i}}(z_{\tau}^{i})-\bar{P}_{\sigma_{\tau}}^{i}(z^{i})\right)\ln Q_{\sigma_{\tau}}^{i}(z^{i})$
and $L_{t}^{i}(h,z^{i})=\sum_{\tau=1}^{t}\tau^{-1}l_{\tau}^{i}(h,z^{i})$
$\forall z^{i}\in\mathbb{Z}^{i}$. Fix any $z^{i}\in\mathbb{Z}^{i}$.
We show that $L_{t}^{i}(\cdot,z^{i})$ converges a.s. to an integrable,
and, therefore, finite function $L_{\infty}^{i}(\cdot,z^{i})$. To
show this, we use martingale convergence results. Let $h^{t}$ denote
the partial history until time $t$. Since $E_{\mathbf{P}^{\mu_{0},\phi}(\cdot\mid h^{t})}\left[l_{t+1}^{i}(h,z^{i})\right]=0$,
then $E_{\mathbf{P}^{\mu_{0},\phi}(\cdot\mid h^{t})}\bigl[L_{t+1}^{i}(h,z^{i})\bigr]=L_{t}^{i}(h,z^{i})$
and so $(L_{t}^{i}(h,z^{i}))_{t}$ is a martingale with respect to
$\mathbf{P}^{\mu_{0},\phi}$. Next, we show that $\sup_{t}E_{\mathbf{P}^{\mu_{0},\phi}}\left[|L_{t}^{i}(h,z^{i})|\right]\leq M$
for $M<\infty$. Note that $E_{\mathbf{P}^{\mu_{0},\phi}}\bigl[\bigl(L_{t}^{i}(h,z^{i})\bigr)^{2}\bigr]=E_{\mathbf{P}^{\mu_{0},\phi}}\bigl[\sum_{\tau=1}^{t}\tau^{-2}\bigl(l_{\tau}^{i}(h,z^{i})\bigr)^{2}+2\sum_{\tau'>\tau}\frac{1}{\tau'\tau}l_{\tau}^{i}(h,z^{i})l_{\tau'}^{i}(h,z^{i})\bigr]$.
Since $(l_{t}^{i})_{t}$ is a martingale difference sequence, for
$\tau'>\tau$, $E_{\mathbf{P}^{\mu_{0},\phi}}\bigl[l_{\tau}^{i}(h,z^{i})l_{\tau'}^{i}(h,z^{i})\bigr]=0$.
Therefore, $E_{\mathbf{P}^{\mu_{0},\phi}}\bigl[\bigl(L_{t}^{i}(h,z^{i})\bigr)^{2}\bigr]=\sum_{\tau=1}^{t}\tau^{-2}E_{\mathbf{P}^{\mu_{0},\phi}}\bigl[\bigl(l_{\tau}^{i}(h,z^{i})\bigr){}^{2}\bigr]$.
Note also that $E_{\mathbf{P}^{\mu_{0},\phi}(\cdot\mid h^{\tau-1})}\bigl[\bigl(l_{\tau}^{i}(h,z^{i})\bigr)^{2}\bigr]\leq\bigl(\ln Q_{\sigma_{\tau}}^{i}(z^{i})\bigr)^{2}Q_{\sigma_{\tau}}^{i}(z^{i})$.
Therefore, by the law of iterated expectations, $E_{\mathbf{P}^{\mu_{0},\phi}}\bigl[\bigl(L_{t}^{i}(h,z^{i})\bigr)^{2}\bigr]\leq\sum_{\tau=1}^{t}\tau^{-2}E_{\mathbf{P}^{\mu_{0},\phi}}\Bigl[\bigl(\ln Q_{\sigma_{\tau}}^{i}(z^{i})\bigr)^{2}Q_{\sigma_{\tau}}^{i}(z^{i})\Bigr]$,
which in turn is bounded above by 1 because $(\ln x)^{2}x\leq1$ $\forall x\in[0,1]$,
where we use the convention that $(\ln0)^{2}0=0$. Therefore, $\sup_{t}E_{\mathbf{P}^{\mu_{0},\phi}}\left[|L_{t}^{i}|\right]\leq1$.
By Theorem 5.2.8 in \citet{Durrett2010}, $L_{t}^{i}(h,z^{i})$ converges
a.s.-$\mathbf{P}^{\mu_{0},\phi}$ to a finite $L_{\infty}^{i}(h,z^{i})$.
Thus, by Kronecker's lemma (\citet{pollard2001}, page 105)\footnote{This lemma implies that for a sequence $(\ell_{t})_{t}$ if $\sum_{\tau}\ell_{\tau}<\infty$,
then $\sum_{\tau=1}^{t}\frac{b_{\tau}}{b_{t}}\ell_{\tau}\rightarrow0$
where $(b_{t})_{t}$ is a nondecreasing positive real valued that
diverges to $\infty$. We can apply the lemma with $\ell_{t}\equiv t^{-1}l_{t}$
and $b_{t}=t$. }, $\lim_{t\rightarrow\infty}\sum_{z^{i}\in\mathbb{Z}^{i}}\bigl\{ t^{-1}\sum_{\tau=1}^{t}\bigl(1_{z^{i}}(z_{\tau}^{i})-\bar{P}_{\sigma_{\tau}}^{i}(z^{i})\bigr)\ln Q_{\sigma_{\tau}}^{i}(z^{i})\bigr\}=0$.
Therefore, $\lim_{t\rightarrow\infty}\kappa_{1t}^{i}(h)=0$.

Next, consider $\kappa_{2t}^{i}(h)$. The assumption that $\lim_{t\rightarrow\infty}\sigma_{t}=\sigma$
and continuity of $Q_{\sigma}^{i}\ln Q_{\sigma}^{i}$ in $\sigma$
imply that $\lim_{t\rightarrow\infty}\kappa_{2t}^{i}(h)=-\sum_{(s^{i},x^{i})\in\mathbb{S}^{i}\times\mathbb{X}^{i}}E_{Q_{\sigma}(\cdot\mid s^{i},x^{i})}\left[\ln Q_{\sigma}^{i}(Y^{i}\mid s^{i},x^{i})\right]\sigma^{i}(x^{i}\mid s^{i})p_{S^{i}}(s^{i})$.

The limits of $\kappa_{1t}^{i},\kappa_{2t}^{i}$ imply that, $\forall\gamma>0$,
$\exists\hat{t}_{\gamma}$ such that, $\forall t\ge\hat{t}_{\gamma}$,
\begin{equation}
\bigl|\kappa_{1t}^{i}(h)+\kappa_{2t}^{i}(h)+\sum_{(s^{i},x^{i})\in\mathbb{S}^{i}\times\mathbb{X}^{i}}E_{Q_{\sigma}(\cdot\mid s^{i},x^{i})}\left[\ln Q_{\sigma}^{i}(Y^{i}\mid s^{i},x^{i})\right]\sigma^{i}(x^{i}\mid s^{i})p_{S^{i}}(s^{i})\bigr|\leq\gamma.\label{eq:kappa1-2}
\end{equation}

We now prove (i)-(iii) by characterizing the limit of $\kappa_{3t}^{i}(h,\theta^{i})$.

(i) For all $z^{i}\in\mathbb{Z}^{i}$, $\left|freq_{t}(z^{i})-\bar{P}_{\sigma}^{i}(z^{i})\right|\leq\bigl|t^{-1}\sum_{\tau=0}^{t-1}\bigl(\boldsymbol{1}_{z^{i}}(z_{\tau}^{i})-\bar{P}_{\sigma_{\tau}}^{i}(z^{i})\bigr)\bigr|+\bigl|t^{-1}\sum_{\tau=0}^{t-1}\bigl(\bar{P}_{\sigma_{\tau}}^{i}(z^{i})-\bar{P}_{\sigma}^{i}(z^{i})\bigr)\bigr|$.
The first term in the RHS goes to 0 (the proof is essentially identical
to the proof above that $\kappa_{1}^{i}$ goes to 0). The second term
goes to 0 because $\lim_{t\rightarrow\infty}\sigma_{t}=\sigma$ and
$\bar{P}_{\cdot}^{i}$ is continuous. Thus, $\forall\zeta>0$, $\exists\hat{t}_{\zeta}$
such that, $\forall t\geq\hat{t}_{\zeta}$, $\forall z^{i}\in\mathbb{Z}^{i}$,
\begin{equation}
\left|freq_{t}^{i}(z^{i})-\bar{P}_{\sigma}^{i}(z^{i})\right|<\zeta\label{eq:freq_converges_berk}
\end{equation}
Thus, since $\theta^{i}\in\hat{\Theta}^{i}$, $\lim_{t\rightarrow\infty}\kappa_{3t}^{i}(h,\theta^{i})=\sum_{(s^{i},x^{i})\in\mathbb{S}^{i}\times\mathbb{X}^{i}}E_{Q_{\sigma}(\cdot\mid s^{i},x^{i})}\left[\ln Q_{\theta^{i}}^{i}(Y^{i}\mid s^{i},x^{i})\right]\sigma^{i}(x^{i}\mid s^{i})p_{S^{i}}(s^{i})$.
This expression and (\ref{eq:kappa1-2}) establish part (i).

(ii) For all $\theta^{i}\notin\Theta_{\sigma,\xi}^{i}$, let $z_{\theta^{i}}^{i}$
be such that $\bar{P}_{\sigma}^{i}(z_{\theta^{i}}^{i})>0$ and $Q_{\theta^{i}}^{i}(z_{\theta^{i}}^{i})<\xi$.
By (\ref{eq:freq_converges_berk}), $\exists t_{p_{L}^{i}/2}$ such
$\forall t\geq t_{p_{L}^{i}/2}$, $\kappa_{3t}^{i}(h,\theta^{i})\leq freq_{t}^{i}(z_{\theta^{i}}^{i})\ln Q_{\theta^{i}}^{i}(z_{\theta^{i}}^{i})\leq\left(p_{L}^{i}/2\right)\ln\xi$
$\forall\theta^{i}\notin\Theta_{\sigma,\xi}^{i}$, where $p_{L}^{i}=\min_{\mathbb{Z}^{i}}\{\bar{P}_{\sigma}^{i}(z^{i}):\bar{P}_{\sigma}^{i}(z^{i})>0\}$.
This result and (\ref{eq:kappa1-2}) imply that, $\forall t\geq t_{1}\equiv\max\{t_{p_{L}^{i}/2},\hat{t}_{1}\}$,
\begin{align}
K_{t}^{i}(h,\theta^{i}) & \leq-\sum_{(s^{i},x^{i})\in\mathbb{S}^{i}\times\mathbb{X}^{i}}E_{Q_{\sigma}(\cdot\mid s^{i},x^{i})}\left[\ln Q_{\sigma}^{i}(Y^{i}\mid s^{i},x^{i})\right]\sigma^{i}(x^{i}\mid s^{i})p_{S^{i}}(s^{i})+1+\left(p_{L}^{i}/2\right)\ln\xi\nonumber \\
 & \leq\#\mathbb{Z}^{i}+1+\left(q_{L}^{i}/2\right)\ln\xi,\label{eq:KLcasei}
\end{align}
 $\forall\theta^{i}\notin\Theta_{\sigma,\xi}^{i}$, where the second
line follows from the facts that $\sigma^{i}(x^{i}\mid s^{i})p_{S^{i}}(s^{i})\leq1$
and $x\ln(x)\in[-1,0]$ $\forall x\in[0,1]$. In addition, the fact
that $K_{0}^{i}(\sigma)<\infty$ and $\alpha_{\varepsilon}\leq\bar{\alpha}<\infty$
$\forall\varepsilon\leq\bar{\varepsilon}$ implies that the RHS of
(\ref{eq:KLcasei}) can be made lower than $-(K_{0}^{i}(\sigma)+(3/2)\alpha_{\varepsilon})$
for some sufficiently small $\xi^{*}$.

(iii) For any $\xi>0$, let $\zeta_{\xi}=-\alpha_{\varepsilon}/(\#\mathbb{Z}^{i}4\ln\xi)>0$.
By (\ref{eq:freq_converges_berk}), $\exists\hat{t}_{\zeta_{\xi}}$
such that, $\forall t\geq\hat{t}_{\zeta_{\xi}}$, 
\begin{align*}
\kappa_{3t}^{i}(h,\theta^{i}) & \leq\sum_{\{z^{i}:\bar{P}_{\sigma}^{i}(z^{i})>0\}}freq_{t}^{i}(z^{i})\ln Q_{\theta^{i}}^{i}(z^{i})\leq\sum_{\{z^{i}:\bar{P}_{\sigma}^{i}(z^{i})>0\}}\left(\bar{P}_{\sigma}^{i}(z^{i})-\zeta_{\xi}\right)\ln Q_{\theta^{i}}^{i}(z^{i})\\
 & \leq\sum_{(s^{i},x^{i})\in\mathbb{S}^{i}\times\mathbb{X}^{i}}E_{Q_{\sigma}(\cdot\mid s^{i},x^{i})}\left[\ln Q_{\theta^{i}}^{i}(Y^{i}\mid s^{i},x^{i})\right]\sigma^{i}(x^{i}\mid s^{i})p_{S^{i}}(s^{i})-\#\mathbb{Z}^{i}\zeta_{\xi}\ln\xi,
\end{align*}
$\forall\theta^{i}\in\Theta_{\sigma,\xi}^{i}$ (since $Q_{\theta^{i}}^{i}(z^{i})\geq\xi$
$\forall z^{i}$ such that $\bar{P}_{\sigma}^{i}(z^{i})>0$). $ $The
above expression, the fact that \emph{$\alpha_{\varepsilon}/4=-\#\mathbb{Z}^{i}\zeta_{\xi}\ln\xi$,}
and (\ref{eq:kappa1-2}) imply that,\emph{ $\forall t\geq\hat{T}_{\xi}\equiv\max\{\hat{t}_{\zeta_{\xi}},\hat{t}_{\alpha_{\varepsilon}/4}\}$,
$K_{t}^{i}(h,\theta^{i})<-K^{i}(\sigma,\theta^{i})+\alpha_{\varepsilon}/2$}
$\forall\theta^{i}\in\Theta_{\sigma,\xi}^{i}$. This result and (\ref{eq:K_theta>K_0+2alpha_v2})
imply the desired result. $\square$

\pagebreak{}

\appendix
\setcounter{page}{1}

\part*{Online Appendix}

\section{\label{sec:OA_trade}Example: Trading with adverse selection}

In this section, we provide the formal details for the trading environment
in Example 2.5. Let $p\in\Delta(\mathbb{A}\times\mathbb{V})$ be the
true distribution; we use subscripts, such as $p_{A}$ and $p_{V\mid A}$,
to denote the corresponding marginal and conditional distributions.
Let $\mathbb{Y}=\mathbb{A}\times\mathbb{V}\cup\{\square\}$ denote
the space of observable consequences, where $\square$ will be a convenient
way to represent the fact that there is no trade. We denote the random
variable taking values in $\mathbb{V}\cup\{\square\}$ by $\hat{V}$.
Notice that the state space in this example is $\Omega=\mathbb{A}\times\mathbb{V}$.

Partial feedback is represented by the function $f^{P}:\mathbb{X}\times\mathbb{A}\times\mathbb{V}\rightarrow\mathbb{Y}$
such that $f^{P}(x,a,v)=(a,v)$ if $a\leq x$ and $f^{P}(x,a,v)=(a,\square)$
if $a>x$. Full feedback is represented by $f^{F}(x,a,v)=(a,v)$.
In all cases, payoffs are given by $\pi:\mathbb{X}\times\mathbb{Y}\rightarrow\mathbb{R}$,
where $\pi(x,a,v)=v-x$ if $a\leq x$ and 0 otherwise. The objective
distribution for the case of partial feedback, $Q^{P}$, is, $\forall x\in\mathbb{X}$,
$\forall(a,v)\in\mathbb{A}\times\mathbb{V}$, $Q^{P}(a,v\mid x)=p(a,v)1_{\{x\geq a\}}(x)$,
and, $\forall x\in\mathbb{X}$, $\forall a\in\mathbb{A}$, $Q^{P}(a,\square\mid x)=p_{A}(a)1_{\{x<a\}}(x)$.
The objective distribution for the case of full feedback, $Q^{F}$,
is, $\forall x\in\mathbb{X}$, $\forall(a,v)\in\mathbb{A}\times\mathbb{V}$,
$Q^{F}(a,v\mid x)=p(a,v)$, and, $\forall x\in\mathbb{X}$, $\forall a\in\mathbb{A}$,
$Q^{F}(a,\square\mid x)=0$.

The buyer knows the environment except for the distribution $p\in\Delta(\mathbb{A}\times\mathbb{V})$.
Then, for any distribution in the subjective model, $Q_{\theta}$,
the perceived expected profit from choosing $x\in\mathbb{X}$ is 
\begin{equation}
E_{Q_{\theta}(\cdot\mid x)}[\pi(x,A,\hat{V})]=\sum_{(a,v)\in\mathbb{A}\times\mathbb{V}}1_{\{x\geq a\}}(x)\left(v-x\right)Q_{\theta}(a,v\mid x).\label{eq:trade_expprofits}
\end{equation}

The buyer has either one of two misspecifications over $p$ captured
by the parameter sets $\Theta_{I}=\Delta(\mathbb{A})\times\Delta(\mathbb{V})$
(i.e., independent beliefs) or $\Theta_{A}=\times_{j}\Delta(\mathbb{A})\times\Delta(\mathbb{V})$
(i.e., analogy-based beliefs) defined in the main text. Thus, combining
feedback and parameter sets, we have four cases to consider, and,
for each case, we write down the subjective model and wKLD function.

\textbf{\emph{Cursed equilibrium}}. Feedback is $f^{F}$ and the parameter
set is $\Theta_{I}$. The subjective model is, $\forall x\in\mathbb{X}$,
$\forall(a,v)\in\mathbb{A}\times\mathbb{V}$, $Q_{\theta}^{C}(a,v\mid x)=\theta_{A}(a)\theta_{V}(v)$,
and, $\forall x\in\mathbb{X}$, $\forall a\in\mathbb{A}$, $Q_{\theta}^{C}(a,\square\mid x)=0$,
where $\theta=(\theta_{A},\theta_{V})\in\Theta_{I}$.\footnote{In fact, the symbol $\square$ is not necessary for this example,
but we keep it so that all feedback functions are defined over the
same space of consequences.} This is an analogy-based game. From (\ref{eq:trade_expprofits}),
the perceived expected profit from $x\in\mathbb{X}$ is 
\begin{equation}
Pr_{\theta_{A}}\left(A\leq x\right)\left(E_{\theta_{V}}\left[V\right]-x\right),\label{eq:trade_expprofits_C}
\end{equation}
where $Pr_{\mathbb{\theta}_{A}}$ denotes probability with respect
to $\theta_{A}$ and $ $$E_{\theta_{V}}$ denotes expectation with
respect to $\theta_{V}$. Also, for all (pure) strategies $x\in\mathbb{X}$,
the wKLD function is\footnote{In all cases, the extension to mixed strategies is straightforward.}
\begin{align*}
K^{C}(x,\theta) & =E_{Q^{F}(\cdot\mid x)}\bigl[\ln\frac{Q^{F}(A,\hat{V}\mid x)}{Q_{\theta}^{C}(A,\hat{V}\mid x)}\bigr]=\sum_{(a,v)\in\mathbb{A}\times\mathbb{V}}p(a,v)\ln\frac{p(a,v)}{\theta_{A}(a)\theta_{V}(v)}.
\end{align*}
For each $x\in\mathbb{X}$, $\theta(x)=(\theta_{A}(x),\theta_{V}(x))\in\Theta_{I}=\Delta(\mathbb{A})\times\Delta(\mathbb{V})$,
where $\theta_{A}(x)=p_{A}$ and $\theta_{V}(x)=p_{V}$ is the unique
parameter value that minimizes $K^{C}(x,\cdot)$. Together with (\ref{eq:trade_expprofits_C}),
we obtain equation $\Pi^{CE}$ in the main text.

\textbf{\emph{Behavioral equilibrium (naive version)}}. Feedback is
$f^{P}$ and the parameter set is $\Theta_{I}$. The subjective model
is, $\forall x\in\mathbb{X}$, $\forall(a,v)\in\mathbb{A}\times\mathbb{V}$,
$Q_{\theta}^{BE}(a,v\mid x)=\theta_{A}(a)\theta_{V}(v)1_{\{x\geq a\}}(x)$,
and, $\forall x\in\mathbb{X}$, $\forall a\in\mathbb{A}$, $Q_{\theta}^{BE}(a,\square\mid x)=\theta_{A}(a)1_{\{x<a\}}(x)$,
where $\theta=(\theta_{A},\theta_{V})\in\Theta_{I}$. From (\ref{eq:trade_expprofits}),
perceived expected profit from $x\in\mathbb{X}$ is as in equation
(\ref{eq:trade_expprofits_C}). Also, for all (pure) strategies $x\in\mathbb{X}$,
the wKLD function is 
\begin{align*}
K^{BE}(x,\theta) & =E_{Q^{P}(\cdot\mid x)}\bigl[\ln\frac{Q^{P}(A,\hat{V}\mid x)}{Q_{\theta}^{BE}(A,\hat{V}\mid x)}\bigr]\\
 & =\sum_{\{a\in\mathbb{A}:a>x\}}p_{A}(a)\ln\frac{p_{A}(a)}{\theta_{A}(a)}+\sum_{\{(a,v)\in\mathbb{A}\times\mathbb{V}:a\leq x\}}p(a,v)\ln\frac{p(a,v)}{\theta_{A}(a)\theta_{V}(v)}.
\end{align*}
For each $x\in\mathbb{X}$, $\theta(x)=(\theta_{A}(x),\theta_{V}(x))\in\Theta_{I}=\Delta(\mathbb{A})\times\Delta(\mathbb{V})$,
where $\theta_{A}(x)=p_{A}$ and $\theta_{V}(x)(v)=p_{V\mid A}(v\mid A\leq x)$
$\forall v\in\mathbb{V}$ is the unique parameter value that minimizes
$K^{BE}(x,\cdot)$. Together with (\ref{eq:trade_expprofits_C}),
we obtain equation $\Pi^{BE}$ in the main text.

\textbf{\emph{Analogy-based expectations equilibrium}}. Feedback is
$f^{F}$ and the parameter set is $\Theta_{A}$. The subjective model
is, $\forall x\in\mathbb{X}$, $\forall(a,v)\in\mathbb{A}\times\mathbb{V}_{j}$,
all $j=1,...,k$, $Q_{\theta}^{ABEE}(a,v\mid x)=\theta_{j}(a)\theta_{V}(v)$,
and, $\forall x\in\mathbb{X}$, $\forall a\in\mathbb{A}$, $Q_{\theta}^{ABEE}(a,\square\mid x)=0$,
where $\theta=(\theta_{1},...,\theta_{k},\theta_{V})\in\Theta_{A}$.
This is an analogy-based game. From (\ref{eq:trade_expprofits}),
perceived expected profit from $x\in\mathbb{X}$ is 
\begin{equation}
\sum_{j=1}^{k}Pr_{\theta_{V}}(V\in\mathbb{V}_{j})\left\{ Pr_{\theta_{j}}(A\leq x\mid V\in\mathbb{V}_{j})\left(E_{\theta_{V}}\left[V\mid V\in\mathbb{V}_{j}\right]-x\right)\right\} .\label{eq:trade_expprofits_ABEE}
\end{equation}
Also, for all (pure) strategies $x\in\mathbb{X}$, the wKLD function
is 
\begin{align*}
K^{ABEE}(x,\theta) & =E_{Q^{F}(\cdot\mid x)}\bigl[\ln\frac{Q^{F}(A,\hat{V}\mid x)}{Q_{\theta}^{ABEE}(A,\hat{V}\mid x)}\bigr]=\sum_{j=1}^{k}\sum_{(a,v)\in\mathbb{A}\times\mathbb{V}_{j}}p(a,v)\ln\frac{p(a,v)}{\theta_{j}(a)\theta_{V}(v)}.
\end{align*}
For each $x\in\mathbb{X}$, $\theta(x)=(\theta_{1}(x),...,\theta_{k}(x),\theta_{V}(x))\in\Theta_{A}=\times_{j}\Delta(\mathbb{A})\times\Delta(\mathbb{V})$,
where $\theta_{j}(x)(a)=p_{A\mid V_{j}}(a\mid V\in\mathbb{V}_{j})$
$\forall a\in\mathbb{A}$ and $\theta_{V}(x)=p_{V}$ is the unique
parameter value that minimizes $K^{ABEE}(x,\cdot)$. Together with
(\ref{eq:trade_expprofits_ABEE}), we obtain equation $\Pi^{ABEE}$
in the main text.

\textbf{\emph{Behavioral equilibrium (naive version) with analogy
classes}}. It is natural to also consider a case, unexplored in the
literature, where feedback $f^{P}$ is partial and the subjective
model is parameterized by $\Theta_{A}$. Suppose that the buyer's
behavior has stabilized to some price $x^{*}$. Due to the possible
correlation across analogy classes, the buyer might now believe that
deviating to a different price $x\neq x^{*}$ affects her valuation.
In particular, the buyer might have multiple beliefs at $x^{*}$.
To obtain a natural equilibrium refinement, we assume that the buyer
also observes the analogy class that contains her realized valuation,
whether she trades or not, and that $\Pr(V\in\mathbb{V}_{j},\,A\leq x)>0$
for all $j=1,...,k$ and $x\in\mathbb{X}$.\footnote{Alternatively, and more naturally, we could require the equilibrium
to be the limit of a sequence of mixed strategy equilibria with the
property that all prices are chosen with positive probability.} We denote this new feedback assumption by a function $f^{P^{*}}:\mathbb{X}\times\mathbb{A}\times\mathbb{V}\rightarrow\mathbb{Y}^{*}$
where $\mathbb{Y}^{*}=\mathbb{A}\times\mathbb{V}\cup\{1,...,k\}$
and $f^{P^{*}}(x,a,v)=(a,v)$ if $a\leq x$ and $f^{P^{*}}(x,a,v)=(a,j)$
if $a>x$ and $v\in\mathbb{V}_{j}$. The objective distribution given
this feedback function is, $\forall x\in\mathbb{X}$, $\forall(a,v)\in\mathbb{A}\times\mathbb{V}$,
$Q^{P^{*}}(a,v\mid x)=p(a,v)1_{\{x\geq a\}}(x)$, and, $\forall x\in\mathbb{X}$,
$\forall a\in\mathbb{A}$ and all $j=1,...,k$, $Q^{P^{*}}(a,j\mid x)=p_{A\mid V_{j}}(a\mid V\in\mathbb{V}_{j})p_{V}(\mathbb{V}_{j})1_{\{x<a\}}(x)$.
The subjective model is, $\forall x\in\mathbb{X}$, $\forall(a,v)\in\mathbb{A}\times\mathbb{V}_{j}$
and all $j=1,...,k$, $Q_{\theta}^{BEA}(a,v\mid x)=\theta_{j}(a)\theta_{V}(v)1_{\{x\geq a\}}(x)$,
and, $\forall x\in\mathbb{X}$, $\forall(a,v)\in\mathbb{A}\times\mathbb{V}_{j}$
and all $j=1,...,k$, $Q_{\theta}^{BEA}(a,j\mid x)=\theta_{j}(a)\left(\sum_{v\in\mathbb{V}_{j}}\theta_{V}(v)\right)1_{\{x<a\}}(x)$,
where $\theta=(\theta_{1},\theta_{2},...,\theta_{k},\theta_{V})\in\Theta_{A}$.
In particular, from (\ref{eq:trade_expprofits}), perceived expected
profit from $x\in\mathbb{X}$ is as in equation (\ref{eq:trade_expprofits_ABEE}).
Also, for all (pure) strategies $x\in\mathbb{X}$, the wKLD function
is 
\begin{align*}
K^{BEA}(x,\theta)= & E_{Q^{P^{*}}(\cdot\mid x)}\bigl[\ln\frac{Q^{P^{*}}(A,\hat{V}\mid x)}{Q_{\theta}^{BEA}(A,\hat{V}\mid x)}\bigr]=\sum_{j=1}^{k}\sum_{\{(a,v)\in\mathbb{A}\times\mathbb{V}_{j}:a\leq x\}}p(a,v)\ln\frac{p(a,v)}{\theta_{j}(a)\theta_{V}(v)}\\
+ & \sum_{\{(a,j)\in\mathbb{A}\times\{1,...,k\}:a>x\}}p_{A\mid V_{j}}(a\mid V\in\mathbb{V}_{j})p_{V}(\mathbb{V}_{j})\ln\frac{p_{A\mid V_{j}}(a\mid V\in\mathbb{V}_{j})p_{V}(\mathbb{V}_{j})}{\theta_{j}(a)\sum_{v\in\mathbb{V}_{j}}\theta_{V}(v)}.
\end{align*}
For each $x\in\mathbb{X}$, $\theta(x)=(\theta_{1}(x),...,\theta_{k}(x),\theta_{V}(x))\in\Theta_{A}=\times_{j}\Delta(\mathbb{A})\times\Delta(\mathbb{V})$,
where $\theta_{j}(x)(a)=p_{A\mid V_{j}}(a\mid V\in\mathbb{V}_{j})$
$\forall a\in\mathbb{A}$ and $\theta_{V}(x)(v)=p_{V\mid A}(v\mid V\in\mathbb{V}_{j},\,A\leq x)p_{V}(\mathbb{V}_{j})$
$\forall v\in\mathbb{V}_{j}$, all $j=1,...,k$ is the unique parameter
value that minimizes $K^{BEA}(x,\cdot)$. Together with (\ref{eq:trade_expprofits_ABEE}),
we obtain $\Pi^{BEA}(x,x^{*})=\sum_{i=j}^{k}\Pr(V\in\mathbb{V}_{j})\Pr(A\leq x\mid V\in\mathbb{V}_{j})\bigl(E\bigl[V\mid V\in\mathbb{V}_{j},A\leq x^{*}\bigr]-x\bigr).$

\section{\label{sec:converse}Proof of converse result: Theorem \ref{Theo:converse-1}}

\bigskip{}

Let $(\bar{\mu}^{i})_{i\in I}$ be a belief profile that supports
$\sigma$ as an equilibrium. Consider the following policy profile
$\phi=(\phi_{t}^{i})_{i,t}$: For all $i\in I$ and all $t$,
\[
(\mu^{i},s^{i},\xi^{i})\mapsto\phi_{t}^{i}(\mu^{i},s^{i},\xi^{i})\equiv\begin{cases}
\varphi^{i}(\bar{\mu}^{i},s^{i},\xi^{i}) & \text{ if }\max_{i\in I}||\bar{Q}_{\mu^{i}}^{i}-\bar{Q}_{\bar{\mu}^{i}}^{i}||\leq\frac{1}{2C}\varepsilon_{t}\\
\varphi^{i}(\mu^{i},s^{i},\xi^{i}) & \text{ otherwise},
\end{cases}
\]
where $\varphi^{i}$ is an arbitrary selection from $\Psi^{i}$, $C\equiv\max_{I}\left\{ \#\mathbb{Y}^{i}\times\sup_{\mathbb{X}^{i}\times\mathbb{Y}^{i}}|\pi^{i}(x^{i},y^{i})|\right\} <\infty$,
and the sequence $(\varepsilon_{t})_{t}$ will be defined below. For
all $i\in I$, fix any prior $\mu_{0}^{i}$ such that $\mu_{0}^{i}(\cdot|\Theta^{i}(\sigma))=\bar{\mu}^{i}$
(where for any $A\subset\Theta$ Borel, $\mu(\cdot|A)$ is the conditional
probability given $A$).

We now show that if $\varepsilon_{t}\geq0$ $\forall t$ and $\lim_{t\rightarrow\infty}\varepsilon_{t}=0$,
then $\phi$ is asymptotically optimal. Throughout this argument,
we fix an arbitrary $i\in I$. Abusing notation, let $U^{i}(\mu^{i},s^{i},\xi^{i},x^{i})=E_{\bar{Q}_{\mu^{i}}(\cdot|s^{i},x^{i})}\left[\pi^{i}(x^{i},Y^{i})\right]+\xi^{i}(x^{i})$.
It suffices to show that 
\begin{equation}
U^{i}(\mu^{i},s^{i},\xi^{i},\phi_{t}^{i}(\mu^{i},s^{i},\xi^{i}))\geq U^{i}(\mu^{i},s^{i},\xi^{i},x^{i})-\varepsilon_{t}\label{eq:pf_asympt_opt}
\end{equation}
for all $(i,t)$, all $(\mu^{i},s^{i},\xi^{i})$, and all $x^{i}$.
By construction of $\phi$, equation (\ref{eq:pf_asympt_opt}) is
satisfied if $\max_{i\in I}||\bar{Q}_{\mu^{i}}^{i}-\bar{Q}_{\bar{\mu}^{i}}^{i}||>\frac{1}{2C}\varepsilon_{t}$.
If, instead, $\max_{i\in I}||\bar{Q}_{\mu^{i}}^{i}-\bar{Q}_{\bar{\mu}^{i}}^{i}||\leq\frac{1}{2C}\varepsilon_{t}$,
then 
\begin{equation}
U^{i}(\bar{\mu}^{i},s^{i},\xi^{i},\phi_{t}^{i}(\mu^{i},s^{i},\xi^{i}))=U^{i}(\bar{\mu}^{i},s^{i},\xi^{i},\varphi^{i}(\bar{\mu}^{i},s^{i},\xi^{i}))\geq U^{i}(\bar{\mu}^{i},s^{i},\xi^{i},x^{i}),\label{eq:converse-0}
\end{equation}
 $\forall x^{i}\in\mathbb{X}^{i}$. Moreover, $\forall x^{i}$, 
\begin{align*}
\left|U^{i}(\bar{\mu}^{i},s^{i},\xi^{i},x^{i})-U^{i}(\mu^{i},s^{i},\xi^{i},x^{i})\right|= & \bigl|\sum_{y^{i}\in\mathbb{Y}^{i}}\pi(x^{i},y^{i})\bigl(\bar{Q}_{\bar{\mu}^{i}}^{i}(y^{i}\mid s^{i},x^{i})-\bar{Q}_{\mu^{i}}^{i}(y^{i}\mid s^{i},x^{i})\bigr)\bigr|\\
\leq & \sup_{\mathbb{X}^{i}\times\mathbb{Y}^{i}}|\pi^{i}(x^{i},y^{i})|\sum_{y^{i}\in\mathbb{Y}^{i}}\bigl|\bigl(\bar{Q}_{\bar{\mu}^{i}}^{i}(y^{i}\mid s^{i},x^{i})-\bar{Q}_{\mu^{i}}^{i}(y^{i}\mid s^{i},x^{i})\bigr)\bigr|\\
\leq & \sup_{\mathbb{X}^{i}\times\mathbb{Y}^{i}}|\pi^{i}(x^{i},y^{i})|\times\#\mathbb{Y}^{i}\times\max_{y^{i},x^{i},s^{i}}\bigl|\bar{Q}_{\bar{\mu}^{i}}^{i}(y^{i}\mid s^{i},x^{i})-\bar{Q}_{\mu^{i}}^{i}(y^{i}\mid s^{i},x^{i})\bigr|
\end{align*}
so by our choice of $C$, $\left|U^{i}(\bar{\mu}^{i},s^{i},\xi^{i},x^{i})-U^{i}(\mu^{i},s^{i},\xi^{i},x^{i})\right|\leq0.5\varepsilon_{t}$
$\forall x^{i}$. Therefore, equation (\ref{eq:converse-0}) implies
equation (\ref{eq:pf_asympt_opt}); thus $\phi$ is asymptotically
optimal if $\varepsilon_{t}\geq0$ $\forall t$ and $\lim_{t\rightarrow\infty}\varepsilon_{t}=0$.

We now construct a sequence $(\varepsilon_{t})_{t}$ such that $\varepsilon_{t}\geq0$
$\forall t$ and $\lim_{t\rightarrow\infty}\varepsilon_{t}=0$. Let
$\bar{\phi}^{i}=(\bar{\phi}_{t}^{i})_{t}$ be such that $\bar{\phi}_{t}^{i}(\mu^{i},\cdot,\cdot)=\varphi^{i}(\bar{\mu}^{i},\cdot,\cdot)$
$\forall\mu^{i}$; i.e., $\bar{\phi}^{i}$ is a stationary policy
that maximizes utility under the assumption that the belief is always
$\bar{\mu}^{i}$. Let $\zeta^{i}(\mu^{i})\equiv2C||\bar{Q}_{\mu^{i}}^{i}-\bar{Q}_{\bar{\mu}^{i}}^{i}||$
and suppose (the proof is at the end) that 
\begin{equation}
\boldsymbol{P}^{\mu_{0},\bar{\phi}}(\lim_{t\rightarrow\infty}\max_{i\in I}|\zeta^{i}(\mu_{t}^{i}(h))|=0)=1\label{eq:conv-eq-1}
\end{equation}
(recall that $\boldsymbol{P}^{\mu_{0},\bar{\phi}}$ is the probability
measure over $\mathbb{H}$ induced by the policy profile $\bar{\phi}$;
by definition of $\bar{\phi}$, $\boldsymbol{P}^{\mu_{0},\bar{\phi}}$
does not depend on $\mu_{0}$). Then by the 2nd Borel-Cantelli lemma
(\citet{billingsley1995probability}, pages 59-60), for any $\gamma>0$,
$\sum_{t}\boldsymbol{P}^{\mu_{0},\bar{\phi}}\left(\max_{i\in I}|\zeta^{i}(\mu_{t}^{i}(h))|\geq\gamma\right)<\infty$.
Hence, for any $a>0$, there exists a sequence $(\tau(j))_{j}$ such
that 
\begin{equation}
\sum_{t\geq\tau(j)}\boldsymbol{P}^{\mu_{0},\bar{\phi}}\bigl(\max_{i\in I}|\zeta^{i}(\mu_{t}^{i}(h))|\geq1/j\bigr)<\frac{3}{a}4^{-j}\label{eq:conv-eq-2}
\end{equation}
and $\lim_{j\rightarrow\infty}\tau(j)=\infty$. For all $t\leq\tau(1)$,
we set $\varepsilon_{t}=3C$, and, for any $t>\tau(1)$, we set $\varepsilon_{t}\equiv1/N(t)$,
where $N(t)\equiv\sum_{j=1}^{\infty}1\{\tau(j)\leq t\}$. Observe
that, since $\lim_{j\rightarrow\infty}\tau(j)=\infty$, $N(t)\rightarrow\infty$
as $t\rightarrow\infty$ and thus $\varepsilon_{t}\rightarrow0$.
We also note that our choice of $(\varepsilon_{t})_{t}$ depends on
the value of $a$ (through $(N(t))_{t}$) and, consequently, so does
the sequence of functions $(\phi^{i})_{t}$ for each $i\in I$.

Next, we show that $\mathbf{P}^{\mu_{0},\phi}\left(\lim_{t\rightarrow\infty}\left\Vert \sigma_{t}(h^{\infty})-\sigma\right\Vert =0\right)=1,$
where $(\sigma_{t})_{t}$ is the sequence of intended strategies given
$\phi$, i.e., $\sigma_{t}^{i}(h)(x^{i}\mid s^{i})=P_{\xi}\left(\xi^{i}:\phi_{t}^{i}(\mu_{t}^{i}(h),s^{i},\xi^{i})=x^{i}\right).$
Observe that, by definition, $\sigma^{i}(x^{i}\mid s^{i})=P_{\xi}\bigl(\xi^{i}:x^{i}\in\arg\max_{\hat{x}^{i}\in\mathbb{X}^{i}}E_{\bar{Q}_{\bar{\mu}^{i}}(\cdot\mid s^{i},\hat{x}^{i})}\left[\pi^{i}(\hat{x}^{i},Y^{i})\right]+\xi^{i}(\hat{x}^{i})\bigr).$
Since $\varphi^{i}\in\Psi^{i}$, it follows that we can write $\sigma^{i}(x^{i}\mid s^{i})=P_{\xi}\left(\xi^{i}:\varphi^{i}(\bar{\mu}^{i},s^{i},\xi^{i})=x^{i}\right)$.
Let $H\equiv\left\{ h\colon\left\Vert \sigma_{t}(h)-\sigma\right\Vert =0,\text{ for all \ensuremath{t}}\right\} $.
It is sufficient to show that $\mathbf{P}^{\mu_{0},\phi}\left(H\right)=1$.
To show this, observe that
\begin{align*}
\mathbf{P}^{\mu_{0},\phi}\left(H\right)\geq & \mathbf{P}^{\mu_{0},\phi}\left(\cap_{t}\{\max_{i}\zeta^{i}(\mu_{t})\leq\varepsilon_{t}\}\right)\\
= & \prod_{t=\tau(1)+1}^{\infty}\mathbf{P}^{\mu_{0},\phi}\left(\max_{i}\zeta^{i}(\mu_{t})\leq\varepsilon_{t}\text{ \ensuremath{\mid}}\cap_{l<t}\{\max_{i}\zeta^{i}(\mu_{l})\leq\varepsilon_{l}\}\right)\\
= & \prod_{t=\tau(1)+1}^{\infty}\mathbf{P}^{\mu_{0},\bar{\phi}}\left(\max_{i}\zeta^{i}(\mu_{t})\leq\varepsilon_{t}\text{ \ensuremath{\mid}}\cap_{l<t}\{\max_{i}\zeta^{i}(\mu_{l})\leq\varepsilon_{l}\}\right)\\
= & \mathbf{P}^{\mu_{0},\bar{\phi}}\left(\cap_{t>\tau(1)}\{\max_{i}\zeta^{i}(\mu_{t})\leq\varepsilon_{t}\}\right),
\end{align*}
where the second line omits the term $\mathbf{P}^{\mu_{0},\phi}\left(\max_{i}\zeta^{i}(\mu_{t})<\varepsilon_{t}\text{ for all \ensuremath{t}}\leq\tau(1)\right)$
because it is equal to 1 (since $\varepsilon_{t}\geq3C$ $\forall t\leq\tau(1)$);
the third line follows from the fact that $\phi_{t-1}^{i}=\bar{\phi}_{t-1}^{i}$
if $\zeta^{i}(\mu_{t-1})\leq\varepsilon_{t-1}$, so the probability
measure is equivalently given by $\boldsymbol{P}^{\mu_{0},\bar{\phi}}$;
and where the last line also uses the fact that $\boldsymbol{P}^{\mu_{0},\bar{\phi}}\left(\max_{i}\zeta^{i}(\mu_{t})<\varepsilon_{t}\text{ for all \ensuremath{t}}\leq\tau(1)\right)=1$.
In addition, $\forall a>0$,
\begin{align*}
\boldsymbol{P}^{\mu_{0},\bar{\phi}}\left(\cap_{t>\tau(1)}\{\max_{i}\zeta^{i}(\mu_{t})\leq\varepsilon_{t}\}\right)= & \boldsymbol{P}^{\mu_{0},\bar{\phi}}\left(\cap_{n\in\{1,2,...\}}\cap_{\{t>\tau(1):N(t)=n\}}\{\max_{i}\zeta^{i}(\mu_{t})\leq n^{-1}\}\right)\\
\geq & 1-\sum_{n=1}^{\infty}\sum_{\{t:N(t)=n\}}\boldsymbol{P}^{\mu_{0},\bar{\phi}}\left(\max_{i}\zeta^{i}(\mu_{t})\geq n^{-1}\right)\\
\geq & 1-\sum_{n=1}^{\infty}\frac{3}{a}4^{-n}=1-\frac{1}{a},
\end{align*}
where the last line follows from (\ref{eq:conv-eq-2}). Thus, we have
shown that $\mathbf{P}^{\mu_{0},\phi}\left(H\right)\geq1-1/a$ $\forall a>0$;
hence, $\mathbf{P}^{\mu_{0},\phi}\left(H\right)=1$.

We conclude the proof by showing that equation (\ref{eq:conv-eq-1})
indeed holds. Observe that $\sigma$ is trivially stable under $\bar{\phi}$.
By Lemma \ref{lem:Berk}, $\forall i\in I$ and all open sets $U^{i}\supseteq\Theta^{i}(\sigma)$,
\begin{equation}
\lim_{t\rightarrow\infty}\mu_{t}^{i}\left(U^{i}\right)=1\label{eq:Berk_conv}
\end{equation}
$a.s.-\boldsymbol{P}^{\mu_{0},\bar{\phi}}$ (over $\mathbb{H}$).
Let $\mathcal{H}$ denote the set of histories such that $x_{t}^{i}(h)=x^{i}$
and $s_{t}^{i}(h)=s^{i}$ implies that $\sigma^{i}(x^{i}\mid s^{i})>0$.
By definition of $\bar{\phi}$, $ $$\boldsymbol{P}^{\mu_{0},\bar{\phi}}(\mathcal{H})=1$.
Thus, it suffices to show that $\lim_{t\rightarrow\infty}\max_{i\in I}|\zeta^{i}(\mu_{t}^{i}(h))|=0$
a.s.-$\boldsymbol{P}^{\mu_{0},\bar{\phi}}$ over $\mathcal{H}$. To
do this, take any $A\subseteq\Theta$ that is closed. By equation
(\ref{eq:Berk_conv}), $\forall i\in I$, and almost all $h\in\mathcal{H}$,
\[
\limsup_{t\rightarrow\infty}\int1_{A}(\theta)\mu_{t+1}^{i}(d\theta)=\limsup_{t\rightarrow\infty}\int1_{A\cap\Theta^{i}(\sigma)}(\theta)\mu_{t+1}^{i}(d\theta).
\]
Moreover, 
\begin{align*}
\int1_{A\cap\Theta^{i}(\sigma)}(\theta)\mu_{t+1}^{i}(d\theta)\leq & \int1_{A\cap\Theta^{i}(\sigma)}(\theta)\left\{ \frac{\prod_{\tau=1}^{t}Q_{\theta}^{i}(y_{\tau}^{i}\mid s_{\tau}^{i},x_{\tau}^{i})\mu_{0}^{i}(d\theta)}{\int_{\Theta^{i}(\sigma)}\prod_{\tau=1}^{t}Q_{\theta}^{i}(y_{\tau}^{i}\mid s_{\tau}^{i},x_{\tau}^{i})\mu_{0}^{i}(d\theta)}\right\} \\
= & \mu_{0}^{i}(A\mid\Theta^{i}(\sigma))=\bar{\mu}^{i}(A),
\end{align*}
where the first inequality follows from the fact that $\Theta^{i}(\sigma)\subseteq\Theta^{i}$;
the first equality follows from the fact that, since $h\in\mathcal{H}$,
the fact that the game is weakly identified given $\sigma$ implies
that $\prod_{\tau=1}^{t}Q_{\theta}^{i}(y_{\tau}^{i}\mid s_{\tau}^{i},x_{\tau}^{i})$
is constant with respect to $\theta$ $\forall\theta\in\Theta^{i}(\sigma)$,
and the last equality follows from our choice of $\mu_{0}^{i}$. Therefore,
we established that a.s.-$\boldsymbol{P}^{\mu_{0},\bar{\phi}}$ over
$\mathcal{H}$, $\limsup_{t\rightarrow\infty}\mu_{t+1}^{i}(h)(A)\leq\bar{\mu}^{i}(A)$
for $A$ closed. By the portmanteau lemma, this implies that, a.s.
-$\boldsymbol{P}^{\mu_{0},\bar{\phi}}$ over $\mathcal{H}$, $\lim_{t\rightarrow\infty}\int_{\Theta}f(\theta)\mu_{t+1}^{i}(h)(d\theta)=\int_{\Theta}f(\theta)\bar{\mu}^{i}(d\theta)$
for any $f$ real-valued, bounded and continuous. Since, by assumption,
$\theta\mapsto Q_{\theta}^{i}(y^{i}\mid s^{i},x^{i})$ is bounded
and continuous, the previous result applies to $Q_{\theta}^{i}(y^{i}\mid s^{i},x^{i})$,
and since $y,s,x$ take a finite number of values, this result implies
that $\lim_{t\rightarrow\infty}||\bar{Q}_{\mu_{t}^{i}(h)}^{i}-\bar{Q}_{\bar{\mu}^{i}}^{i}||=0$
$\forall i\in I$ a.s. -$\boldsymbol{P}^{\mu_{0},\bar{\phi}}$ over
$\mathcal{H}$. $\square$

\section{\label{sec:Non-myopic}Non-myopic players}

In the main text, we proved the results for the case where players
are myopic. Here, we assume that players maximize discounted expected
payoffs, where $\delta^{i}\in[0,1)$ is the discount factor of player
$i$. In particular, players can be forward looking and decide to
experiment. Players believe, however, that they face a stationary
environment and, therefore, have no incentives to influence the future
behavior of other players. We assume for simplicity that players know
the distribution of their own payoff perturbations.

Because players believe that they face a stationary environment, they
solve a (subjective) dynamic optimization problem that can be cast
recursively as follows. By the Principle of Optimality, $V^{i}(\mu^{i},s^{i})$
denotes the maximum expected discounted payoffs (i.e., the value function)
of player $i$ who starts a period by observing signal $s^{i}$ and
by holding belief $\mu^{i}$ if and only if 
\begin{equation}
V^{i}(\mu^{i},s^{i})=\int_{\Xi^{i}}\left\{ \max_{x^{i}\in\mathbb{X}^{i}}E_{\bar{Q}_{\mu^{i}}(\cdot|s^{i},x^{i})}\left[\pi^{i}(x^{i},Y^{i})+\xi^{i}(x^{i})+\delta E_{p_{S^{i}}}\left[V^{i}(\hat{\mu}^{i},S^{i})\right]\right]\right\} P_{\xi}(d\xi^{i}),\label{eq:Bellman}
\end{equation}
where $\hat{\mu}^{i}=B^{i}(\mu^{i},s^{i},x^{i},Y^{i})$ is the updated
belief. For all $(\mu^{i},s^{i},\xi^{i})$, let 
\[
\Phi^{i}(\mu^{i},s^{i},\xi^{i})=\arg\max_{x^{i}\in\mathbb{X}^{i}}E_{\bar{Q}_{\mu^{i}}(\cdot|s^{i},x^{i})}\left[\pi^{i}(x^{i},Y^{i})+\xi^{i}(x^{i})+\delta E_{p_{S^{i}}}\left[V^{i}(\hat{\mu}^{i},S^{i})\right]\right].
\]
The proof of the next lemma relies on standard arguments and is, therefore,
omitted.\footnote{\citet{doraszelski2010theory} study a similarly perturbed version
of the Bellman equation.}

\bigskip{}

\begin{lem}
\label{lemma:bellman_beliefs+pert}There exists a unique solution
$V^{i}$ to the Bellman equation (\ref{eq:Bellman}); this solution
is bounded in $\Delta(\Theta^{i})\times\mathbb{S}^{i}$ and continuous
as a function of $\mu^{i}$. Moreover, $\Phi^{i}$ is single-valued
and continuous with respect to $\mu^{i}$, a.s.- $P_{\xi}$.
\end{lem}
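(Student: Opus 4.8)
The plan is to recognize (\ref{eq:Bellman}) as the fixed-point equation of a Bellman operator on a Banach space, establish that this operator is a contraction, and then read off existence, uniqueness, boundedness, and continuity of $V^i$ all at once; the claims about $\Phi^i$ then follow from a standard argmax argument exploiting the absolute continuity of $P_\xi$. Fix $i$ and let $\mathcal{C}^i$ be the space of bounded continuous functions $W:\Delta(\Theta^i)\times\mathbb{S}^i\to\mathbb{R}$, where $\Delta(\Theta^i)$ carries the weak topology (under which it is compact, as already used in the proof of Theorem \ref{theo:Existence}) and $\mathbb{S}^i$ is finite; with the sup norm $\mathcal{C}^i$ is complete. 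Define $T^i:\mathcal{C}^i\to\mathcal{C}^i$ by the right-hand side of (\ref{eq:Bellman}) with $V^i$ replaced by $W$. First I would check that $T^i$ maps $\mathcal{C}^i$ into itself. Boundedness holds because $\pi^i$ is bounded (finite domain), the continuation term is bounded by $\delta\|W\|_\infty$, and the perturbation contributes $\int_{\Xi^i}\max_{x^i}\xi^i(x^i)\,P_\xi(d\xi^i)$, which is finite since $\max_{x^i}\xi^i(x^i)\le\|\xi^i\|_1$ and $\int\|\xi^i\|\,P_\xi<\infty$ by assumption.

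Continuity preservation is the crux, so I would argue it in steps as $\mu_n^i\to\mu^i$ weakly: (a) $\mu^i\mapsto\bar{Q}_{\mu^i}^i(\cdot\mid s^i,x^i)=\int Q_{\theta^i}^i(\cdot\mid s^i,x^i)\mu^i(d\theta^i)$ is continuous because each $Q_{\theta^i}^i$ is bounded and continuous by Assumption 1; (b) the Bayesian update is weakly continuous wherever its denominator is positive, since for every bounded continuous $f$ the ratio $\int f(\theta^i)Q_{\theta^i}^i(y^i\mid s^i,x^i)\mu_n^i(d\theta^i)/\int Q_{\theta^i}^i(y^i\mid s^i,x^i)\mu_n^i(d\theta^i)$ has numerator and denominator converging by weak convergence. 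The only $y^i$ that enter the continuation carry weight $\bar{Q}_{\mu^i}^i(y^i\mid s^i,x^i)$, which is exactly that denominator; so each summand $\bar{Q}_{\mu^i}^i(y^i\mid s^i,x^i)\,E_{p_{S^i}}[W(B^i(\mu^i,s^i,x^i,y^i),S^i)]$ is continuous in $\mu^i$ (at beliefs where the weight vanishes, boundedness of $W$ forces the whole product to vanish). The finite sums over $Y^i,S^i$ and the finite $\max$ over $\mathbb{X}^i$ then preserve continuity, and continuity of the $\xi^i$-integral follows from dominated convergence, the integrand converging pointwise in $\xi^i$ and dominated by the integrable envelope $\max_{x^i,y^i}|\pi^i|+\delta\|W\|_\infty+\|\xi^i\|_1$.

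Next I would verify the contraction property through Blackwell's sufficient conditions. Monotonicity ($W_1\le W_2\Rightarrow T^iW_1\le T^iW_2$) holds since the continuation value enters monotonically, and discounting ($T^i(W+c)=T^iW+\delta c$ for $c\ge 0$) holds because an additive constant passes through the conditional expectations and the $\max$ and $\xi$-integral unchanged. Hence $\|T^iW_1-T^iW_2\|_\infty\le\delta\|W_1-W_2\|_\infty$, and the contraction mapping theorem yields a unique fixed point $V^i\in\mathcal{C}^i$, delivering existence, uniqueness, boundedness, and continuity in $\mu^i$ simultaneously. For $\Phi^i$, write the objective as $g^i(x^i;\mu^i,s^i,\xi^i)=u^i(x^i;\mu^i,s^i)+\xi^i(x^i)$, where $u^i(x^i;\mu^i,s^i)=E_{\bar{Q}_{\mu^i}(\cdot\mid s^i,x^i)}[\pi^i(x^i,Y^i)+\delta E_{p_{S^i}}[V^i(B^i(\mu^i,s^i,x^i,Y^i),S^i)]]$ is continuous in $\mu^i$ by the continuity of $V^i$ together with (a)--(b). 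For fixed $(\mu^i,s^i)$, the maximizer over the finite set $\mathbb{X}^i$ fails to be unique only when $u^i(x^i;\cdot)+\xi^i(x^i)=u^i(\tilde{x}^i;\cdot)+\xi^i(\tilde{x}^i)$ for some $x^i\neq\tilde{x}^i$, a finite union of hyperplanes in $\xi^i$ and hence a $P_\xi$-null set by absolute continuity (the argument in the footnote to Theorem \ref{theo:Stability_implies_equilibrium}); thus $\Phi^i$ is single-valued a.s.-$P_\xi$. At each $\xi^i$ giving a strict maximizer, continuity of $u^i$ in $\mu^i$ preserves the strict inequalities under small perturbations, so the unique maximizer is locally constant and therefore continuous in $\mu^i$, a.s.-$P_\xi$.

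The main obstacle is the continuity preservation in the second paragraph: establishing weak continuity of the Bayesian operator $\mu^i\mapsto B^i(\mu^i,s^i,x^i,y^i)$ and pushing it through both the composition with $W$ and the integral over the possibly unbounded perturbation $\xi^i$. Handling this cleanly is exactly where the finite-moment assumption on $P_\xi$ (to produce the integrable dominating envelope) and Assumption 1 (to keep the Bayesian denominator positive on the relevant beliefs, and to guarantee $Q^i_{\theta^i}$ is bounded and continuous) do the real work.
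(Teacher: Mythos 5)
Your proof is correct: it is precisely the standard contraction-mapping/Blackwell argument (plus the a.s.-uniqueness of the argmax under an absolutely continuous perturbation) that the paper explicitly omits as "standard," so there is nothing to compare beyond noting that you have filled in the omitted details accurately, including the one genuinely delicate step of pushing weak continuity through the Bayesian operator when the predictive probability of some $y^i$ vanishes.
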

\bigskip{}

Because players believe they face a stationary environment with i.i.d.
perturbations, it is without loss of generality to restrict behavior
to depend on the state of the recursive problem. Optimality of a policy
is defined as usual (with the requirement that \emph{$\phi_{t}^{i}\in\Phi^{i}$}
$\forall t$).

Lemma \ref{lem:Berk} implies that the \emph{support} of posteriors
converges, but posteriors need not converge. We can always find, however,
a subsequence of posteriors that converges. By continuity of dynamic
behavior in beliefs, the stable strategy profile is dynamically optimal
(in the sense of solving the dynamic optimization problem) given this
convergent posterior. For weakly identified games, the convergent
posterior is a fixed point of the Bayesian operator. Thus, the players'
limiting strategies will provide no new information. Since the value
of experimentation is nonnegative, it follows that the stable strategy
profile must also be myopically optimal (in the sense of solving the
optimization problem that ignores the future), which is the definition
of optimality used in the definition of Berk-Nash equilibrium. Thus,
we obtain the following characterization of the set of stable strategy
profiles when players follow optimal policies.

\bigskip{}

\begin{thm}
\label{theo:Stability_implies_equilibrium-1}Suppose that a strategy
profile $\sigma$ is stable under an optimal policy profile for a
perturbed and weakly identified game. Then $\sigma$ is a Berk-Nash
equilibrium of the game.
\end{thm}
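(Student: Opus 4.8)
The plan is to follow the architecture of the proof of Theorem~\ref{theo:Stability_implies_equilibrium}, upgrading the optimality argument from the myopic correspondence $\Psi^i$ to the dynamic correspondence $\Phi^i$, and then exploiting weak identification to collapse dynamic optimality back into myopic optimality in the limit. First I would invoke Lemma~\ref{lem:Berk}: since $\sigma$ is stable under the optimal policy $\phi$, there is a set $\mathcal{H}$ with $\mathbf{P}^{\mu_0,\phi}(\mathcal{H})>0$ on which $\sigma_t(h)\to\sigma$ and $\mu_t^i(U^i)\to 1$ for every open $U^i\supseteq\Theta^i(\sigma)$. Fixing $h\in\mathcal{H}$ and using compactness of $\Delta(\Theta^i)$, I would extract a subsequence $(\mu_{t(j)}^i)_j$ converging weakly to some $\mu_\infty^i$. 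Exactly as in part~(i) of Theorem~\ref{theo:Stability_implies_equilibrium}---using that $\Theta^i(\sigma)$ is closed by Lemma~\ref{lemma:Theta-1} together with the portmanteau inequality for weak convergence---I would conclude $\mu_\infty^i\in\Delta(\Theta^i(\sigma))$. Then, replacing $\Psi^i$ by $\Phi^i$ and invoking the continuity and a.s.-$P_{\xi^i}$ single-valuedness of $\Phi^i$ from Lemma~\ref{lemma:bellman_beliefs+pert}, the same limiting computation as in part~(ii) yields $\sigma^i(x^i\mid s^i)=P_\xi(\xi^i:\phi^i(\mu_\infty^i,s^i,\xi^i)=x^i)$, i.e., $\sigma^i$ is \emph{dynamically} optimal given $\mu_\infty^i$.

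The substantive step, and the main obstacle, is to show that dynamic optimality at $\mu_\infty^i$ forces \emph{myopic} optimality at $\mu_\infty^i$, which is what Definition~\ref{def:equilibrium-1} requires. The key observation is that the equilibrium belief generates no informative feedback on the equilibrium path. Since $\mu_\infty^i$ is supported on $\Theta^i(\sigma)$ and the game is weakly identified given $\sigma$ (Definition~\ref{def:identifiable}), all $\theta^i\in\Theta^i(\sigma)$ induce the same conditional law $Q_{\theta^i}^i(\cdot\mid s^i,x^i)$ at every on-path pair (those with $\sigma^i(x^i\mid s^i)>0$); equality of distributions holds for \emph{every} $y^i$, not merely feasible ones. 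Hence for on-path $(s^i,x^i)$ the likelihood $\theta^i\mapsto Q_{\theta^i}^i(y^i\mid s^i,x^i)$ is constant over the support of $\mu_\infty^i$, so the Bayesian operator satisfies $B^i(\mu_\infty^i,s^i,x^i,y^i)=\mu_\infty^i$: the posterior does not move. I would also note that, because $\phi^i$ is single-valued a.s.-$P_{\xi^i}$ and $\sigma^i(x^i\mid s^i)=P_\xi(\phi^i(\mu_\infty^i,s^i,\xi^i)=x^i)$, the action $\phi^i$ selects is a.s. an on-path action, so the no-updating property applies to the chosen action.

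With this in hand I would run the comparison inside the Bellman equation~(\ref{eq:Bellman}). Write $W^i(\mu)\equiv E_{p_{S^i}}[V^i(\mu,S^i)]$ and $u^i(x^i)\equiv E_{\bar{Q}_{\mu_\infty^i}(\cdot\mid s^i,x^i)}[\pi^i(x^i,Y^i)]+\xi^i(x^i)$. For the on-path action $x^i$ selected by $\phi^i$, no-updating makes its continuation term exactly $\delta W^i(\mu_\infty^i)$. For an arbitrary alternative $\bar{x}^i$, the expected posterior equals the prior (beliefs are a martingale under the subjective predictive law $\bar{Q}_{\mu_\infty^i}$), and $W^i$ is convex in $\mu$ because $V^i$ is a supremum over history-dependent policies of payoffs that are affine in the belief under the stationary subjective environment; Jensen's inequality then gives $E_{\bar{Q}_{\mu_\infty^i}(\cdot\mid s^i,\bar{x}^i)}[W^i(\hat{\mu}^i)]\ge W^i(\mu_\infty^i)$, the nonnegativity of the value of experimentation. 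Dynamic optimality of $x^i$ therefore yields $u^i(x^i)+\delta W^i(\mu_\infty^i)\ge u^i(\bar{x}^i)+\delta E[W^i(\hat{\mu}^i)]\ge u^i(\bar{x}^i)+\delta W^i(\mu_\infty^i)$, and cancelling $\delta W^i(\mu_\infty^i)$ leaves $u^i(x^i)\ge u^i(\bar{x}^i)$ for all $\bar{x}^i$, so $x^i\in\Psi^i(\mu_\infty^i,s^i,\xi^i)$. Thus $\sigma^i$ is myopically optimal given $\mu_\infty^i$, and together with $\mu_\infty^i\in\Delta(\Theta^i(\sigma))$ both conditions of Definition~\ref{def:equilibrium-1} hold, establishing that $\sigma$ is a Berk-Nash equilibrium of the perturbed game. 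The only delicate points are the convexity of $V^i$ in $\mu$ (the value-of-information step) and the measure-zero tie set where $\Phi^i$ is multivalued, both routine given Lemma~\ref{lemma:bellman_beliefs+pert} and absolute continuity of $P_{\xi^i}$.
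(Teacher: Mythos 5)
Your proposal is correct and follows essentially the same route as the paper's own proof: the first part replicates the argument of Theorem \ref{theo:Stability_implies_equilibrium} with $\Phi^{i}$ in place of $\Psi^{i}$, and the second part uses weak identification plus $\mu_{\infty}^{i}\in\Delta(\Theta^{i}(\sigma))$ to show $B^{i}(\mu_{\infty}^{i},s^{i},x^{i},y^{i})=\mu_{\infty}^{i}$ on path, then the same chain of inequalities (definition of $\Phi^{i}$, convexity of $V^{i}$ with Jensen's inequality, and the martingale property of Bayesian beliefs) to cancel the continuation values and reduce dynamic to myopic optimality. This is exactly the structure of the argument in Online Appendix \ref{sec:Non-myopic}.
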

\begin{proof}
The first part of the proof is identical to the proof of Theorem \ref{theo:Stability_implies_equilibrium}.
Here, we prove that, given that $\lim_{j\rightarrow\infty}\sigma_{t(j)}=\sigma$
and $\lim_{j\rightarrow\infty}\mu_{t(j)}^{i}=\mu_{\infty}^{i}\in\Delta(\Theta^{i}(\sigma))$
$\forall i$, then, $\forall i$, $\sigma^{i}$ is optimal\emph{ }for
the perturbed game given\emph{ $\mu_{\infty}^{i}\in\Delta(\Theta^{i})$,
i.e.,} $\forall(s^{i},x^{i})$, 
\begin{equation}
\sigma^{i}(x^{i}\mid s^{i})=P_{\xi}\left(\xi^{i}:\psi^{i}(\mu_{\infty}^{i},s^{i},\xi^{i})=\{x^{i}\}\right),\label{eq:pf(i)optimal-1}
\end{equation}
where $\psi^{i}(\mu_{\infty}^{i},s^{i},\xi^{i})\equiv\arg\max_{x^{i}\in\mathbb{X}^{i}}E_{\bar{Q}_{\mu_{\infty}^{i}}^{i}(\cdot\mid s^{i},x^{i})}\left[\pi^{i}(x^{i},Y^{i})\right]+\xi^{i}(x^{i})$.

To establish (\ref{eq:pf(i)optimal-1}), fix $i\in I$ and $s^{i}\in\mathbb{S}^{i}$.
Then 
\begin{align*}
\lim_{j\rightarrow\infty}\sigma_{t(j)}^{i}(h)(x^{i}|s^{i}) & =\lim_{j\rightarrow\infty}P_{\xi}\left(\xi^{i}:\phi_{t(j)}^{i}(\mu_{t(j)}^{i},s^{i},\xi^{i})=x^{i}\right)\\
 & =P_{\xi}\left(\xi^{i}:\Phi^{i}(\mu_{\infty}^{i},s^{i},\xi^{i})=\{x^{i}\}\right),
\end{align*}
where the second line follows by optimality of $\phi^{i}$ and Lemma
\ref{lemma:bellman_beliefs+pert}. This implies that $\sigma^{i}(x^{i}|s^{i})=P_{\xi}\left(\xi^{i}:\Phi^{i}(\mu_{\infty}^{i},s^{i},\xi^{i})=\{x^{i}\}\right)$.
Thus, it remains to show that 
\begin{equation}
P_{\xi}\left(\xi^{i}:\Phi^{i}(\mu_{\infty}^{i},s^{i},\xi^{i})=\{x^{i}\}\right)=P_{\xi}\left(\xi^{i}:\psi^{i}(\mu_{\infty}^{i},s^{i},\xi^{i})=\{x^{i}\}\right)\label{eq:myopic=00003Ddynamic-1}
\end{equation}
$\forall x^{i}$ such that $P_{\xi}\left(\xi^{i}:\Phi^{i}(\mu_{\infty}^{i},s^{i},\xi^{i})=\{x^{i}\}\right)>0$.
From now on, fix any such $x^{i}$. Since $\sigma^{i}(x^{i}\mid s^{i})>0$,
the assumption that the game is weakly identified implies that $Q_{\theta_{1}^{i}}^{i}(\cdot\mid x^{i},s^{i})=Q_{\theta_{2}^{i}}^{i}(\cdot\mid x^{i},s^{i})$
$\forall\theta_{1}^{i},\theta_{2}^{i}\in\Theta(\sigma)$. The fact
that $\mu_{\infty}^{i}\in\Delta(\Theta^{i}(\sigma))$ then implies
that 
\begin{equation}
B^{i}(\mu_{\infty}^{i},s^{i},x^{i},y^{i})=\mu_{\infty}^{i}\label{eq:no_updating-1}
\end{equation}
 $\forall y^{i}\in\mathbb{Y}^{i}$. Thus, $\Phi^{i}(\mu_{\infty}^{i},s^{i},\xi^{i})=\{x^{i}\}$
is equivalent to
\begin{align*}
E_{\bar{Q}_{\mu_{\infty}^{i}}(\cdot|s^{i},x^{i})} & \Bigl[\pi^{i}(x^{i},Y^{i})+\xi^{i}(x^{i})+\delta E_{p_{S^{i}}}\left[V^{i}(\mu_{\infty}^{i},S^{i})\right]\Bigr]\\
>\,\, & E_{\bar{Q}_{\mu_{\infty}^{i}}(\cdot|s^{i},\tilde{x}^{i})}\left[\pi^{i}(\tilde{x}^{i},Y^{i})+\xi^{i}(\tilde{x}^{i})+\delta E_{p_{S^{i}}}\left[V^{i}(B^{i}(\mu_{\infty}^{i},s^{i},\tilde{x}^{i},Y^{i}),S^{i})\right]\right]\\
\geq\,\, & E_{\bar{Q}_{\mu_{\infty}^{i}}(\cdot|s^{i},\tilde{x}^{i})}\left[\pi^{i}(\tilde{x}^{i},Y^{i})+\xi^{i}(\tilde{x}^{i})\right]+\delta E_{p_{S^{i}}}\left[V^{i}(E_{\bar{Q}_{\mu_{\infty}^{i}}(\cdot|s^{i},\tilde{x}^{i})}\left[B^{i}(\mu_{\infty}^{i},s^{i},\tilde{x}^{i},Y^{i})\right],S^{i})\right]\\
=\,\, & E_{\bar{Q}_{\mu_{\infty}^{i}}(\cdot|s^{i},\tilde{x}^{i})}\left[\pi^{i}(\tilde{x}^{i},Y^{i})+\xi^{i}(\tilde{x}^{i})\right]+\delta E_{p_{S^{i}}}\left[V^{i}(\mu_{\infty}^{i},S^{i})\right]
\end{align*}
$\forall\tilde{x}^{i}\in\mathbb{X}^{i}$, where the first line follows
by equation (\ref{eq:no_updating-1}) and definition of $\Phi^{i}$,
the second line follows by the convexity\footnote{See, for example, \citet{nyarko1994convexity}, for a proof of convexity
of the value function.} of $V^{i}$ as a function of $\mu^{i}$ and Jensen's inequality,
and the last line by the fact that Bayesian beliefs have the martingale
property. In turn, the above expression is equivalent to $\psi(\mu_{\infty}^{i},s^{i},\xi^{i})=\{x^{i}\}$.
\end{proof}

\section{\label{sec:Population-models}Population models}

We discuss some variants of population models that differ in the matching
technology and feedback. The right variant of population model will
depend on the specific application.\footnote{In some cases, it may be unrealistic to assume that players are able
to observe the private signals of previous generations, so some of
these models might be better suited to cases with public, but not
private, information.}

\textbf{\emph{$\textsc{Single pair model}$}}. Each period a single
pair of players is randomly selected from each of the $i$ populations
to play the game. At the end of the period, the signals, actions,
and outcomes of their own population are revealed to everyone.\footnote{Alternatively, we can think of different incarnations of players born
every period who are able to observe the history of previous generations.} Steady-state behavior in this case corresponds exactly to the notion
of Berk-Nash equilibrium described in the paper.

$\textsc{Random matching model}.$ Each period, all players are randomly
matched and observe only feedback from their own match. We now modify
the definition of Berk-Nash equilibrium to account for this random-matching
setting. The idea is similar to Fudenberg and Levine's (1993) definition
of a heterogeneous self-confirming equilibrium. Now each agent in
population $i$ can have different experiences and, hence, have different
beliefs and play different strategies in steady state.

For all $i\in I$, define 
\[
BR^{i}(\sigma^{-i})=\left\{ \sigma^{i}:\sigma^{i}\,\,\mbox{is optimal given}\,\,\mu^{i}\in\Delta\left(\Theta^{i}(\sigma^{i},\sigma^{-i})\right)\right\} .
\]
Note that $\sigma$ is a Berk-Nash equilibrium if and only if $\sigma^{i}\in BR^{i}(\sigma^{-i})$
$\forall i\in I$.\bigskip{}

\begin{defn}
A strategy profile $\sigma$ is a \textbf{heterogeneous Berk-Nash
equilibrium} of game\emph{ $\mathcal{G}$} if, for all $i\in I$,
$\sigma^{i}$ is in the convex hull of $BR^{i}(\sigma^{-i})$.
\end{defn}
\bigskip{}

Intuitively, a heterogenous equilibrium strategy $\sigma^{i}$ is
the result of convex combinations of strategies that belong to $BR^{i}(\sigma^{-i})$;
the idea is that each of these strategies is followed by a segment
of the population $i$.\footnote{Unlike the case of heterogeneous self-confirming equilibrium, a definition
where each action in the support of $\sigma$ is supported by a (possibly
different) belief would not be appropriate here. The reason is that
$BR^{i}(\sigma^{-i})$ might contain only mixed, but not pure strategies
(e.g., Example 1). } 

$\textsc{Random-matching model with population feedback}.$ Each period
all players are randomly matched; at the end of the period, each player
in population $i$ observes the signals, actions, and outcomes of
their own population. Define

\[
\bar{BR}^{i}(\sigma^{i},\sigma^{-i})=\left\{ \hat{\sigma}^{i}:\hat{\sigma}^{i}\,\,\mbox{is optimal given}\,\,\mu^{i}\in\Delta\left(\Theta^{i}(\sigma^{i},\sigma^{-i})\right)\right\} .
\]

\bigskip{}

\begin{defn}
A strategy profile $\sigma$ is a \textbf{heterogeneous Berk-Nash
equilibrium with population feedback} of game\emph{ $\mathcal{G}$}
if, for all $i\in I$, $\sigma^{i}$ is in the convex hull of $\bar{BR}^{i}(\sigma^{i},\sigma^{-i})$.
\end{defn}
\bigskip{}

The main difference when players receive population feedback is that
their beliefs no longer depend on their own strategies but rather
on the aggregate population strategies.

\subsection{Equilibrium foundation}

Using arguments similar to the ones in the text, it is now straightforward
to conclude that the definition of heterogenous Berk-Nash equilibrium
captures the steady state of a learning environment with a population
of agents in the role of each player. To see the idea, let each population
$i$ be composed of a continuum of agents in the unit interval $K\equiv[0,1]$.
A strategy of agent $ik$ (meaning agent $k\in K$ from population
$i$) is denoted by $\sigma^{ik}$. The aggregate strategy of population
(i.e., player) $i$ is $\sigma^{i}=\int_{K}\sigma^{ik}dk$.

$\textsc{Random matching model}$. Suppose that each agent is optimizing
and that, for all $i$, $(\sigma_{t}^{ik})$ converges to $\sigma^{ik}$
a.s. in $K$, so that individual behavior stabilizes.\footnote{We need individual behavior to stabilize; it is not enough that it
stabilizes in the aggregate. This is natural, for example, if we believe
that agents whose behavior is unstable will eventually realize they
have a misspecified model.} Then Lemma \ref{lem:Berk} says that the support of beliefs must
eventually be $\Theta^{i}(\sigma^{ik},\sigma^{-i})$ for agent $ik$.
Next, for each $ik$, take a convergent subsequence of beliefs $\mu_{t}^{ik}$
and denote it $\mu_{\infty}^{ik}$. It follows that $\mu_{\infty}^{ik}\in\Delta(\Theta^{i}(\sigma^{ik},\sigma^{-i}))$
and, by continuity of behavior in beliefs, $\sigma^{ik}$ is optimal
given $\mu_{\infty}^{ik}$. In particular, $\sigma^{ik}\in BR^{i}(\sigma^{-i})$
for all $ik$ and, since $\sigma^{i}=\int_{K}\sigma^{ik}dk$, it follows
that $\sigma^{i}$ is in the convex hull of $BR^{i}(\sigma^{-i})$.

$\textsc{Random matching model with population feedback}$. Suppose
that each agent is optimizing and that, for all $i$, $\sigma_{t}^{i}=\int_{K}\sigma_{t}^{ik}dk$
converges to $\sigma^{i}$. Then Lemma \ref{lem:Berk} says that the
support of beliefs must eventually be $\Theta^{i}(\sigma^{i},\sigma^{-i})$
for any agent in population $i$. Next, for each $ik$, take a convergent
subsequence of beliefs $\mu_{t}^{ik}$ and denote it $\mu_{\infty}^{ik}$.
It follows that $\mu_{\infty}^{ik}\in\Delta(\Theta^{i}(\sigma^{i},\sigma^{-i}))$
and, by continuity of behavior in beliefs, $\sigma^{ik}$ is optimal
given $\mu_{\infty}^{ik}$. In particular, $\sigma^{ik}\in\bar{BR^{i}}(\sigma^{-i})$
for all $i,k$ and, since $\sigma^{i}=\int_{K}\sigma^{ik}dk$, it
follows that $\sigma^{i}$ is in the convex hull of $\bar{BR^{i}}(\sigma^{-i})$.

\section{\label{sec:Lack-of-payoff}Lack of payoff feedback}

In the paper, players are assumed to observe their own payoffs. We
now provide two alternatives to relax this assumption. In the first
alternative, players observe no feedback about payoffs; in the second
alternative, players may observe partial feedback.

\emph{No payoff feedback}. In the paper we had a single, deterministic
payoff function $\pi^{i}:\mathbb{X}^{i}\times\mathbb{Y}^{i}\rightarrow\mathbb{R}$,
which can be represented in vector form as an element $\pi^{i}\in\mathbb{R}^{\#(\mathbb{X}^{i}\times\mathbb{Y}^{i})}$.
We now generalize it to allow for uncertain payoffs. Player $i$ is
endowed with a probability distribution $P_{\pi^{i}}\in\Delta(\mathbb{R}^{\#(\mathbb{X}^{i}\times\mathbb{Y}^{i})})$
over the possible payoff functions. In particular, the random variable
$\pi^{i}$ is independent of $Y^{i}$, and so there is nothing new
to learn about payoffs from observing consequences. With random payoff
functions, the results extend provided that optimality is defined
as follows: A strategy $\sigma^{i}$ for player $i$ is \textbf{optimal}
given\emph{ $\mu^{i}\in\Delta(\Theta^{i})$} if $\sigma^{i}(x^{i}\mid s^{i})>0$
implies that 
\[
x^{i}\in\arg\max_{\bar{x}^{i}\in\mathbb{X}^{i}}E_{P_{\pi^{i}}}E_{\bar{Q}_{\mu^{i}}^{i}(\cdot\mid s^{i},\bar{x}^{i})}\left[\pi^{i}(\bar{x}^{i},Y^{i})\right].
\]
Note that by interchanging the order of integration, this notion of
optimality is equivalent to the notion in the paper where the deterministic
payoff function is given by $E_{P_{\pi^{i}}}\pi^{i}(\cdot,\cdot)$.

\emph{Partial payoff feedback}. Suppose that player $i$ knows her
own consequence function $f^{i}:\mathbb{X}\times\Omega\rightarrow\mathbb{Y}^{i}$
and that her payoff function is now given by $\pi^{i}:\mathbb{X}\times\Omega\rightarrow\mathbb{R}$.
In particular, player $i$ may not observe her own payoff, but observing
a consequence may provide partial information about $(x^{-i},\omega)$
and, therefore, about payoffs. Unlike the case in the text where payoffs
are observed, a belief \emph{$\mu^{i}\in\Delta(\Theta^{i})$ }may
not uniquely determine expected payoffs. The reason is that the distribution
over consequences implied by $\mu^{i}$ may be consistent with several
distributions over $\mathbb{X}^{-i}\times\Omega$; i.e., the distribution
over $\mathbb{X}^{-i}\times\Omega$ is only partially identified.
Define the set ${\cal M}_{\mu^{i}}\subseteq\Delta(\mathbb{X}^{-i}\times\Omega)^{\mathbb{S}^{i}\times\mathbb{X}^{i}}$
to be the set of conditional distributions over $\mathbb{X}^{-i}\times\Omega$
given $(s^{i},x^{i})\in\mathbb{S}^{i}\times\mathbb{X}^{i}$ that are
consistent with belief $\mu^{i}\in\Delta(\Theta^{i})$, i.e., $m\in{\cal M}_{\mu^{i}}$
if and only if $\bar{Q}_{\mu^{i}}^{i}(y^{i}\mid s^{i},x^{i})=m\left(f^{i}(x^{i},X^{-i},W)=y^{i}\mid s^{i},x^{i}\right)$
for all $(s^{i},x^{i})\in\mathbb{S}^{i}\times\mathbb{X}^{i}$ and
$y^{i}\in\mathbb{Y}^{i}$. Then optimality should be defined as follows:
A strategy $\sigma^{i}$ for player $i$ is \textbf{optimal} given\emph{
$\mu^{i}\in\Delta(\Theta^{i})$} if there exists $m_{\mu^{i}}\in{\cal M}_{\mu^{i}}$
such that $\sigma^{i}(x^{i}\mid s^{i})>0$ implies that 
\[
x^{i}\in\arg\max_{\bar{x}^{i}\in\mathbb{X}^{i}}E_{m_{\mu^{i}}(\cdot\mid s^{i},\bar{x}^{i})}\left[\pi^{i}(\bar{x}^{i},X^{-i},W)\right].
\]

Finally, the definition of identification would also need to be changed
to require not only that there is a unique distribution over consequences
that matches the observed data, but also that this unique distribution
implies a unique expected utility function.

\section{\label{sec:Global-stability}Global stability: Example 2.1 (monopoly
with unknown demand).}

Theorem \ref{Theo:converse-1} says that all Berk-Nash equilibria
can be approached with probability 1 provided we allow for vanishing
optimization mistakes. In this appendix, we illustrate how to use
the techniques of stochastic approximation theory to establish stability
of equilibria under the assumption that players make no optimization
mistakes. We present the explicit learning dynamics for the monopolist
with unknown demand, Example 2.1, and show that the unique equilibrium
in this example is globally stable. The intuition behind global stability
is that switching from the equilibrium strategy to a strategy that
puts more weight on a price of 2 changes beliefs in a way that makes
the monopoly want to put less weight on a price of 2, and similarly
for a deviation to a price of 10.

We first construct a perturbed version of the game. Then we show that
the learning problem is characterized by a nonlinear stochastic system
of difference equations and employ stochastic approximation methods
for studying the asymptotic behavior of such system. Finally, we take
the payoff perturbations to zero.

In order to simplify the exposition and thus better illustrate the
mechanism driving the dynamics, we modify the subjective model slightly.
We assume the monopolist only learns about the parameter $b\in\mathbb{R}$;
i.e., her beliefs about parameter $a$ are degenerate at a point $a=40\ne a^{0}$
and thus are never updated. Therefore, beliefs $\mu$ are probability
distributions over $\mathbb{R}$, i.e., $\mu\in\Delta(\mathbb{R})$. 

$\textsc{Perturbed Game}.$ Let $\xi$ be a real-valued random variable
distributed according to $P_{\xi}$; we use $F$ to denote the associated
cdf and $f$ the pdf. The perturbed payoffs are given by $yx-\xi1\{x=10\}$.
Thus, given beliefs $\mu\in\Delta(\mathbb{R})$, the probability of
optimally playing $x=10$ is 
\[
\sigma(\mu)=F\left(8a-96E_{\mu}[B]\right).
\]
Note that the only aspect of $\mu$ that matters for the decision
of the monopolist is $E_{\mu}[B]$. Thus, letting $m=E_{\mu}[B]$
and slightly abusing notation, we use $\sigma(\mu)=\sigma(m)$ as
the optimal strategy.

$\textsc{Bayesian Updating}.$ We now derive the Bayesian updating
procedure. We assume that the the prior $\mu_{0}$ is given by a Gaussian
distribution with mean and variance $m_{0},\tau_{0}^{2}$.\footnote{This choice of prior is standard in Gaussian settings like ours. As
shown below this choice simplifies the exposition considerably. } It is possible to show that, given a realization $(y,x)$ and a prior
$N(m,\tau^{2})$, the posterior is also Gaussian and the mean and
variance evolve as follows: $m_{t+1}=m_{t}+\left(\frac{-(Y_{t+1}-a)}{X_{t+1}}-m_{t}\right)\left(\frac{X_{t+1}^{2}}{X_{t+1}^{2}+\tau_{t}^{-2}}\right)$
and $\tau_{t+1}^{2}=\frac{1}{(X_{t+1}^{2}+\tau_{t}^{-2})}$. 

$\textsc{Nonlinear Stochastic Difference Equations and Stochastic Approximation.}$
For simplicity, let $r_{t+1}\equiv\frac{1}{t+1}\left(\tau_{t}^{-2}+X_{t+1}^{2}\right)$
and note that the previous nonlinear system of stochastic difference
equations can be written as
\begin{align*}
m_{t+1} & =m_{t}+\frac{1}{t+1}\frac{X_{t+1}^{2}}{r_{t+1}}\left(\frac{-(Y_{t+1}-a)}{X_{t+1}}-m_{t}\right)\\
r_{t+1} & =r_{t}+\frac{1}{t+1}\left(X_{t+1}^{2}-r_{t}\right).
\end{align*}
 Let $\beta_{t}=(m_{t},r_{t})'$, $Z_{t}=(X_{t},Y_{t})$, 
\[
G(\beta_{t},z_{t+1})=\left[\begin{array}{c}
\frac{x_{t+1}^{2}}{r_{t+1}}\left(\frac{-(y_{t+1}-a)}{x_{t+1}}-m_{t}\right)\\
\left(x_{t+1}^{2}-r_{t}\right)
\end{array}\right]
\]
and 
\begin{align*}
\mathbb{G}(\beta) & =\left[\begin{array}{c}
\mathbb{G}_{1}(\beta)\\
\mathbb{G}_{2}(\beta)
\end{array}\right]=E_{P_{\sigma}}\left[G(\beta,Z_{t+1})\right]\\
 & =\left[\begin{array}{c}
F(8a-96m)\frac{100}{r}\left(\frac{-(a_{0}-a-b_{0}10)}{10}-m\right)+\left(1-F(8a-96m)\right)\frac{4}{r}\left(\frac{-(a_{0}-a-b_{0}2)}{2}-m\right)\\
\left(4+F(8a-96m)96-r\right)
\end{array}\right]
\end{align*}
where $P_{\sigma}$ is the probability over $Z$ induced by $\sigma$
(and $y=a^{0}-b^{0}x+\omega$). Therefore, the dynamical system can
be cast as 
\[
\beta_{t+1}=\beta_{t}+\frac{1}{t+1}\mathbb{G}(\beta_{t})+\frac{1}{t+1}V_{t+1}
\]
 with $V_{t+1}=G(\beta_{t},Z_{t+1})-\mathbb{G}(\beta_{t}).$ Stochastic
approximation theory (e.g., \citet{KushnerYin2003}) implies, roughly
speaking, that in order to study the asymptotic behavior of $(\beta_{t})_{t}$
it is enough to study the behavior of the orbits of the following
ODE 
\[
\dot{\beta}(t)=\mathbb{G}(\beta(t)).
\]

$\textsc{Characterization of the Steady States.}$ In order to find
the steady states of $(\beta_{t})_{t}$, it is enough to find $\beta^{\ast}$
such that $\mathbb{G}(\beta^{\ast})=0$. Let $H_{1}(m)\equiv F(8a-96m)10$

$\left(-(a_{0}-a)+\left(b_{0}-m\right)10\right)+\left(1-F(8a-96m)\right)2\left(-(a_{0}-a)+\left(b_{0}-m\right)2\right)$.
Observe that $\mathbb{G}_{1}(\beta)=r^{-1}H_{1}(m)$ and that $H_{1}$
is continuous and $\lim_{m\rightarrow-\infty}H_{1}(m)=\infty$ and
$\lim_{m\rightarrow\infty}H_{1}(m)=-\infty$. Thus, there exists at
least one solution $H_{1}(m)=0$. Therefore, there exists at least
one $\beta^{\ast}$ such that $\mathbb{G}(\beta^{\ast})=0$. 

Let $\bar{b}=b_{0}-\frac{a_{0}-a}{10}=4-\frac{1}{5}=\frac{19}{5}$
and $\underline{b}=b_{0}-\frac{a_{0}-a}{2}=4-\frac{42-40}{2}=3$,
$\bar{r}=4+F(8a-96\underline{b})96$ and $\underline{r}=4+F(8a-96\bar{b})96$,
and $\mathbb{B}\equiv[\underline{b},\bar{b}]\times[\underline{r},\bar{r}]$.
It follows that $H_{1}(m)<0$ $\forall m>\bar{b}$, and thus $m^{\ast}$
must be such that $m^{\ast}\leq\bar{b}$. It is also easy to see that
$m^{\ast}\geq\underline{b}$. Moreover, $\frac{dH_{1}(m)}{dm}=96f(8a-96m)\bigl(8(a_{0}-a)-\bigl(b_{0}-m\bigr)96\bigr)-4-96F(8a-96m).$
Thus, for any $m\leq\bar{b}$ , $\frac{dH_{1}(m)}{dm}<0$, because
$m\leq\bar{b}$ implies $8(a_{0}-a)\leq(b_{0}-m)80<(b_{0}-m)96$. 

Therefore, on the relevant domain $m\in[\underline{b},\bar{b}]$,
$H_{1}$ is decreasing, thus implying that there exists only one $m^{\ast}$
such that $H_{1}(m^{\ast})=0$. Therefore, there exists only one $\beta^{\ast}$
such that $\mathbb{G}(\beta^{\ast})=0$ .

We are now interested in characterizing the limit of $\beta^{\ast}$
as the perturbation vanishes, i.e. as $F$ converges to $1\{\xi\geq0\}$.
To do this we introduce some notation. We consider a sequence $(F_{n})_{n}$
that converges to $1\{\xi\geq0\}$ and use $\beta_{n}^{\ast}$ to
denote the steady state associated to $F_{n}$. Finally, we use $H_{1}^{n}$
to denote the $H_{1}$ associated to $F_{n}$. 

We proceed as follows. First note that since $\beta_{n}^{\ast}\in\mathbb{B}$
$\forall n$, the limit exists (going to a subsequence if needed).
We show that $m^{*}\equiv\lim_{n\rightarrow\infty}m_{n}^{\ast}=\frac{8a}{96}=8\frac{40}{96}=\frac{10}{3}$.
Suppose not, in particular, suppose that $\lim_{n\rightarrow\infty}m_{n}^{\ast}<\frac{8a}{96}=\frac{10}{3}$
(the argument for the reverse inequality is analogous and thus omitted).
In this case $\lim_{n\rightarrow\infty}8a-96m_{n}^{\ast}>0$, and
thus $\lim_{n\rightarrow\infty}F_{n}(8a-96m_{n}^{\ast})=1$. Therefore
\[
\lim_{n\rightarrow\infty}H_{1}^{n}(\beta_{n}^{\ast})=10\left(-(a_{0}-a)+\left(b_{0}-m^{*}\right)10\right)\geq10\left(-2+\frac{2}{3}10\right)>0.
\]
But this implies that $\exists N$ such that $H_{1}^{n}(\beta_{n}^{\ast})>0$
$\forall n\geq N$ which is a contradiction. 

Moreover, define $\sigma_{n}^{*}=F_{n}(8a-96m_{n}^{\ast})$ and $\sigma^{*}=\lim_{n\rightarrow\infty}\sigma_{n}$.
Since $H_{1}^{n}(m_{n}^{*})=0$ $\forall n$ and $m^{*}=\frac{10}{3}$
, it follows that 
\[
\sigma^{\ast}=\frac{-2\left(-2+\left(4-\frac{10}{3}\right)2\right)}{10\left(-2+\left(4-\frac{10}{3}\right)10\right)-2\left(-2+\left(4-\frac{10}{3}\right)2\right)}=\frac{1}{36}.
\]

$\textsc{Global convergence to the Steady State}.$ In our example,
it is in fact possible to establish that behavior converges with probability
1 to the unique equilibrium. By the results in \citet{Benaim99} Section
6.3, it is sufficient to establish the $\textit{global}$ asymptotic
stability of $\beta_{n}^{\ast}$ for any $n$, i.e., the basin of
attraction of $\beta_{n}^{\ast}$ is all of $\mathbb{B}$.

In order to do this let $L(\beta)=\left(\beta-\beta_{n}^{\ast}\right)'P\left(\beta-\beta_{n}^{\ast}\right)$
for all $\beta$; where $P\in\mathbb{R}^{2\times2}$ is positive definite
and $\emph{diagonal}$ and will be determined later. Note that $L(\beta)=0$
iff $\beta=\beta_{n}^{\ast}$ . Also
\begin{align*}
\frac{dL(\beta(t))}{dt} & =\nabla L(\beta(t))'\mathbb{G}(\beta(t))\\
 & =2\left(\beta(t)-\beta_{n}^{\ast}\right)'P\left(\mathbb{G}(\beta(t))\right)\\
 & =2\left\{ \left(m(t)-m_{n}^{\ast}\right)P_{[11]}\mathbb{G}_{1}(\beta(t))+\left(r(t)-r_{n}^{\ast}\right)P_{[22]}\mathbb{G}_{2}(\beta(t))\right\} .
\end{align*}
Since $\mathbb{G}(\beta_{n}^{\ast})=0$, 
\begin{align*}
\frac{dL(\beta(t))}{dt}= & 2\left(\beta(t)-\beta_{n}^{\ast}\right)'P\left(\mathbb{G}(\beta(t))-\mathbb{G}\left(\beta_{n}^{\ast}\right)\right)\\
= & 2\left(m(t)-m_{n}^{\ast}\right)P_{[11]}\left(\mathbb{G}_{1}(\beta(t))-\mathbb{G}_{1}\left(\beta_{n}^{\ast}\right)\right)\\
 & +2\left(r(t)-r_{n}^{\ast}\right)P_{[22]}\left(\mathbb{G}_{2}(\beta(t))-\mathbb{G}_{2}\left(\beta_{n}^{\ast}\right)\right)\\
= & 2\left(m(t)-m_{n}^{\ast}\right)^{2}P_{[11]}\int_{0}^{1}\frac{\partial\mathbb{G}_{1}(m_{n}^{\ast}+s(m(t)-m_{n}^{\ast}),r_{n}^{*})}{\partial m}ds\\
 & +2\left(r(t)-r_{n}^{\ast}\right)^{2}P_{[22]}\int_{0}^{1}\frac{\partial\mathbb{G}_{2}(m_{n}^{*},r_{n}^{\ast}+s(r(t)-r_{n}^{\ast}))}{\partial r}ds
\end{align*}
where the last equality holds by the mean value theorem. Note that
$\frac{d\mathbb{G}_{2}(m_{n}^{*},r_{n}^{\ast}+s(r(t)-r_{n}^{\ast}))}{dr}=-1$
and $\int_{0}^{1}\frac{d\mathbb{G}_{1}(m_{n}^{\ast}+s(m(t)-m_{n}^{\ast}),r_{n}^{*})}{dm}ds=\int_{0}^{1}\left(r_{n}^{\ast}\right)^{-1}\frac{dH_{1}(m_{n}^{\ast}+s(m(t)-m_{n}^{\ast}))}{dm}ds$.
Since $r(t)>0$ and $r_{n}^{\ast}\geq0$ the first term is positive
and we already established that $\frac{dH_{1}(m)}{dm}<0$ $\forall m$
in the relevant domain. Thus, by choosing $P_{[11]}>0$ and $P_{[22]}>0$
it follows that $\frac{dL(\beta(t))}{dt}<0$.

Therefore, we show that $L$ satisfies the following properties: is
strictly positive $\forall\beta\ne\beta_{n}^{\ast}$ and $L(\beta_{n}^{\ast})=0$,
and $\frac{dL(\beta(t))}{dt}<0$. Thus, the function satisfies all
the conditions of a Lyapounov function and, therefore, $\beta_{n}^{\ast}$
is globally asymptotically stable $\forall n$ (see \citet{HirschSmaleDevaney04}
p. 194).
\end{document}